\theoremstyle{definition} 
\newtheorem{defin}{Definition}[section]
\newtheorem{thm}[defin]{Theorem}
\newtheorem{rem}[defin]{Remark}
\newtheorem{ex}[defin]{Example}
\newtheorem{cor}[defin]{Corollary}
\newtheorem{lemma}[defin]{Lemma}
\newtheorem{prop}[defin]{Proposition}
\renewcommand{\theHWexercise}{\theHWexercise$^*$}
\def\vfi{\varphi}
\def\theta{\vartheta}
\def\hil{{\mathcal H}}
\def\A{{\mathcal A}}
\def\B{{\mathcal B}}
\def\E{{\mathcal E}}
\def\M{\mathcal{M}}
\def\P{{\mathcal P}}
\def\S{{\mathcal S}}
\def\T{{\mathcal T}}
\def\X{{\mathcal X}}
\def\half{\frac{1}{2}}
\def\iff{\Longleftrightarrow}
\def\imp{\Longrightarrow}
\def\bN{\mathbb{N}}
\def\bC{\mathbb{C}}
\def\bP{\mathbb{P}}
\def\bR{\mathbb{R}}
\def\bT{\mathbb{T}}
\def\bz{\left(}
\def\jz{\right)}
\def\inv{^{-1}}
\def\egy{\mathbf 1}
\def\rho{\varrho}
\def\povm{\mathrm{POVM}}
\def\pvm{\mathrm{PVM}}
\def\nn{\nonumber}
\def\nw{^{*}}
\def\meas{\mathrm{meas}}
\def\test{\mathrm{test}}
\def\scli{\underline{\mathrm{sc}}}
\def\oll{\overline}
\def\p{_{\ge 0}}
\def\valt{\cdot}
\newcommand{\ki}[1]{\textit{\textit{#1}}}
\newcommand{\s}{\mbox{ }}
\newcommand{\ds}{\mbox{ }\mbox{ }}
\newcommand{\inner}[2]{\left\langle #1 , #2\right\rangle}
\newcommand{\vecc}[1]{\underline{#1}}
\newcommand{\ket}[1]{\left|#1\right\rangle}
\newcommand{\diad}[2]{\left|#1\right\rangle\!\left\langle #2\right|}
\newcommand{\pr}[1]{\diad{#1}{#1}}
\newcommand{\vertleq}{\rotatebox{90}{$\,\ge$}}
\renewcommand{\p@enumii}{}
\DeclareMathOperator{\Tr}{Tr}
\DeclareMathOperator{\supp}{supp}
\DeclareMathOperator{\ran}{ran}
\DeclareMathOperator{\divv}{\Delta}
\begin{document}

\title{Test-measured R\'enyi divergences}

\author{Mil\'an Mosonyi}
\email{milan.mosonyi@gmail.com}

\affiliation{
MTA-BME Lend\"ulet Quantum Information Theory Research Group
}

\affiliation{
Department of Analysis, Institute of Mathematics,\\
 Budapest University of Technology and Economics, M\H uegyetem rkp.~3., H-1111 Budapest, Hungary
}

\author{Fumio Hiai}
\email{hiai.fumio@gmail.com}

\affiliation{
Graduate School of Information Sciences, Tohoku University, \\
Aoba-ku, Sendai 980-8579, Japan
}

\begin{abstract}
\centerline{\textbf{Abstract}}
\vspace{.3cm}

One possibility of defining a quantum R\'enyi $\alpha$-divergence of two quantum 
states is to optimize the classical R\'enyi $\alpha$-divergence 
of their post-measurement probability distributions over all possible measurements
(measured R\'enyi divergence), and maybe regularize these quantities over 
multiple copies of the two states (regularized measured R\'enyi $\alpha$-divergence).
A key observation behind the theorem for the strong converse exponent of 
asymptotic binary quantum state discrimination 
is that the regularized measured R\'enyi $\alpha$-divergence coincides with the sandwiched R\'enyi 
$\alpha$-divergence when $\alpha>1$. Moreover, it also follows from the same theorem that 
to achieve this, it is sufficient to consider $2$-outcome measurements 
(tests) for any 
number of copies (this is somewhat surprising, as achieving the measured R\'enyi 
$\alpha$-divergence
for $n$ copies might require a number of measurement outcomes that diverges in $n$, in general). 
In view of this, it seems natural to expect the same when $\alpha<1$; however, 
we show that this is not the case.
In fact, we show that even for commuting states (classical case) the regularized 
quantity attainable using $2$-outcome 
measurements is in general strictly smaller than the R\'enyi $\alpha$-divergence (which is unique in the 
classical case). In the general quantum case this shows that the above ``regularized test-measured''
R\'enyi $\alpha$-divergence is not even a quantum extension of the classical R\'enyi divergence when $\alpha<1$, in sharp contrast to the $\alpha>1$ case.
\end{abstract}

\maketitle

\section{Introduction}

It has been known for a long time in classical information theory that R\'enyi 
divergences and derived information quantities play a central role 
in quantifying the trade-off between the relevant operational quantities in many 
information theoretic problems, like source coding, channel coding, or state
discrimination (see, e.g., \cite{Csiszar}). Due to the non-commutativity of 
general quantum states, R\'enyi divergences can be extended to pairs of quantum states in 
infinitely many different ways; see, e.g., 
\cite{AD,BST,FawziFawzi2021,Hiai_fdiv_Springer,Jencova_NCLp,Jencova_NCLpII,Matsumoto_newfdiv,P86,Petz_QE_vN,PetzRuskai1998,Renyi_new,WWY} for various different extensions. Some of these extensions have similar operational roles as their classical counterpart
\cite{Aud,ANSzV,Hayashicq,HT14,HMO2,HiaiMosonyi2021,JOPS,MO,MO-cqconv,MO-cqconv-cc,Mosonyi_sc_2021,Nagaoka},
while others are interesting for their mathematical properties or as 
useful approximations to the operationally relevant quantities.

One natural way of defining a quantum R\'enyi $\alpha$-divergence of two quantum 
states $\rho$ and $\sigma$ is to optimize the classical R\'enyi $\alpha$-divergences 
of their post-measurement probability distributions over all possible measurements.
This leads to the notion of the 
\ki{measured R\'enyi $\alpha$-divergence} $D_{\alpha}^{\meas}(\rho\|\sigma)$,
which has nice mathematical properties, but no known direct operational interpretation
or closed-form expression. A variant of it with better properties is obtained by regularizing it 
over many copies of the states as 
$\oll{D}_{\alpha}^{\meas}(\rho\|\sigma):=\lim_{n\to+\infty}\frac{1}{n}D_{\alpha}^{\meas}(\rho^{\otimes n}\|\sigma^{\otimes n})$; this is called the 
\ki{regularized measured R\'enyi $\alpha$-divergence}.
Quite surprisingly, this admits a closed-form expression, as it turns out to be equal to the 
\ki{sandwiched R\'enyi $\alpha$-divergence} \cite{Renyi_new,WWY} for 
$\alpha\in[1/2,+\infty)$,
and a closed-form expression is also available for $\alpha\in(0,1/2)$. 
This was proved in the finite-dimensional case by asymptotic pinching
\cite{HT14,HP,MO}, and extended very recently 
to the infinite-dimensional case \cite{Mosonyi_sc_2021}, and more generally, to states
of nuclear $C^*$-algebras \cite{HiaiMosonyi2021}.
Moreover, this is one of the key observations behind the proof for the
expression of the strong converse exponent of
asymptotic binary quantum state discrimination in terms of the 
sandwiched R\'enyi $\alpha$-divergences with $\alpha>1$, given in 
\cite{MO,Mosonyi_sc_2021,HiaiMosonyi2021}.
It 
also follows from the strong converse theorem
that for $\alpha>1$, the regularized measured R\'enyi $\alpha$-divergence can be attained by 
considering only $2$-outcome measurements (tests) for each number of copies of the states. 
This is rather surprising when one takes into account that 
attaining $D_{\alpha}^{\meas}(\rho^{\otimes n}\|\sigma^{\otimes n})$ 
for $n$ copies of the states
requires in general 
a number of measurement outcomes that diverges in $n$
(this is true even for classical states, as can be seen easily by considering type decompositions).

This motivates the introduction of the 
\ki{(regularized) test-measured R\'enyi $\alpha$-divergences}
as the variants of the usual (regularized) measured R\'enyi $\alpha$-divergences
with only $2$-outcome measurements in their definitions. 
According to the above, 
the regularized measured and the regularized test-measured R\'enyi $\alpha$-divergences coincide
for $\alpha>1$, and hence it might seem reasonable to expect the same for $\alpha\in(0,1)$. 
Our main result in this paper is that this is not the case.
In fact, we show that even for commuting states (classical case) the regularized 
test-measured R\'enyi $\alpha$-divergence is in general 
strictly smaller than the R\'enyi $\alpha$-divergence (which is unique in the 
classical case) for every $\alpha\in(0,1)$. 
In the general quantum case this shows that the regularized test-measured
R\'enyi $\alpha$-divergence is not even a quantum extension of the classical R\'enyi divergence when $\alpha<1$, in contrast to the $\alpha>1$ case.

The structure of the paper is as follows. In Section \ref{sec:prelim}
we summarize the necessary preliminaries on R\'enyi divergences and the 
Hoeffding bound theorem of quantum state discrimination \cite{Hayashicq,Nagaoka}.
In Section \ref{sec:def} we introduce the test-measured R\'enyi $\alpha$-divergences, and 
two variants of their regularizations, and discuss some basic relations between 
these quantities and some previously studied R\'enyi divergences.
In Section \ref{sec:an1} we give an expression for one of the versions of the
regularized test-measured R\'enyi $\alpha$-divergence for $\alpha\in(0,1)$
in terms of the Hoeffding divergences, and 
use this to show that this version is strictly smaller than the standard (or Petz-type)
R\'enyi $\alpha$-divergence under very mild conditions on the states.
We use this result in Section \ref{sec:an2} to show that the other version 
of the regularized test-measured R\'enyi $\alpha$-divergence
is strictly smaller than the (unique)
R\'enyi $\alpha$-divergence for unequal commuting states with equal supports.
In particular, the results of Sections \ref{sec:an1} and \ref{sec:an2} together
yield that in the classical case (more precisely, for pairs of probability distributions on at least three points)
both versions of the 
regularized test-measured R\'enyi $\alpha$-divergence
are strictly smaller than the classical
R\'enyi $\alpha$-divergence for generic pairs of states and any $\alpha\in(0,1)$.
Moreover, we show that, somewhat surprisingly, the two different regularizations may give different values in the classical case.

In Appendix \ref{sec:SD} we 
explain a connection of our results in Section \ref{sec:an1}
with a very recent result by Salzmann and Datta on a variant of the quantum Hoeffding bound theorem 
\cite{Salzmann_Datta21}.
Finally, in Appendix \ref{sec:vN} we extend the main results in 
Section \ref{sec:an1} to the von Neumann algebra setting.

\section{Preliminaries}
\label{sec:prelim}

In the main body of the paper, $\hil$ will always denote a finite-dimensional Hilbert space.
We will use the notations $\B(\hil)$ for the set of linear operators on $\hil$, 
and $\B(\hil)\p$ for the set of positive semi-definite (PSD) operators.
Furthermore, $\S(\hil):=\{\rho\in\B(\hil)\p:\,\Tr\rho=1\}$ will denote the set of 
density operators, or states, 
$\bT(\hil):=\{T\in\B(\hil):\,0\le T\le I\}$ the set of \ki{tests} on $\hil$, 
and $\bP(\hil)$ the set of projections on $\hil$.

In our study of functions of pairs of density operators
we will often use that two commuting density operators can be diagonalized in the same 
orthonormal basis, and hence can be written as 
\begin{align}\label{commuting pair}
\rho=\sum_{\omega\in\Omega}\rho(\omega)\pr{\omega},\ds\ds\ds
\sigma=\sum_{\omega\in\Omega}\sigma(\omega)\pr{\omega},
\end{align}
with some orthonormal basis $(\ket{\omega})_{\omega\in\Omega}$, and probability density 
functions $\rho,\sigma$ on $\Omega$. We will refer to this setting as the classical case.

For a finite-dimensional Hilbert space $\hil$ and 
a finite set $\X$, let 
\begin{align*}
\povm(\hil,\X):=\left\{(M_x)_{x\in\X}:\,M_x\in\B(\hil)\p,\,x\in\X,\s\sum_x M_x=I\right\}
\end{align*}
denote the set of positive operator-valued measures (POVMs) on $\hil$ with outcome set $\X$, and let 
\begin{align*}
\pvm(\hil,\X):=\left\{(M_x)_{x\in\X}\in\povm(\hil,\X):\,M_x\in\bP(\hil),\,x\in\X\right\}
\end{align*}
denote the subset of projection-valued measures (PVMs). Let
\begin{align*}
\pvm_1(\hil):=\left\{M\in\pvm(\hil,[\dim\hil]):\,\Tr M_k=1,\,k=1,\ldots,\dim\hil\right\}
\end{align*}
be the set of rank $1$ PVMs on $\hil$, i.e, the set of measurements in an orthonormal basis 
of $\hil$. 
For any $M\in\povm(\hil,\X)$, let 
\begin{align*}
\M(A):=\sum_{x\in\X}(\Tr M_x A)\egy_{\{x\}}\in\bC^{\X},\ds\ds\ds A\in\B(\hil),
\end{align*}
where $\egy_{\{x\}}$ is the indicator function of the singleton $\{x\}$. If 
$\rho\in\S(\hil)$ is a state then $\M(\rho)$ is the post-measurement 
probability distribution.
\medskip

For two probability density functions $\rho,\sigma$ on some finite set $\Omega$, 
and $\alpha\in(0,+\infty)\setminus\{1\}$, let 
\begin{align*}
D_{\alpha}(\rho\|\sigma):=
\begin{cases}
\frac{1}{\alpha-1}\log\sum_{\omega\in\Omega}\rho(\omega)^{\alpha}\sigma(\omega)^{1-\alpha},&\alpha\in(0,1)\text{ or }\supp\rho\subseteq\supp\sigma,\\
+\infty,&\text{ otherwise},
\end{cases}
\end{align*}
denote the (classical) R\'enyi $\alpha$-divergence of $\rho$ and $\sigma$. 
For $\alpha=1$ we have 
\begin{align*}
D_1(\rho\|\sigma)&:=
\lim_{\alpha\to 1}D_{\alpha}(\rho\|\sigma)=
D(\rho\|\sigma):=
\begin{cases}
\sum_{\omega\in\Omega}\rho(\omega)(\log\rho(\omega)-\log\sigma(\omega)),
&\supp\rho\subseteq\supp\sigma,\\
+\infty,&\text{ otherwise},
\end{cases}
\end{align*}
where $D(\rho\|\sigma)$ is the \ki{Kullback-Leibler divergence}, or 
\ki{relative entropy} of $\rho$ and $\sigma$.

There are various extensions of the R\'enyi $\alpha$-divergences to pairs of quantum states. 
Motivated e.g., by the notion of a quantum $f$-divergence introduced in 
\cite{Matsumoto_newfdiv}, we consider the following:

\begin{defin}\label{def:qRenyi}
For $\alpha\in(0,1)\cup(1,+\infty)$, a function
\begin{align*}
D_{\alpha}^q:\,\cup_{d\in\bN}\bz\S(\bC^d)\times\S(\bC^d)\jz\to\bR
\end{align*}
is a \ki{quantum R\'enyi $\alpha$-divergence} if it is invariant under isometries, i.e., 
for any $\rho,\sigma\in\S(\bC^d)$ and any isometry $V:\,\bC^d\to\bC^{d'}$, 
\begin{align*}
D_{\alpha}^q\bz V\rho V^*\|V\sigma V^*\jz=D_{\alpha}^q(\rho\|\sigma),
\end{align*}
and it reduces to the classical R\'enyi $\alpha$-divergence on commuting states, i.e., 
if $\rho=\sum_{i=1}^dp_i\pr{i}$ and $\sigma=\sum_{i=1}^d q_i\pr{i}$ are diagonal in the same orthonormal basis then 
\begin{align*}
D_{\alpha}^q(\rho\|\sigma)=D_{\alpha}\bz(p_i)_{i=1}^d\|(q_i)_{i=1}^d\jz.
\end{align*}
\end{defin}

It is clear that any quantum R\'enyi $\alpha$-divergence can be uniquely extended to pairs of 
density operators $\rho,\sigma$ on an arbitrary finite-dimensional Hilbert space $\hil$ by 
mapping $\hil$ into some $\bC^d$ with an isometry $V$, and defining 
$D_{\alpha}^q(\rho\|\sigma):=D_{\alpha}^q(V\rho V^*\|V\sigma V^*)$. 

Two particularly important families of quantum R\'enyi divergences are the
\ki{standard (or Petz-type) R\'enyi $\alpha$-divergences} \cite{P86}, given for 
$\rho,\sigma\in\S(\hil)$, and $\alpha\in(0,+\infty)\setminus\{1\}$, as
\begin{align}\label{standard Renyi def}
D_{\alpha}(\rho\|\sigma):=
\begin{cases}
\frac{1}{\alpha-1}\log\Tr\rho^{\alpha}\sigma^{1-\alpha},&\alpha\in(0,1)\text{ or }\supp\rho\subseteq\supp\sigma,\\
+\infty,&\text{ otherwise},
\end{cases}
\end{align}
and the \ki{sandwiched R\'enyi $\alpha$-divergences} \cite{Renyi_new,WWY}
\begin{align*}
D_{\alpha}\nw(\rho\|\sigma):=
\begin{cases}
\frac{1}{\alpha-1}\log\Tr\bz\rho^{1/2}\sigma^{\frac{1-\alpha}{\alpha}}\rho^{1/2}\jz^{\alpha},&\alpha\in(0,1)\text{ or }\supp\rho\subseteq\supp\sigma,\\
+\infty,&\text{ otherwise}.
\end{cases}
\end{align*}
Here and henceforth we follow the convention that for a PSD operator $A\in\B(\hil)\p$
with spectral decomposition $A=\sum_aaP_a$, where $P_a$ is the projection onto
$\{\psi\in\hil:\,A\psi=a\psi\}$, $a\in[0,+\infty)$, real powers of $A$ are defined as
$A^x:=\sum_{a>0}a^xP_a$. In particular, $A^0$ is the projection onto the support of $A$. With this
convention, 
\begin{align*}
D_0(\rho\|\sigma):=\lim_{\alpha\searrow 0}D_{\alpha}(\rho\|\sigma)=-\log\Tr\rho^0\sigma.
\end{align*}
\medskip

In what follows, we consider further functions of pairs of density operators $\rho$ and $\sigma$. To avoid trivial pathological cases, we will always implicitly assume that 
\begin{align*}
\rho\not\perp\sigma,\ds\ds\ds\text{i.e.,}\ds\ds\ds \Tr\rho\sigma>0.
\end{align*}

For $\rho,\sigma\in\S(\hil)$, let us introduce
\begin{align*}
\psi(\alpha)&:=\psi(\rho\|\sigma|\alpha):=\log\Tr\rho^{\alpha}\sigma^{1-\alpha},\ds\ds\ds\ds\ds\ds\alpha\in\bR,
\end{align*}
so that $D_{\alpha}(\rho\|\sigma)=\frac{\psi(\alpha)}{\alpha-1}$ 
if $\alpha\in(0,1)$ or
$\rho^0\le\sigma^0$ (otherwise $D_\alpha(\rho\|\sigma)=+\infty$).

We will need the following:

\begin{lemma}\label{lemma:Dalpha monotone}
Let $\rho,\sigma\in\S(\hil)$.
The function $\psi(\rho\|\sigma|\valt)$ is 
convex and real analytic on $\bR$, 
and it is non-positive on $[0,1]$.
The functions
$\alpha\mapsto D_{\alpha}(\rho\|\sigma)$ and
$\alpha\mapsto D_{\alpha}\nw(\rho\|\sigma)$ are
non-negative and increasing on $(0,1)\cup(1,+\infty)$
with 
\begin{align*}
D_1(\rho\|\sigma)&:=
D_1\nw(\rho\|\sigma):=
\lim_{\alpha\to 1}D_{\alpha}(\rho\|\sigma)
=
\lim_{\alpha\to 1}D_{\alpha}\nw(\rho\|\sigma)\\
&=D(\rho\|\sigma):=
\begin{cases}
\Tr\rho(\log\rho-\log\sigma),&\rho^0\le\sigma^0,\\
+\infty,&\text{otherwise},
\end{cases}
\end{align*}
being the \ki{relative entropy} of $\rho$ and $\sigma$ \cite{Umegaki}.
Moreover, 
$\alpha\mapsto D_{\alpha}(\rho\|\sigma)$ is strictly increasing on $(0,1)$
unless 
\begin{align}\label{constant Renyi char}
\rho=\sum_{i=1}^m\kappa s_i P_i,
\end{align}
for some $\kappa>0$ and projections $P_i\le Q_i$, $i\in[m]$,
where $\sigma=\sum_{i=1}^ms_iQ_i$ is the spectral decomposition of $\sigma$.
In this latter case $D_{\alpha}(\rho\|\sigma)=\log\kappa$, $\alpha\in(0,+\infty)$.
\end{lemma}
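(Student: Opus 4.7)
My strategy is to derive almost everything from properties of the auxiliary function $\psi(\alpha)=\log\Tr\rho^\alpha\sigma^{1-\alpha}$, and to treat the sandwiched case by analogy. Writing the spectral decompositions $\rho=\sum_i\lambda_i\pr{v_i}$ and $\sigma=\sum_j\mu_j\pr{w_j}$ (sums over $\lambda_i>0$ and $\mu_j>0$) yields
\begin{align*}
\Tr\rho^\alpha\sigma^{1-\alpha}=\sum_{i,j}\lambda_i^\alpha\mu_j^{1-\alpha}|\langle v_i|w_j\rangle|^2,
\end{align*}
which is a strictly positive real-analytic function of $\alpha\in\bR$ (positivity uses the standing assumption $\rho\not\perp\sigma$). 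Each summand is a log-affine function of $\alpha$, so $\psi$ is convex as the log of a sum of exponentials, and H\"older's inequality $\Tr\rho^\alpha\sigma^{1-\alpha}\le(\Tr\rho)^\alpha(\Tr\sigma)^{1-\alpha}=1$ gives $\psi\le 0$ on $[0,1]$.

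For monotonicity of $D_\alpha(\rho\|\sigma)=\psi(\alpha)/(\alpha-1)$ on each of $(0,1)$ and $(1,+\infty)$, I split
\begin{align*}
D_\alpha(\rho\|\sigma)=\frac{\psi(\alpha)-\psi(1)}{\alpha-1}+\frac{\psi(1)}{\alpha-1}.
\end{align*}
The first summand is the chord slope of the convex function $\psi$ at the base point $1$, hence increasing in $\alpha$; the second has derivative $-\psi(1)/(\alpha-1)^2\ge 0$ since $\psi(1)=\log\Tr\rho\sigma^0\le 0$, so it is also increasing. Non-negativity follows from the data-processing inequality applied to the trace channel (equivalently from the signs of $\psi(\alpha)$ and $\alpha-1$). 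For the limit at $\alpha=1$, when $\psi(1)=0$ (equivalently $\rho^0\le\sigma^0$) L'H\^opital gives $\psi'(1)=\Tr\rho(\log\rho-\log\sigma)=D(\rho\|\sigma)$; otherwise $\psi(1)<0$ forces the one-sided limit from $(0,1)$ to be $+\infty$, matching the value on $(1,+\infty)$. The sandwiched version is treated identically via $D_\alpha\nw(\rho\|\sigma)=\tilde\psi(\alpha)/(\alpha-1)$ with $\tilde\psi(\alpha):=\log\Tr\bz\rho^{1/2}\sigma^{(1-\alpha)/\alpha}\rho^{1/2}\jz^\alpha$, invoking the known convexity of $\tilde\psi$ and its standard derivative at $1$.

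Finally, for the strict-monotonicity characterization, real analyticity of $\psi$ implies that $D_\alpha$ is either strictly increasing on $(0,1)$ or constant on that interval, and the latter is equivalent to $\psi$ being affine on $\bR$ with $\psi(1)=0$. Reading affinity off the log-sum-exp representation, all exponents $\log\lambda_i-\log\mu_j$ over pairs $(i,j)$ with $|\langle v_i|w_j\rangle|^2>0$ must agree; denoting the common value by $\log\kappa$, each eigenvector $v_i$ of $\rho$ is forced to lie in a single eigenspace $\ran Q_k$ of $\sigma$, with $\lambda_i=\kappa s_k$. Collecting, $\rho=\kappa\sum_k s_k P_k$ with projections $P_k\le Q_k$, which is exactly \eqref{constant Renyi char}; the normalization $\Tr\rho=1$ then yields $\Tr\rho^\alpha\sigma^{1-\alpha}=\kappa^{\alpha-1}$, so $D_\alpha(\rho\|\sigma)=\log\kappa$ for every $\alpha\in(0,+\infty)$. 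The main obstacle I expect is this last characterization step: translating ``$\psi$ affine'' into the precise block structure \eqref{constant Renyi char} requires careful bookkeeping of overlaps $\langle v_i|w_j\rangle$ and compatibility of $\rho$-eigenvectors with the spectral projections of $\sigma$; the remaining points are routine.
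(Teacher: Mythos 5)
Your proof is correct, and its core is the same as the paper's: the monotonicity of $\alpha\mapsto D_{\alpha}(\rho\|\sigma)$ is obtained from exactly the same split $\frac{\psi(\alpha)-\psi(1)}{\alpha-1}+\frac{\psi(1)}{\alpha-1}$ into a chord slope of the convex $\psi$ plus a term controlled by $\psi(1)\le 0$, and the sandwiched case is delegated to known results in both treatments. Where you differ is that the paper outsources the supporting facts to \cite[Lemma 3.2]{HMO2} --- convexity of $\psi$ via its second derivative, and the characterization of $\psi$ being affine --- whereas you derive both directly from the spectral expansion $\Tr\rho^{\alpha}\sigma^{1-\alpha}=\sum_{i,j}\lambda_i^{\alpha}\mu_j^{1-\alpha}|\langle v_i|w_j\rangle|^2$: convexity as a log-sum-exp, non-positivity on $[0,1]$ by H\"older rather than by convexity plus the endpoint values, and affinity as the statement that all exponents $\log(\lambda_i/\mu_j)$ with nonzero overlap coincide. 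This buys a self-contained proof at the cost of the bookkeeping you yourself flag. One step there should be made explicit: concluding that each $v_i$ lies in a single eigenspace of $\sigma$ requires $v_i\in\ran\sigma$ (otherwise $v_i$ could have a component in $\ker\sigma$ invisible to the overlaps with the $w_j$ of positive eigenvalue); this is supplied by $\psi(1)=0$, i.e., $\rho^0\le\sigma^0$, which you do have in hand from the affinity-plus-$\psi(1)=0$ characterization but invoke only implicitly. With that made explicit, the passage to \eqref{constant Renyi char} and the value $D_{\alpha}(\rho\|\sigma)=\log\kappa$ is complete.
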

\begin{proof}
Real analyticity of $\psi$ on $\bR$ is easy to see, 
and its convexity can be easily verified by simply computing its second derivative;
see, e.g., \cite[Lemma 3.2]{HMO2}.
Non-positivity of $\psi$ on $[0,1]$ follows from the above and that $\psi(0)\le 0$, $\psi(1)\le 0$.
Convexity of $\psi$ implies that
\begin{align*}
D_{\alpha}(\rho\|\sigma)=\frac{\psi(\alpha)-\psi(1)}{\alpha-1}+\frac{\psi(1)}{\alpha-1}
\end{align*}
is strictly increasing on $(0,1)$, unless $\psi(1)=0$, i.e., $\rho^0\le\sigma^0$, and 
$\psi$ is affine. The characterization 
of $\alpha\mapsto D_{\alpha}(\rho\|\sigma)$ not being strictly increasing 
then follows from the characterization of 
$\psi$ being affine given in \cite[Lemma 3.2]{HMO2}. For the assertions about the sandwiched R\'enyi
divergence, see \cite{Renyi_new}.
\end{proof}
\begin{rem}
The above proof also yields that $\alpha\mapsto\frac{\psi(\alpha)}{\alpha-1}$ 
($=D_{\alpha}(\rho\|\sigma)$ for
$\alpha\in(0,1)$) is strictly increasing on $(0,+\infty)$ unless  
\eqref{constant Renyi char} holds.
\end{rem}

We will also need the Legendre transforms
\begin{align}
\vfi(c)&:=\max_{\alpha\in[0,1]}\{c(\alpha-1)-\psi(\alpha)\},\ds\ds\ds c\in\bR,
\label{phi def}\\
\vfi_+(c)&:=\max_{\alpha\in[0,1]}\{c\alpha-\psi(\alpha)\}=\vfi(c)+c,\ds\ds\ds c\in\bR,
\label{phi+ def}\\
H_r(\rho\|\sigma)
&:=\sup_{\alpha\in(0,1)}\frac{\alpha-1}{\alpha}\left[r-D_{\alpha}(\rho\|\sigma)\right] \nonumber\\
&=\sup_{\alpha\in(0,1)}\frac{(\alpha-1)r-\psi(\alpha)}{\alpha} \nonumber\\
&=\sup_{u\in(-\infty,0)}\{ur-\tilde\psi(u)\},\ds\ds\ds r\in\bR,
\label{Hoeffding def}
\end{align}
where $\tilde\psi(u):=(1-u)\psi((1-u)\inv)$, and $H_r(\rho\|\sigma)$
is the \ki{Hoeffding divergence} of $\rho$ and $\sigma$ with parameter 
$r$. 
In the problem of asymptotic binary state discrimination with null hypothesis $\rho$ and alternative hypothesis $\sigma$, 
$H_r(\rho\|\sigma)$ gives the optimal type I error exponent when the type II exponent 
is at least $r$ \cite{Hayashicq,Nagaoka}. The functions 
$\vfi$ and $\vfi_+$ give a different description of the trade-off curve of the 
two exponents; see, e.g., \cite{Nagaoka}.

\begin{lemma}\label{lemma:phi properties}
Let $\rho,\sigma\in\S(\hil)$, and $\vfi,\vfi_+$ be as above.
\begin{enumerate}
\item
$\vfi_+$ is constant $-\psi(0)$ on $(-\infty,\psi'(0)]$, and it is strictly increasing on 
$[\psi'(0),+\infty)$. 
\item
$\vfi$ is strictly decreasing on $(-\infty,\psi'(1)]$, and 
it is constant $-\psi(1)$ on $[\psi'(1),+\infty)$.
\item
For every $r\ge -\psi(0)$ there exists a unique $c_r\ge\psi'(0)$ such that 
\begin{align}\label{Hr representation}
\vfi_+(c_r)=r,\ds\ds\ds \vfi(c_r)=H_r(\rho\|\sigma).
\end{align}
\item
$r\mapsto H_r(\rho\|\sigma)$ is convex, lower semi-continuous, and monotone decreasing on $\bR$, and 
\begin{align}
r<-\psi(0)=D_0(\rho\|\sigma)&\ds\iff\ds H_r(\rho\|\sigma)=+\infty,\label{Hr infty}\\
r< D(\rho\|\sigma)&\ds\iff\ds H_r(\rho\|\sigma)>0.\label{Hr null}
\end{align}
\end{enumerate}
\end{lemma}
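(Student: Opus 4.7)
The plan is to exploit that by Lemma \ref{lemma:Dalpha monotone} the function $\psi$ is convex and real analytic on $\bR$, so $\psi'$ is continuous and monotone non-decreasing, and that $\vfi_+$ and $\vfi$ are essentially constrained Legendre transforms of $\psi$ on $[0,1]$, related by the trivial identity $\vfi(c)=\vfi_+(c)-c$. The proof of all four parts then reduces to elementary convex analysis of the concave function $f_c(\alpha):=c\alpha-\psi(\alpha)$ on $[0,1]$, whose derivative $c-\psi'(\alpha)$ is monotone non-increasing in $\alpha$.

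For parts (i) and (ii), first observe that if $c\le\psi'(0)$ then $f_c'(\alpha)\le 0$ on $[0,1]$ by monotonicity of $\psi'$, so the maximum is attained at $\alpha=0$, giving $\vfi_+(c)=-\psi(0)$; conversely, if $c>\psi'(0)$, the first-order Taylor expansion of $\psi$ at $0$ yields $f_c(\alpha)=-\psi(0)+(c-\psi'(0))\alpha+o(\alpha)>-\psi(0)$ for small $\alpha>0$, hence $\vfi_+(c)>-\psi(0)$. Strict monotonicity on $(\psi'(0),+\infty)$ then follows from the envelope inequality: if $\alpha_{c_1}>0$ is any maximizer of $f_{c_1}$ and $c_2>c_1$, then
\begin{align*}
\vfi_+(c_2)\ge c_2\alpha_{c_1}-\psi(\alpha_{c_1})=\vfi_+(c_1)+(c_2-c_1)\alpha_{c_1}>\vfi_+(c_1).
\end{align*}
Part (ii) is the symmetric statement obtained by analyzing $f_c$ at $\alpha=1$ in place of $\alpha=0$ and using $\vfi(c)=\vfi_+(c)-c$.

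For part (iii), $\vfi_+$ is convex in $c$ as a supremum of affine functions, hence continuous; by part (i) it is strictly increasing on $[\psi'(0),+\infty)$ starting at the value $-\psi(0)$, and the bound $\vfi_+(c)\ge c\cdot 1-\psi(1)$ shows it tends to $+\infty$. Hence $\vfi_+$ restricts to a continuous bijection $[\psi'(0),+\infty)\to[-\psi(0),+\infty)$, establishing existence and uniqueness of $c_r$. To identify $\vfi(c_r)=H_r(\rho\|\sigma)$, I would pick any maximizer $\alpha_r$ of $f_{c_r}$, which is strictly positive whenever $c_r>\psi'(0)$ (the boundary case $r=-\psi(0)$ being handled by continuity); from $c_r\alpha_r-\psi(\alpha_r)=r$ one gets $c_r=(r+\psi(\alpha_r))/\alpha_r$, and for any $\alpha\in(0,1)$ the defining bound $c_r\alpha-\psi(\alpha)\le\vfi_+(c_r)=r$ rearranges to
\begin{align*}
h(\alpha):=\frac{(\alpha-1)r-\psi(\alpha)}{\alpha}=r-\frac{r+\psi(\alpha)}{\alpha}\le r-c_r=\vfi(c_r),
\end{align*}
with equality at $\alpha=\alpha_r$; taking the supremum over $\alpha$ yields $H_r(\rho\|\sigma)=\vfi(c_r)$. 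This explicit matching of the two Legendre-type quantities is in my view the main point requiring care.

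For part (iv), convexity, lower semi-continuity, and monotone decrease of $H_r$ in $r$ all follow immediately from the representation $H_r=\sup_{u<0}\{ur-\tilde\psi(u)\}$ as a supremum of affine continuous functions of $r$, each decreasing since $u<0$. For \eqref{Hr infty}: if $r<-\psi(0)$, then as $\alpha\to 0^+$ the numerator $(\alpha-1)r-\psi(\alpha)\to -r-\psi(0)>0$ while $\alpha\to 0^+$, forcing $h(\alpha)\to+\infty$; the converse follows from the finite value $\vfi(c_r)$ provided by part (iii). For \eqref{Hr null}, the elementary equivalence $h(\alpha)>0\iff r<D_\alpha(\rho\|\sigma)$ (obtained by rearrangement using $\alpha-1<0$) together with the monotonicity $D_\alpha\uparrow D(\rho\|\sigma)$ from Lemma \ref{lemma:Dalpha monotone} gives $H_r>0\iff \sup_{\alpha\in(0,1)}D_\alpha(\rho\|\sigma)>r\iff D(\rho\|\sigma)>r$, which is precisely \eqref{Hr null}; the case $D(\rho\|\sigma)=+\infty$ is covered by the same argument since $D_\alpha\to+\infty$ as $\alpha\to 1^-$.
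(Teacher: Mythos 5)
Your proof is correct, and it supplies in full the convex-analysis details that the paper's own proof delegates to the references: the paper simply declares (i), (ii) and (iv) ``straightforward to verify'' (pointing to \cite{HMO2}) and cites \cite{Nagaoka} for (iii), adding only the one explicit computation $H_{-\psi(0)}(\rho\|\sigma)=-\psi'(0)-\psi(0)<+\infty$ for the boundary case of \eqref{Hr infty}. Your route --- analyzing the concave function $\alpha\mapsto c\alpha-\psi(\alpha)$ on $[0,1]$, using the envelope inequality for strict monotonicity, and matching $H_r$ with $\vfi(c_r)$ via the rearrangement $h(\alpha)=r-(r+\psi(\alpha))/\alpha\le r-c_r$ --- is exactly the Legendre-duality argument underlying those references, so the approaches coincide in substance. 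Two boundary points deserve one extra sentence each in a polished write-up: in (iii), the maximizer $\alpha_r$ may equal $1$, in which case the supremum over the open interval $(0,1)$ is attained only in the limit $\alpha\to1^-$ (continuity of $h$ at $1$ closes this); and the case $r=-\psi(0)$, where $c_r=\psi'(0)$ may force the maximizer to $\alpha=0$, is most cleanly settled either by your Taylor expansion showing $h(\alpha)\to-\psi(0)-\psi'(0)$ as $\alpha\to0^+$ (which reproduces the paper's direct calculation) or by the combination of lower semi-continuity and monotone decrease of $r\mapsto H_r(\rho\|\sigma)$ that you invoke. Neither issue is a gap; both are resolvable exactly as you indicate.
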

\begin{proof}
The first two points are straightforward to verify (see also \cite[Lemma 4.1]{HMO2}).
For the third, see \cite[Section 2]{Nagaoka} (or the proof of \cite[Theorem 4.8]{HMO2},
with the role of $\rho$ and $\sigma$ interchanged). 
The properties of $r\mapsto H_r(\rho\|\sigma)$ listed in the fourth point are straightforward to verify;
the equivalences in \eqref{Hr infty}--\eqref{Hr null} follow immediately from 
the monotonicity of $D_{\alpha}(\rho\|\sigma)$ in $\alpha$
(see Lemma \ref{lemma:Dalpha monotone}), with the only exception of
the case 
$r=-\psi(0)$, for which a direct calculation yields 
\begin{align}\label{Hr at r=D_0}
H_{-\psi(0)}(\rho\|\sigma)=-\psi'(0)-\psi(0)<+\infty.
\end{align}
\end{proof}

\begin{lemma}
Let $\rho,\sigma\in\S(\hil)$.
For any $b\in\bR$,
\begin{align}\label{Nagaoka bound}
-\vfi(b)=\lim_{n\to+\infty}\frac{1}{n}\log
\min_{T_n\in\bT(\hil^{\otimes n})}\bz\Tr\rho^{\otimes n}(I-T_n)+e^{nb}
\Tr\sigma^{\otimes n}T_n\jz.
\end{align}
\end{lemma}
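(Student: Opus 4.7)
The plan is to first reduce the minimization over tests to an explicit quantity via Holevo--Helstrom, and then to establish matching one-sided bounds. Writing
\[
\Tr\rho^{\otimes n}(I-T_n)+e^{nb}\Tr\sigma^{\otimes n}T_n=1-\Tr\bz\rho^{\otimes n}-e^{nb}\sigma^{\otimes n}\jz T_n,
\]
and using that $\max_{0\le T\le I}\Tr XT=\Tr X_+$ for any self-adjoint $X$ (attained at the spectral projection of $X$ onto $(0,+\infty)$), the quantity inside the limit becomes
\[
P_n(b):=1-\Tr\bz\rho^{\otimes n}-e^{nb}\sigma^{\otimes n}\jz_+,
\]
so it suffices to show $\lim_n n\inv\log P_n(b)=-\vfi(b)$.

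For the upper bound $P_n(b)\le e^{-n\vfi(b)}$ (direct part), I would apply Audenaert's inequality: for any PSD $A,B$ and $\alpha\in[0,1]$,
\[
\Tr A-\Tr(A-B)_+=\tfrac12\bz\Tr A+\Tr B-\snorm{A-B}_1\jz\le \Tr A^\alpha B^{1-\alpha}.
\]
Substituting $A=\rho^{\otimes n}$ and $B=e^{nb}\sigma^{\otimes n}$ and using multiplicativity of $\Tr\rho^\alpha\sigma^{1-\alpha}$ under tensor powers yields $P_n(b)\le e^{n[b(1-\alpha)+\psi(\alpha)]}$ for every $\alpha\in[0,1]$; minimizing the exponent over $\alpha\in[0,1]$ gives exactly $-\vfi(b)$ by definition of $\vfi$.

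For the matching lower bound $\liminf_n n\inv\log P_n(b)\ge-\vfi(b)$ (converse), I would invoke the quantum Hoeffding bound theorem of \cite{Hayashicq,Nagaoka} cited in the preliminaries: any sequence of tests $T_n$ satisfying $\liminf_n-n\inv\log\Tr\sigma^{\otimes n}T_n\ge r$ obeys $\liminf_n-n\inv\log\Tr\rho^{\otimes n}(I-T_n)\le H_r(\rho\|\sigma)$. Suppose toward a contradiction that $P_n(b)\le e^{-n(\vfi(b)+\ep)}$ eventually for some $\ep>0$, and let $T_n$ be optimal tests. Then both $\alpha_n:=\Tr\rho^{\otimes n}(I-T_n)$ and $e^{nb}\beta_n:=e^{nb}\Tr\sigma^{\otimes n}T_n$ are bounded by $e^{-n(\vfi(b)+\ep)}$, so $\beta_n\le e^{-n(\vfi_+(b)+\ep)}$ since $\vfi_+(b)=\vfi(b)+b$. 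Applying the Hoeffding converse with $r=\vfi_+(b)+\ep$, together with the identity $H_{\vfi_+(b)}(\rho\|\sigma)=\vfi(b)$ from Lemma \ref{lemma:phi properties}(iii) and the monotonicity of $r\mapsto H_r$ from Lemma \ref{lemma:phi properties}(iv), I conclude $\liminf_n-n\inv\log\alpha_n\le H_{\vfi_+(b)+\ep}(\rho\|\sigma)\le\vfi(b)$, contradicting $\alpha_n\le e^{-n(\vfi(b)+\ep)}$.

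The main obstacle is the converse: the upper bound is a clean one-shot application of Audenaert's inequality, but the matching lower bound requires the full strength of the Hoeffding bound theorem, whose proof itself rests on either Nussbaum--Szkola-type commutative reductions or more delicate quantum large-deviations arguments. The role of Lemma \ref{lemma:phi properties}(iii) is then essentially to repackage the Hoeffding trade-off curve in Legendre-dual form, turning the weighted Bayesian-error exponent in \eqref{Nagaoka bound} into the Legendre transform $-\vfi(b)$.
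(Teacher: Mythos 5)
The paper does not actually prove this lemma: it is quoted verbatim from \cite[Corollary 3.3]{HMO2}, so there is no internal proof to compare against. Your reduction of the minimum to $P_n(b)=1-\Tr\bz\rho^{\otimes n}-e^{nb}\sigma^{\otimes n}\jz_+$ and the upper bound $\limsup_n\frac1n\log P_n(b)\le-\vfi(b)$ via Audenaert's inequality \cite{Aud} are correct, and this is exactly the standard achievability argument behind the cited result (the same trace inequality the paper uses for \eqref{Hoeffding attainability}).

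The converse half has a genuine gap and a structural hazard. The gap: you suppose toward a contradiction that $P_n(b)\le e^{-n(\vfi(b)+\ep)}$ \emph{eventually}, but that is the negation of $\limsup_n\frac1n\log P_n(b)\ge-\vfi(b)$, not of the liminf statement. As written, your argument only pins down the limsup; the existence of the limit, which is part of the claim, is left unproven, and $\frac1n\log P_n(b)$ is not obviously sub- or super-additive, so the limit is not free. To repair this you must run the argument along an arbitrary subsequence on which $P_{n_k}(b)\le e^{-n_k(\vfi(b)+\ep)}$, which requires the subsequence form of the Hoeffding converse. The hazard: inside this paper the Hoeffding converse \eqref{Hoeffding converse} is \emph{derived from} \eqref{Nagaoka bound} (the proof of Lemma \ref{lemma:Hoeffding} opens with ``By \eqref{Nagaoka bound} we have\ldots''), so invoking Lemma \ref{lemma:Hoeffding} here would be circular, and the paper explicitly warns that the subsequence version ``does not follow formally from the corresponding statement in \cite{Nagaoka}.'' The circle can be broken, because Nagaoka's converse rests on the monotonicity of $Q_\alpha$ under the binary measurement $\{T_n,I-T_n\}$ rather than on \eqref{Nagaoka bound}, and that argument does pass to subsequences; but you would need to say this and sketch that step rather than quote the theorem as a black box. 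A further small imprecision: $H_{\vfi_+(b)}(\rho\|\sigma)=\vfi(b)$ holds only for $b\ge\psi'(0)$ (Lemma \ref{lemma:phi properties} produces $c_r\ge\psi'(0)$); for $b<\psi'(0)$ one only gets $H_{\vfi_+(b)}\le\vfi(b)$, which still yields your contradiction but should be stated.
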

\begin{proof}
See \cite[Corollary 3.3]{HMO2}.
\end{proof}

The following is a slight variation of the well-known \ki{quantum Hoeffding bound theorem}
\cite{ANSzV,Hayashicq,Hayashibook2,Nagaoka}:
\begin{lemma}\label{lemma:Hoeffding}
Let $\rho,\sigma\in\S(\hil)$.
For any $r\in(0,+\infty)$ and $\alpha\in(0,1)$, 
\begin{align}\label{Hoeffding attainability}
\Tr\sigma^{\otimes n}T_{n,r,\alpha}\le e^{-nr},\ds\ds\ds
\Tr\rho^{\otimes n}(I-T_{n,r,\alpha})\le e^{-n\frac{\alpha-1}{\alpha}\left[r-D_{\alpha}(\rho\|\sigma)\right]},
\end{align}
where $T_{n,r,\alpha}$ is the spectral projection 
of $\rho^{\otimes n}-e^{n(r+\psi(\alpha))/\alpha}\sigma^{\otimes n}$
corresponding to its positive eigenvalues.

Conversely, for any $r>D_0(\rho\|\sigma)$, any test sequence $T_n\in\bT(\hil^{\otimes n})$, $n\in\bN$, and any strictly increasing sequence $(n_k)_{k\in\bN}$ in $\bN$,
\begin{align}\label{Hoeffding converse}
\text{if}\ds\ds
\liminf_{k\to+\infty}-\frac{1}{n_k}\log\Tr\sigma^{\otimes n_k}T_{n_k}\ge r
\ds\ds\text{then}\ds\ds
\limsup_{k\to+\infty}-\frac{1}{n_k}\log\Tr\rho^{\otimes n_k}(I-T_{n_k})\le H_r(\rho\|\sigma).
\end{align}
\end{lemma}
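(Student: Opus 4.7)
My plan is to handle the direct (achievability) and converse parts of the lemma separately.

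For the direct part, I would invoke the fundamental trace inequality of Audenaert: for $A,B\in\B(\hil)\p$ and $s\in[0,1]$, with $T:=\{A-B\ge 0\}$ the spectral projection of $A-B$ onto $[0,+\infty)$,
\[
\Tr A(I-T)+\Tr BT \le \Tr A^sB^{1-s}.
\]
Apply this with $A=\rho^{\otimes n}$, $B=\gamma\sigma^{\otimes n}$, $\gamma:=e^{n(r+\psi(\alpha))/\alpha}$, and $s=\alpha$, noting that $T=T_{n,r,\alpha}$. Since $\Tr(\rho^{\otimes n})^\alpha(\sigma^{\otimes n})^{1-\alpha}=e^{n\psi(\alpha)}$, the RHS equals $\gamma^{1-\alpha}e^{n\psi(\alpha)}$. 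Bounding each nonnegative summand on the LHS individually by the RHS and simplifying via $\psi(\alpha)=(\alpha-1)D_\alpha(\rho\|\sigma)$ yields both inequalities in \eqref{Hoeffding attainability} after straightforward algebra.

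For the converse, fix a test sequence $(T_{n_k})$ satisfying $\liminf_k(-1/n_k)\log\Tr\sigma^{\otimes n_k}T_{n_k}\ge r>D_0(\rho\|\sigma)=-\psi(0)$. For any $b\in\bR$, I would write
\[
\Tr\rho^{\otimes n_k}(I-T_{n_k})\ge \min_{T\in\bT(\hil^{\otimes n_k})}\!\bigl[\Tr\rho^{\otimes n_k}(I-T)+e^{n_k b}\Tr\sigma^{\otimes n_k}T\bigr]-e^{n_k b}\Tr\sigma^{\otimes n_k}T_{n_k}.
\]
By the Nagaoka bound \eqref{Nagaoka bound} the minimum is at least $e^{-n_k(\vfi(b)+\varepsilon)}$ for large $k$, while the subtracted payment term is at most $e^{n_k(b-r+\varepsilon)}$. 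Lemma \ref{lemma:phi properties}(iii) provides a unique $c_r>\psi'(0)$ with $\vfi_+(c_r)=r$ and $\vfi(c_r)=H_r(\rho\|\sigma)$; restricting $b<c_r$ forces $\vfi_+(b)<r$ by the strict monotonicity of $\vfi_+$ at $c_r$, i.e.\ $b-r<-\vfi(b)$. Thus for $\varepsilon$ sufficiently small the subtracted term is exponentially dominated by the Nagaoka lower bound, yielding $\Tr\rho^{\otimes n_k}(I-T_{n_k})\ge\tfrac12 e^{-n_k(\vfi(b)+\varepsilon)}$ and hence $\limsup_k(-1/n_k)\log\Tr\rho^{\otimes n_k}(I-T_{n_k})\le\vfi(b)$. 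Letting $\varepsilon\downarrow 0$ and then $b\nearrow c_r$, and using continuity of the finite convex function $\vfi$, gives the desired bound $\le\vfi(c_r)=H_r(\rho\|\sigma)$.

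The main obstacle lies in the converse: the parameters $\varepsilon$ and $b$, together with the tail behaviour of both error probabilities, must be coordinated so that the payment term is asymptotically negligible against the Nagaoka lower bound. The hypothesis $r>D_0(\rho\|\sigma)$ is essential here, as it places $c_r$ strictly inside the increasing branch of $\vfi_+$ and allows an approach $b\nearrow c_r$ that keeps $\vfi_+(b)<r$ while simultaneously driving $\vfi(b)\to H_r(\rho\|\sigma)$; at the boundary value $r=-\psi(0)$ this strict approach is no longer possible, consistent with the anomalous identity \eqref{Hr at r=D_0}.
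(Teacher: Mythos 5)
Your proposal is correct and follows essentially the same route as the paper: Audenaert's trace inequality applied to $A=\rho^{\otimes n}$, $B=e^{n(r+\psi(\alpha))/\alpha}\sigma^{\otimes n}$ for the achievability part, and the Nagaoka bound \eqref{Nagaoka bound} combined with the Legendre-transform facts $\vfi_+(b)<\vfi_+(c_r)=r$ for $b<c_r$ and $\vfi(c_r)=H_r(\rho\|\sigma)$ for the converse. The only difference is cosmetic: the paper absorbs your explicit $\varepsilon$-bookkeeping into the inequality $\liminf_k\frac{1}{c_k}\log(x_k+y_k)\le\max\{\liminf_k\frac{1}{c_k}\log x_k,\limsup_k\frac{1}{c_k}\log y_k\}$, whereas you subtract the payment term and show it is exponentially dominated; both arguments are equivalent.
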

\begin{proof}
The inequalities in \eqref{Hoeffding attainability} 
follow immediately by the application of the trace inequality 
$\half\Tr(A+B)-\half\Tr|A-B|\le\Tr A^{\alpha}B^{1-\alpha}$ in \cite{Aud}
to $A:=\rho^{\otimes n}$ and $B:=e^{n(r+\psi(\alpha))/\alpha}\sigma^{\otimes n}$; 
see \cite[Sec.~3.7]{Hayashibook2} for details.

For the proof of \eqref{Hoeffding converse}, let $c_r$ be as in 
\eqref{Hr representation}, 
and let $b\in(\psi'(0),c_r)$.
By \eqref{Nagaoka bound} we have 
\begin{align*}
-\vfi(b)
&\le
\liminf_{k\to+\infty}\frac{1}{n_k}\log\bz\Tr\rho^{\otimes n_k}(I- T_{n_k})+e^{n_kb}
\Tr\sigma^{\otimes n_k} T_{n_k}\jz\\
&\le
\max\Big\{
\liminf_{k\to+\infty}\frac{1}{n_k}\log\Tr\rho^{\otimes n_k}(I- T_{n_k}),
b+\underbrace{\limsup_{k\to+\infty}\frac{1}{n_k}\log\Tr\sigma^{\otimes n_k} T_{n_k}}_{
\le-r}\Big\},
\end{align*}
where the second inequality follows since, by their very definitions,
$\liminf_k{1\over c_k}\log(x_k+y_k)\le\max\{\liminf_k{1\over c_k}\log x_k,
\limsup_k{1\over c_k}\log y_k\}$ for non-negative real sequences 
$(x_k)_{k\in\bN}$, $(y_k)_{k\in\bN}$,
and $0<c_k\to+\infty$.
Since $b-r-(-\vfi(b))=\vfi_+(b)-\vfi_+(c_r)<0$, 
according to Lemma \ref{lemma:phi properties}, 
we get 
$-\vfi(b)\le \liminf_{k\to+\infty}\frac{1}{n_k}\log\Tr\rho^{\otimes n}(I- T_{n_k})$.
Taking $b\nearrow c_r$, we get 
\begin{align*}
-H_r(\rho\|\sigma)=-\vfi(c_r)\le \liminf_{k\to+\infty}\frac{1}{n_k}\log\Tr\rho^{\otimes n_k}(I- T_{n_k}),
\end{align*}
as required.
\end{proof}

\begin{rem}
\eqref{Hoeffding converse} 
is essentially the same as the converse part of the Hoeffding bound theorem given in 
\cite{Nagaoka}, the only difference being the restriction to a subsequence, 
which is the form we will need it in the proof of Theorem \ref{thm:testdiv}. 
The proof above is exactly the same as the one in \cite{Nagaoka};
we give it in detail since the subsequence version does not follow formally from 
the corresponding statement in \cite{Nagaoka}.
\end{rem}

\section{Test-measured R\'enyi divergences in finite dimension}

\subsection{Definitions}
\label{sec:def}

For $\rho,\sigma\in\S(\hil)$ and $\alpha\in(0,+\infty)$, 
their \ki{measured R\'enyi $\alpha$-divergence} is defined as
\begin{align*}
D_{\alpha}^{\meas}(\rho\|\sigma):=
\sup\left\{D_{\alpha}(\M(\rho)\|\M(\sigma)):\,M\in\povm(\hil,[d]),\s d\in\bN\right\}.
\end{align*}
It is known \cite{BFT_variational,HiaiMosonyi2017} that 
\begin{align}\label{meas def}
D_{\alpha}^{\meas}(\rho\|\sigma)&=
\max\left\{D_{\alpha}(\M(\rho)\|\M(\sigma)):\,M\in\pvm_1(\hil)\right\}.
\end{align}
By restricting to $2$-outcome POVMs, we get the notion of 
the \ki{test-measured R\'enyi $\alpha$-divergence} 
of $\rho$ and $\sigma$,
\begin{align}
D_{\alpha}^{\test}(\rho\|\sigma)
&:=
\sup\{D_{\alpha}(\M(\rho)\|\M(\sigma)):\,M\in\povm(\hil,[2])\}\nn
\\
&=
\max_{T\in\bT(\hil)}D_{\alpha}\bz\T(\rho)\|\T(\sigma)\jz,\label{test-div def2}
\end{align}
where for $T\in\bT(\hil)$ we use the notation
\begin{align}\label{postmeas}
\T(X):=(\Tr XT,\Tr X(I-T))\in[0,+\infty)^2,\ds\ds\ds X\in\B(\hil).
\end{align}
It is obvious from the definitions that
\begin{align}\label{test-meas ineq}
D_{\alpha}^{\test}(\rho\|\sigma)\le D_{\alpha}^{\meas}(\rho\|\sigma).
\end{align}

It is easy to see that the maximum in \eqref{test-div def2} exists. 
Indeed, when $\alpha\in(0,1)$, it follows from the (joint) continuity of the classical 
R\'enyi $\alpha$-divergence in its arguments.  
When $\alpha>1$ and $\rho^0\not\le\sigma^0$ then $T:=\sigma^0$ yields 
$D_{\alpha}\bz\T(\rho)\|\T(\sigma)\jz=+\infty=D_{\alpha}^{\test}(\rho\|\sigma)$. Finally,
when $\alpha>1$ and $\rho^0\le\sigma^0$ then $\rho\le\lambda\sigma$ for some $\lambda>0$, and 
in this case $D_{\alpha}\bz\T(\rho)\|\T(\sigma)\jz$ is continuous in $T$; see, e.g., 
\cite[Remark 4.16]{HiaiMosonyi2017} for details.

Note that every $2$-outcome POVM can be decomposed into a convex combination
of projective $2$-outcome measurements; this follows from the Krein-Milman theorem and the fact that the extreme points of the set of tests are exactly the projections
(see, e.g., \cite[Chap.~I, Lemma 10.1]{Takesaki1}, 
or \cite[Page 23]{HolevoStatistical} for an alternative argument.)
Since $D_{\alpha}$ is jointly quasi-convex in its arguments, we get that 
\begin{align}
D_{\alpha}^{\test}(\rho\|\sigma)
&=
\max\{D_{\alpha}(\M(\rho)\|\M(\sigma)):\,\M\in\pvm(\hil,[2])\}\label{test-div pr1}\\
&=
\max_{T\in\bP(\hil)}D_{\alpha}\bz\T(\rho)\|\T(\sigma)\jz.\label{test-div pr2}
\end{align}

\begin{rem}
For $\alpha=1$, $D^{\meas}:=D^{\meas}_1$ and $D^{\test}:=D^{\test}_1$
are also called the \ki{measured relative entropy} and the 
\ki{test-measured relative entropy}, respectively.
\end{rem}

\begin{rem}\label{rem:subalgebra}
If $\rho$ and $\sigma$ are in a unital $^*$-subalgebra $\A\subseteq\B(\hil)$
with unit $I_{\hil}$, and 
$\E$ is the trace-preserving conditional expectation (equivalently, the orthogonal projection 
with respect to the Hilbert-Schmidt inner product) onto $\A$, then the simple identity
$\Tr X T=\Tr\E(X)T=\Tr X\E(T)$, $X\in\A$, 
implies that an optimal $T$ attaining the maximum in \eqref{test-div def2}
exists with $T\in\A$. 
Moreover, by the same argument leading to \eqref{test-div pr1}--\eqref{test-div pr2}, there exists a projection $T\in\A$ attaining the maximum in \eqref{test-div pr2}.
In particular, if $\rho$ and $\sigma$ commute, and hence they are diagonal in a common orthonormal basis, then there exists an optimal $T$ in the sense of \eqref{test-div pr2}
that is also diagonal in the same basis. 
\end{rem}

\begin{rem}
It is clear from \eqref{meas def} and Remark \ref{rem:subalgebra} that if the subalgebra generated by $\rho$ and $\sigma$ is isomorphic to a $^*$-subalgebra of $\B(\bC^2)$ then 
$D_{\alpha}^{\test}(\rho\|\sigma)=D_{\alpha}^{\meas}(\rho\|\sigma)$ for every 
$\alpha\in(0,+\infty)$.
\end{rem}

\begin{lemma}\label{lemma:testdiv monotone in alpha}
For any $\rho,\sigma\in\S(\hil)$,
$\alpha\mapsto D_{\alpha}^{\test}(\rho\|\sigma)$ is monotone increasing on 
$(0,+\infty)$. 
\end{lemma}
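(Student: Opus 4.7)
The plan is to reduce the monotonicity of the supremum to pointwise monotonicity in $T$, and then invoke the classical result already available via Lemma \ref{lemma:Dalpha monotone}.

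Step 1: Fix an arbitrary test $T\in\bT(\hil)$. The pair $\bz\T(\rho),\T(\sigma)\jz$ consists of two probability distributions on the two-point set $\{0,1\}$, and in particular they are diagonal in a common orthonormal basis of $\bC^2$. Hence Lemma \ref{lemma:Dalpha monotone} (applied to the commuting density operators $\diag\T(\rho)$ and $\diag\T(\sigma)$ on $\bC^2$, which is the classical case) guarantees that
\begin{align*}
\alpha\mapsto D_{\alpha}\bz\T(\rho)\|\T(\sigma)\jz
\end{align*}
is monotone increasing on $(0,1)\cup(1,+\infty)$. Including the value at $\alpha=1$, which by that same lemma equals the common limit of $D_\alpha$ from either side (the relative entropy of $\T(\rho)$ and $\T(\sigma)$), one obtains monotonicity throughout the whole interval $(0,+\infty)$.

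Step 2: Take $0<\alpha_1<\alpha_2$ and let $T^\ast\in\bT(\hil)$ be an optimizer for $D^{\test}_{\alpha_1}(\rho\|\sigma)$ in \eqref{test-div def2}; the existence of such a maximizer was already noted right after the definition. Chaining the pointwise inequality from Step 1 with the definition of the supremum at $\alpha_2$ yields
\begin{align*}
D^{\test}_{\alpha_1}(\rho\|\sigma)
=D_{\alpha_1}\bz\T^\ast(\rho)\|\T^\ast(\sigma)\jz
\le D_{\alpha_2}\bz\T^\ast(\rho)\|\T^\ast(\sigma)\jz
\le D^{\test}_{\alpha_2}(\rho\|\sigma),
\end{align*}
which is the desired inequality.

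There is no real obstacle; the argument is essentially bookkeeping. The only mild point is to be sure the classical monotonicity holds across $\alpha=1$ (so that a case distinction for $\alpha_1<1<\alpha_2$ is unnecessary), which is handled by the identification $D_1=\lim_{\alpha\to 1}D_\alpha$ built into Lemma \ref{lemma:Dalpha monotone}. If one prefers to avoid any discussion of $\alpha=1$, the two-step sandwich $D^{\test}_{\alpha_1}\le D^{\test}_1\le D^{\test}_{\alpha_2}$ via the intermediate value $\alpha=1$ gives the same conclusion.
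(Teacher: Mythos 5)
Your proposal is correct and is essentially the argument the paper intends: the paper's proof consists of the single remark that the claim is ``obvious from the monotonicity of the classical R\'enyi $\alpha$-divergences in $\alpha$'' (citing Lemma \ref{lemma:Dalpha monotone}), and your Steps 1--2 simply spell out that reduction, including the pointwise classical monotonicity for each fixed test and the chaining through an optimizer (or, equivalently, through the supremum). Nothing further is needed.
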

\begin{proof}
Obvious from the monotonicity of the classical R\'enyi $\alpha$-divergences in $\alpha$; see, e.g., 
Lemma \ref{lemma:Dalpha monotone}.
\end{proof}

\begin{lemma}\label{lemma:testdiv strictly positive}
$D_{\alpha}^{\test}$ is strictly positive in the sense that for any $\rho,\sigma\in\S(\hil)$, 
\begin{align}\label{testdiv nonneg}
D_{\alpha}^{\test}(\rho\|\sigma)\ge 0,
\end{align}
with equality if and only if $\rho=\sigma$.
\end{lemma}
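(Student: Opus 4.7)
The plan is to exploit the fact that the test-measured divergence is at least the classical R\'enyi divergence of one's favorite pair of post-measurement distributions, and then invoke the classical fact that $D_\alpha(p\|q)\ge 0$ with equality iff $p=q$ for probability distributions.

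For non-negativity, I would simply note that for any test $T\in\bT(\hil)$, both $\T(\rho)$ and $\T(\sigma)$ are probability distributions on $\{1,2\}$, and the classical R\'enyi divergence of any two probability distributions is non-negative (one can see this for $\alpha\in(0,1)$ directly from $\sum_i p_i^\alpha q_i^{1-\alpha}\le\sum_i(\alpha p_i+(1-\alpha)q_i)=1$ by Young's inequality, combined with $\frac{1}{\alpha-1}<0$; for $\alpha>1$ it is the standard data-processing/Jensen argument, and the $\alpha=1$ case is the usual Klein's inequality). Alternatively, one may just take $T=0$, which produces $\T(\rho)=\T(\sigma)=(0,1)$ and hence $D_\alpha(\T(\rho)\|\T(\sigma))=0$, establishing \eqref{testdiv nonneg}.

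For the equality characterization, one direction is trivial: if $\rho=\sigma$ then $\T(\rho)=\T(\sigma)$ for every $T$, so $D_\alpha^{\test}(\rho\|\sigma)=0$. For the nontrivial direction, suppose $\rho\ne\sigma$. Then $\rho-\sigma$ is a nonzero self-adjoint traceless operator, so its spectral decomposition has at least one strictly positive eigenvalue; let $T$ be the spectral projection of $\rho-\sigma$ onto its positive part. Then $\Tr(\rho-\sigma)T>0$, so the two-point distributions $\T(\rho)=(\Tr\rho T,1-\Tr\rho T)$ and $\T(\sigma)=(\Tr\sigma T,1-\Tr\sigma T)$ are distinct. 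By the strict positivity of the classical R\'enyi divergence (equality in Young's inequality forces $p_i=q_i$ for all $i$ when $\alpha\in(0,1)$; the cases $\alpha\ge 1$ are analogous), $D_\alpha(\T(\rho)\|\T(\sigma))>0$, and hence by definition $D_\alpha^{\test}(\rho\|\sigma)>0$.

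There is essentially no obstacle here — the statement is a direct consequence of the definition combined with the strict positivity of the classical R\'enyi divergence, which is standard. The only thing that requires a tiny bit of care is exhibiting a test $T$ that actually separates $\rho$ and $\sigma$ when they differ, for which the positive spectral projection of $\rho-\sigma$ works uniformly in $\alpha$.
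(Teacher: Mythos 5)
Your proof is correct and follows essentially the same route as the paper: both arguments reduce everything to the strict positivity of the classical R\'enyi divergence of the two-point post-measurement distributions, the only (cosmetic) difference being the choice of witness test in the converse direction --- you use the single Jordan projection onto the positive part of $\rho-\sigma$, while the paper first reduces to $\alpha\in(0,1)$ by monotonicity in $\alpha$ and then runs the argument over all rank-one projections $T=\pr{\psi}$ to conclude $\Tr\rho\pr{\psi}=\Tr\sigma\pr{\psi}$ for every unit vector $\psi$. Both are complete; yours has the minor advantage of not needing the monotonicity-in-$\alpha$ reduction, at the cost of invoking strict positivity of the classical divergence for $\alpha\ge 1$ as well.
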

\begin{proof}
By the monotonicity stated in Lemma \ref{lemma:testdiv monotone in alpha}, it is sufficient to 
prove strict positivity of $D_{\alpha}^{\test}$ for $\alpha\in(0,1)$.
Non-negativity in \eqref{testdiv nonneg} is 
obvious, since $D_{\alpha}(\T(\rho)\|\T(\sigma))=0$ for $T=I$.
The implication $\rho=\sigma\imp D_{\alpha}^{\test}(\rho\|\sigma)= 0$
is again obvious.
Conversely, assume that $D_{\alpha}^{\test}(\rho\|\sigma)= 0$. Then 
$D_{\alpha}\bz\T(\rho)\|\T(\sigma)\jz=0$ for $T=\pr{\psi}$, where $\psi$ can be any unit vector. Using the strict positivity of the classical R\'enyi $\alpha$-divergence, 
(which is a simple consequence of H\"older's inequality),
we get that 
$\Tr\rho\pr{\psi}=\Tr\sigma\pr{\psi}$, $\psi\in\hil$, whence $\rho=\sigma$.
\end{proof}

\begin{rem}
It is obvious from the definitions that for any $\alpha\in(0,+\infty)$, 
both $D_{\alpha}^{\meas}$ and 
$D_{\alpha}^{\test}$ are monotone non-increasing under the application 
of the same positive trace-preserving map on both of their arguments.
\end{rem}

The \ki{regularized measured R\'enyi $\alpha$-divergence} of $\rho$ and $\sigma$ is defined as
\begin{align*}
\oll{D}_{\alpha}^{\meas}(\rho\|\sigma):=
\sup_{n\in\bN}\frac{1}{n}D_{\alpha}^{\meas}(\rho^{\otimes n}\|\sigma^{\otimes n})
=
\lim_{n\to+\infty}\frac{1}{n}D_{\alpha}^{\meas}(\rho^{\otimes n}\|\sigma^{\otimes n}),
\end{align*}
where the equality follows from the easily verifiable super-additivity of 
$n\mapsto D_{\alpha}^{\meas}(\rho^{\otimes n}\|\sigma^{\otimes n})$ due to Fekete's lemma.

For the \ki{test-measured R\'enyi $\alpha$-divergence}, we consider two potentially different regularizations: 
\begin{align*}
\hat D_{\alpha}^{\test}(\rho\|\sigma):=
\sup_{n\in\bN}\frac{1}{n}D_{\alpha}^{\test}(\rho^{\otimes n}\|\sigma^{\otimes n}),
\end{align*}
and
\begin{align}
\oll{D}_{\alpha}^{\test}(\rho\|\sigma)
&:=
\limsup_{n\to+\infty}\frac{1}{n}
D_{\alpha}^{\test}\bz\rho^{\otimes n}\|\sigma^{\otimes n}\jz\nn
\\
&=
\sup_{(T_n)_{n\in\bN}}\left\{\limsup_{n\to+\infty}\frac{1}{n}
D_{\alpha}\bz\T_n(\rho^{\otimes n})\|\T_n(\sigma^{\otimes n})\jz
\right\}\nn\\
&=
\max_{(T_n)_{n\in\bN}}\left\{\limsup_{n\to+\infty}\frac{1}{n}
D_{\alpha}\bz\T_n(\rho^{\otimes n})\|\T_n(\sigma^{\otimes n})\jz
\right\},\label{testdiv def}
\end{align}
where the optimizations are taken over all sequences of tests
$T_n\in\bT(\hil^{\otimes n})$, $n\in\bN$.

Note that, unlike for the measured R\'enyi divergence, it is not obvious from the 
definition whether 
$n\mapsto D_{\alpha}^{\test}(\rho^{\otimes n}\|\sigma^{\otimes n})$ 
is super-additive, and neither is it obvious 
whether the above two notions of regularized
test-measured R\'enyi divergence coincide. 
It is a non-trivial fact that they do for $\alpha\ge 1$, and in fact, more is true:
\begin{align}\label{testdiv equalities}
\oll{D}_{\alpha}^{\test}
=
\hat D_{\alpha}^{\test}
=
\oll{D}_{\alpha}^{\meas}
=
D_{\alpha}\nw,\ds\ds\ds \alpha\ge 1,
\end{align}
as was shown in \cite{HP} for $\alpha=1$, and 
in \cite{MO} for $\alpha>1$.
It might be natural to conjecture that the equalities 
\begin{align*}
\oll{D}_{\alpha}^{\test}
=
\hat D_{\alpha}^{\test}
=
\oll{D}_{\alpha}^{\meas}
\end{align*}
hold also for $\alpha\in(0,1)$. 
(For the relation of $\oll{D}_{\alpha}^{\meas}$ and $D_{\alpha}\nw$ for $\alpha\in(0,1)$,
see Lemma \ref{lemma:Renyi order} below.)
However, we show that this is not the case, as for $\alpha\in(0,1)$ 
we have 
\begin{align*}
\oll{D}_{\alpha}^{\test}(\rho\|\sigma)
<D_{\alpha}(\rho\|\sigma)
\end{align*}
in general, according to Theorem \ref{thm:testdiv standard bounds} below, and similarly, we have 
\begin{align*}
\hat D_{\alpha}^{\test}(\rho\|\sigma)
<
D_{\alpha}(\rho\|\sigma)
\end{align*}
for any two unequal commuting states with equal supports, as we show in Theorem \ref{thm:an2 main}.
Since for commuting states $D_{\alpha}(\rho\|\sigma)=\oll{D}_{\alpha}^{\meas}(\rho\|\sigma)$,
this yields that neither 
$\oll{D}_{\alpha}^{\test}$ nor 
$\hat D_{\alpha}^{\test}$ is equal to 
$\oll{D}_{\alpha}^{\meas}$ for $\alpha\in(0,1)$.
In particular, we obtain the following:
\begin{cor}
For $\alpha\in(0,1)$, $\oll{D}_{\alpha}^{\test}$ and $\hat D_{\alpha}^{\test}$ are 
not quantum R\'enyi $\alpha$-divergences in the sense of Definition \ref{def:qRenyi}.
\end{cor}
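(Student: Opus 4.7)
The plan is to show that for both $\oll{D}_{\alpha}^{\test}$ and $\hat D_{\alpha}^{\test}$, the second defining property of a quantum R\'enyi $\alpha$-divergence in Definition \ref{def:qRenyi} fails: they do not reduce to the classical R\'enyi $\alpha$-divergence when evaluated on pairs of commuting states. (Isometric invariance, the other required property, is inherited trivially from the classical R\'enyi divergence and the way isometries act on tests, so there is nothing to exploit there.)

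Concretely, I would first note that when $\rho=\sum_\omega\rho(\omega)\pr{\omega}$ and $\sigma=\sum_\omega\sigma(\omega)\pr{\omega}$ commute as in \eqref{commuting pair}, the standard quantum R\'enyi divergence $D_{\alpha}(\rho\|\sigma)$ defined in \eqref{standard Renyi def} coincides with the classical R\'enyi $\alpha$-divergence of the probability densities $\rho,\sigma$ on $\Omega$. Hence, in order to refute the classical-reduction property of a candidate quantity $F$, it is enough to exhibit one commuting pair with $F(\rho\|\sigma)\ne D_{\alpha}(\rho\|\sigma)$.

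For $\oll{D}_{\alpha}^{\test}$, I would invoke Theorem \ref{thm:testdiv standard bounds} announced in the introduction: it provides, for every $\alpha\in(0,1)$, explicit (in fact generic) commuting pairs $\rho,\sigma$, e.g.\ distinct probability distributions on a finite set with at least three points, for which $\oll{D}_{\alpha}^{\test}(\rho\|\sigma)<D_{\alpha}(\rho\|\sigma)$. Combined with the observation in the previous paragraph, this already witnesses that the classical-reduction property is violated. For $\hat D_{\alpha}^{\test}$, I would likewise invoke Theorem \ref{thm:an2 main}, which asserts the strict inequality $\hat D_{\alpha}^{\test}(\rho\|\sigma)<D_{\alpha}(\rho\|\sigma)$ for every pair of unequal commuting states with equal support — such pairs obviously exist (take any two distinct strictly positive probability distributions on $\{1,2\}$).

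There is no serious obstacle here: the corollary is essentially a bookkeeping consequence of Theorems \ref{thm:testdiv standard bounds} and \ref{thm:an2 main}. All the analytic work is carried in those two theorems; the only thing to be careful about in the write-up is that the witness states supplied by each theorem really do commute (so that the comparison is with the classical R\'enyi divergence), which is automatic since both theorems are stated — or can be applied — in the commuting (classical) regime.
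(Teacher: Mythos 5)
Your overall strategy is exactly the paper's: for commuting states every measured-type quantity that \emph{is} a quantum R\'enyi $\alpha$-divergence must equal the classical $D_{\alpha}$, so it suffices to exhibit commuting pairs where $\oll{D}_{\alpha}^{\test}$ and $\hat D_{\alpha}^{\test}$ fall strictly below $D_{\alpha}$, which is what Theorems \ref{thm:testdiv standard bounds} and \ref{thm:an2 main} provide. However, your witness for $\hat D_{\alpha}^{\test}$ is wrong. Theorem \ref{thm:an2 main} does \emph{not} assert $\hat D_{\alpha}^{\test}(\rho\|\sigma)<D_{\alpha}(\rho\|\sigma)$ for \emph{all} unequal commuting states with equal supports; it asserts it precisely when condition \ref{Dhat equality3} fails, i.e.\ when $\Omega$ cannot be partitioned into two classes on each of which $\rho/\sigma$ is constant. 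For two distinct strictly positive distributions on $\{1,2\}$ that condition is \emph{always} satisfied (take $\Omega_0=\{1\}$), so by Corollary \ref{cor:classical equality} — and explicitly by the last sentence of Theorem \ref{thm:different reg} — one has $\hat D_{\alpha}^{\test}(\rho\|\sigma)=D_{\alpha}(\rho\|\sigma)$ for all unequal commuting qubit states with full support. Your proposed pair therefore witnesses nothing for $\hat D_{\alpha}^{\test}$.

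The fix is minor: take $|\Omega|\ge 3$ and a pair for which the ratios $\rho(\omega)/\sigma(\omega)$ assume at least three distinct values (e.g.\ $\rho=(1/2,1/4,1/4)$, $\sigma=(1/4,1/2,1/4)$, with ratios $2,1/2,1$), so that \ref{Dhat equality3} fails and Theorem \ref{thm:an2 main} gives $\hat D_{\alpha}^{\test}(\rho\|\sigma)<D_{\alpha}(\rho\|\sigma)$ for every $\alpha\in(0,1)$; since such states have equal supports and are unequal, Lemma \ref{lemma:Dalpha monotone} also guarantees condition \ref{strict ineq4} of Theorem \ref{thm:testdiv standard bounds}, so the same pair simultaneously yields $\oll{D}_{\alpha}^{\test}(\rho\|\sigma)<D_{\alpha}(\rho\|\sigma)$. (On that last point your phrasing for $\oll{D}_{\alpha}^{\test}$ is also slightly loose: not every pair of distinct distributions on three points works — e.g.\ $\rho$ a normalized restriction of $\sigma$ to a subset violates condition \ref{strict ineq4} — but generic pairs, and in particular any unequal pair with equal supports, do.)
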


Moreover, we show in Theorem \ref{thm:different reg} that for certain commuting states
\begin{align*}
\oll{D}_{\alpha}^{\test}(\rho\|\sigma)<\hat D^{\test}_{\alpha}(\rho\|\sigma)
\end{align*}
holds. Clearly, for these states
\begin{align*}
n\mapsto D_{\alpha}^{\test}(\rho^{\otimes n}\|\sigma^{\otimes n})\ds\ds\text{is not superadditive}. 
\end{align*}

\begin{rem}
One may argue that for the study of the strong converse exponent of asymptotic binary state
discrimination, the conceptually most natural quantum R\'enyi $\alpha$-divergence is 
$\oll{D}_{\alpha}^{\test}$. To see this, note that for any sequence of tests
$(T_n)_{n\in\bN}$, and any $\alpha>1$,
\begin{align*}
D_{\alpha}^{\test}(\rho^{\otimes n}\|\sigma^{\otimes n})
&\ge
\frac{1}{\alpha-1}\log\bz
(\Tr\rho^{\otimes n}T_n)^{\alpha}(\Tr\sigma^{\otimes n}T_n)^{1-\alpha}
+
(1-\Tr\rho^{\otimes n}T_n)^{\alpha}(1-\Tr\sigma^{\otimes n}T_n)^{1-\alpha}\jz\\
&\ge
\frac{\alpha}{\alpha-1}\log\Tr\rho^{\otimes n}T_n
-
\log \Tr\sigma^{\otimes n}T_n,
\end{align*}
where the first inequality is by definition, and the second one is trivial. This yields that 
\begin{align}
\scli_r(\rho\|\sigma)&:=
\inf\left\{\liminf_{n\to+\infty}-\frac{1}{n}\log\Tr\rho^{\otimes n}T_n:\,
T_n\in\bT(\hil^{\otimes n}),\,n\in\bN,\s\liminf_{n\to+\infty}-\frac{1}{n}\log\Tr\sigma^{\otimes n}T_n\ge r
\right\}\nn\\
&\ge
\sup_{\alpha>1}\frac{\alpha-1}{\alpha}\left[r-\oll{D}_{\alpha}^{\test}(\rho\|\sigma)\right],
\label{sc optimality}
\end{align}
which is the optimality part of the theorem for the strong converse exponent. 
The above argument is essentially the by now standard argument given in  
\cite{N}, except that instead of using the monotonicity of some family of quantum R\'enyi 
$\alpha$-divergences as in \cite{N} and \cite{MO}, the inequalities follow immediately by 
definition. The attainability part of the strong converse theorem given in \cite{MO} tells that 
equality holds in \eqref{sc optimality}; moreover, we have 
$\oll{D}_{\alpha}^{\test}(\rho\|\sigma)=D_{\alpha}\nw(\rho\|\sigma)$, $\alpha>1$.
\end{rem}
\bigskip

In the rest of this section we discuss known relations between the different quantum 
R\'enyi divergences introduced above, and also some new inequalities.

We will often benefit from the following simple observation:

\begin{lemma}\label{lemma:alpha swap2}
For any $\rho,\sigma\in\S(\hil)$, 
\begin{align}\label{alpha swap}
(1-\alpha)D_{\alpha}(\rho\|\sigma)=\alpha D_{1-\alpha}(\sigma\|\rho),\ds\ds\ds
\alpha\in(0,1).
\end{align}
If $\divv_{\alpha}(\rho\|\sigma)$ denotes any of 
$D_{\alpha}^{\meas}(\rho\|\sigma)$,
$\oll{D}_{\alpha}^{\meas}(\rho\|\sigma)$,
$D_{\alpha}^{\test}(\rho\|\sigma)$,
$\oll{D}_{\alpha}^{\test}(\rho\|\sigma)$,
$\hat D_{\alpha}^{\test}(\rho\|\sigma)$, then 
\begin{align}\label{alpha swap2}
(1-\alpha)\divv_{\alpha}(\rho\|\sigma)=\alpha \divv_{1-\alpha}(\sigma\|\rho),\ds\ds\ds\alpha\in(0,1).
\end{align}
\end{lemma}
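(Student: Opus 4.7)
The plan is to derive the classical identity \eqref{alpha swap} directly from the definition of $D_\alpha$ for probability distributions, and then bootstrap it to all five operational quantities in \eqref{alpha swap2} by using that each is an optimization of $D_\alpha$ applied to post-measurement distributions, combined with the fact that multiplication by the positive scalars $1-\alpha$ and $\alpha$ commutes with $\sup$, $\max$, and $\limsup$.

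For the classical identity, I would simply observe that for probability distributions $p,q$ on $\Omega$ and $\alpha\in(0,1)$,
\begin{align*}
(1-\alpha)D_{\alpha}(p\|q)
=-\log\sum_{\omega}p(\omega)^{\alpha}q(\omega)^{1-\alpha}
=\alpha\cdot\frac{1}{(1-\alpha)-1}\log\sum_{\omega}q(\omega)^{1-\alpha}p(\omega)^{\alpha}
=\alpha D_{1-\alpha}(q\|p),
\end{align*}
so \eqref{alpha swap} follows, noting that we are in the regime $\alpha,1-\alpha\in(0,1)$ where no support issues arise.

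For \eqref{alpha swap2}, the key remark is that for any POVM $M$ on $\hil$, $\M(\rho)$ and $\M(\sigma)$ are probability distributions, and for any test $T\in\bT(\hil)$, $\T(\rho)$ and $\T(\sigma)$ are $2$-point probability distributions. Hence \eqref{alpha swap} applies pointwise in $M$ (resp.\ in $T$). Since $1-\alpha>0$ and $\alpha>0$, scaling a supremum by a positive constant is the same as taking the supremum of the scaled quantities, and this gives
\begin{align*}
(1-\alpha)D_{\alpha}^{\meas}(\rho\|\sigma)
=\sup_{M}(1-\alpha)D_{\alpha}(\M(\rho)\|\M(\sigma))
=\sup_{M}\alpha\,D_{1-\alpha}(\M(\sigma)\|\M(\rho))
=\alpha D_{1-\alpha}^{\meas}(\sigma\|\rho),
\end{align*}
and the analogous calculation with $\povm(\hil,[2])$ in place of all POVMs gives the identity for $D_{\alpha}^{\test}$.

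For the regularized versions, I would apply the already-established identity to the pair $(\rho^{\otimes n},\sigma^{\otimes n})$ at each $n$, divide by $n$, and take the appropriate limit superior or supremum in $n$; both operations are compatible with multiplication by a positive constant. Concretely,
\begin{align*}
(1-\alpha)\oll{D}_{\alpha}^{\test}(\rho\|\sigma)
&=\limsup_{n}\frac{1-\alpha}{n}D_{\alpha}^{\test}(\rho^{\otimes n}\|\sigma^{\otimes n})
=\limsup_{n}\frac{\alpha}{n}D_{1-\alpha}^{\test}(\sigma^{\otimes n}\|\rho^{\otimes n})
=\alpha\oll{D}_{1-\alpha}^{\test}(\sigma\|\rho),
\end{align*}
and the same argument with $\sup_{n}$ instead of $\limsup_{n}$ handles $\hat D_{\alpha}^{\test}$, while the same argument for unrestricted measurements handles $\oll{D}_{\alpha}^{\meas}$. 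There is no real obstacle here; the one thing to be mindful of is confirming that, for $\alpha\in(0,1)$, both $\alpha$ and $1-\alpha$ lie in $(0,1)$, so every invocation of $D_\beta$ on the right-hand side is with $\beta\in(0,1)$, where the classical identity is unconditional (no support hypothesis needed) and the operational divergences are automatically finite on states with $\rho\not\perp\sigma$.
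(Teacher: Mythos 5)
Your proposal is correct and follows essentially the same route as the paper: verify the identity directly from the definition (the same one-line computation with $\Tr\rho^{\alpha}\sigma^{1-\alpha}$ in place of $\sum_{\omega}p(\omega)^{\alpha}q(\omega)^{1-\alpha}$ gives the quantum case of \eqref{alpha swap}), and then push it through the suprema and limits defining the measured, test-measured, and regularized quantities, since multiplication by the positive constants $\alpha$ and $1-\alpha$ commutes with $\sup$, $\max$, and $\limsup$.
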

\begin{proof}
The identity in \eqref{alpha swap} 
is well-known, and follows immediately from the definition in \eqref{standard Renyi def}.
In particular, this identity holds for the classical R\'enyi divergence, which yields
\eqref{alpha swap2}.
\end{proof}

\begin{lemma}\label{lemma:Renyi order}
For any $\rho,\sigma\in\S(\hil)$ and $\alpha\in(0,+\infty)$,

\begin{align}
&\ds
\ds\ds\ds\ds\ds\ds\ds\ds\ds D_{\alpha}^{\test}(\rho\|\sigma) \s\le\s D_{\alpha}^{\meas}(\rho\|\sigma)
\label{Renyi order0}\\
&\ds\ds\ds\ds\ds\ds\ds\ds\ds\ds\ds\ds\vertleq\ds\ds\ds\ds\ds\ds\ds\ds\ds\vertleq\nn\\
&\oll{D}_{\alpha}^{\test}(\rho\|\sigma)
\s\le\s
\hat D_{\alpha}^{\test}(\rho\|\sigma)
\s\le\s
\oll{D}_{\alpha}^{\meas}(\rho\|\sigma)\label{Renyi order1}\\
&\ds\ds\ds\ds\ds\ds\ds\ds\ds\ds\ds\ds\ds\ds\ds\ds\ds\ds=
\begin{cases}
D_{\alpha}\nw(\rho\|\sigma),&\alpha\in[1/2,+\infty),\\
\frac{\alpha}{1-\alpha}D_{1-\alpha}\nw(\sigma\|\rho),&\alpha\in(0,1/2]
\end{cases}
\label{Renyi order3}\\
&\ds\ds\ds\ds\ds\ds\ds\ds\ds\ds\ds\ds\ds\ds\ds\ds\ds\ds\le
D_{\alpha}(\rho\|\sigma).\label{Renyi order4}
\end{align}
\end{lemma}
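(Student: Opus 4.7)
My plan is to assemble the inequalities from the definitions and reduce the one nontrivial closed-form equality to the known pinching-based results. First, the horizontal inequality \eqref{Renyi order0} is just a restatement of \eqref{test-meas ineq}. The two vertical inequalities $D_{\alpha}^{\test}(\rho\|\sigma)\le \hat D_{\alpha}^{\test}(\rho\|\sigma)$ and $D_{\alpha}^{\meas}(\rho\|\sigma)\le \oll{D}_{\alpha}^{\meas}(\rho\|\sigma)$ follow at once by taking $n=1$ in the defining suprema over $n\in\bN$ (using super-additivity of $n\mapsto D_{\alpha}^{\meas}(\rho^{\otimes n}\|\sigma^{\otimes n})$, noted in the text, so that the supremum equals the limit). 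In the middle row, $\oll{D}_{\alpha}^{\test}\le\hat D_{\alpha}^{\test}$ holds because $\limsup_n\le\sup_n$, and $\hat D_{\alpha}^{\test}\le\oll{D}_{\alpha}^{\meas}$ is obtained by applying \eqref{test-meas ineq} to the pair $\rho^{\otimes n},\sigma^{\otimes n}$ and passing to $\sup_n\tfrac{1}{n}(\cdot)$.

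The key nontrivial ingredient is the identity $\oll{D}_{\alpha}^{\meas}(\rho\|\sigma)=D_{\alpha}\nw(\rho\|\sigma)$ for $\alpha\in[1/2,+\infty)$. I will simply invoke it from \cite{HP} (for $\alpha=1$) and \cite{MO} (for the remaining $\alpha$), which proceed via asymptotic pinching; this is the sole substantive step. For $\alpha\in(0,1/2]$, I will apply Lemma \ref{lemma:alpha swap2} to $\oll{D}_{\alpha}^{\meas}$ to obtain
\begin{align*}
(1-\alpha)\oll{D}_{\alpha}^{\meas}(\rho\|\sigma)=\alpha\,\oll{D}_{1-\alpha}^{\meas}(\sigma\|\rho),
\end{align*}
and since $1-\alpha\in[1/2,1]$, the $\alpha\ge 1/2$ formula gives $\oll{D}_{1-\alpha}^{\meas}(\sigma\|\rho)=D_{1-\alpha}\nw(\sigma\|\rho)$, which yields the second case of \eqref{Renyi order3}.

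For the final inequality \eqref{Renyi order4}, everything reduces to the well-known sandwiched-versus-Petz bound $D_{\alpha}\nw\le D_{\alpha}$. For $\alpha\ge 1/2$ this is the standard inequality (a direct consequence of the Araki--Lieb--Thirring trace inequality). For $\alpha\le 1/2$, combining the sandwiched-versus-Petz inequality for the exponent $1-\alpha$ with the Petz-type swap identity $(1-\alpha)D_{\alpha}(\rho\|\sigma)=\alpha D_{1-\alpha}(\sigma\|\rho)$ (immediate from $\Tr\rho^{\alpha}\sigma^{1-\alpha}=\Tr\sigma^{1-\alpha}\rho^{\alpha}$) closes the argument. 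The only real obstacle is the pinching-based equality on $[1/2,+\infty)$; everything else is definitional bookkeeping combined with Lemma \ref{lemma:alpha swap2}.
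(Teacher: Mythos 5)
Your proposal is correct and follows essentially the same route as the paper: the inequalities are definitional, the closed form on $[1/2,+\infty)$ is imported from the pinching-based literature, and the $\alpha\le 1/2$ cases of \eqref{Renyi order3} and \eqref{Renyi order4} are obtained via the swap identities of Lemma \ref{lemma:alpha swap2} together with the Araki--Lieb--Thirring bound. The only discrepancy is bibliographic: the paper attributes the case $\alpha\in[1/2,1)$ of \eqref{Renyi order3} to \cite{HT14} rather than \cite{MO}, which covers $\alpha>1$.
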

\begin{proof}
The inequalities in \eqref{Renyi order0} and \eqref{Renyi order1} are obvious by definition. 
The equality in \eqref{Renyi order3} was given in \cite{HP} for $\alpha=1$, 
in \cite{MO} for $\alpha>1$, in \cite{HT14} for $\alpha\in[1/2,1)$, and the case 
$\alpha\in(0,1/2)$ follows from the latter due to \eqref{alpha swap} and \eqref{alpha swap2}.
The inequality in \eqref{Renyi order4} follows from the Araki-Lieb-Thirring inequality, as 
observed in \cite{Renyi_new,WWY}
(where the case $\alpha\in[1/2,+\infty)$ was treated; the case 
$\alpha\in(0,1/2)$ follows again
from this due to \eqref{alpha swap} and \eqref{alpha swap2}).
\end{proof}

\begin{rem}\label{rem:strict positivity}
Lemmas \ref{lemma:Renyi order} and \ref{lemma:testdiv strictly positive} imply that 
all the quantities dominating $D_{\alpha}^{\test}(\rho\|\sigma)$ in 
\eqref{Renyi order0}--\eqref{Renyi order4} are 
non-negative, and they are strictly positive when $\rho\ne\sigma$.
This does not imply the strict positivity of
$\oll{D}_{\alpha}^{\test}$; we will prove that in Corollary \ref{cor:regularized strictly positive}.
\end{rem}

\begin{rem}\label{rem:Renyi order strict}
In the chain of inequalities
\begin{align}\label{strict inequalities1}
D_{\alpha}^{\meas}(\rho\|\sigma)
\le
\oll{D}_{\alpha}^{\meas}(\rho\|\sigma)
\le
D_{\alpha}(\rho\|\sigma)
\end{align}
above, both inequalities are equalities when $\rho$ and $\sigma$ commute, i.e., 
\begin{align*}
\rho\sigma=\sigma\rho\ds\ds\imp\ds\ds
D_{\alpha}^{\meas}(\rho\|\sigma)
=
\oll{D}_{\alpha}^{\meas}(\rho\|\sigma)
=
D_{\alpha}(\rho\|\sigma),\ds\ds\alpha\in(0,+\infty),
\end{align*}
as one can easily verify.
Hence, in this case we will only use the notation 
$D_{\alpha}(\rho\|\sigma)$ to denote the unique quantum R\'enyi $\alpha$-divergence
of $\rho$ and $\sigma$.
Note that it also coincides with $D_{\alpha}\nw(\rho\|\sigma)$.

On the other hand, the second inequality in \eqref{strict inequalities1} is strict whenever
$\rho$ and $\sigma$ are non-commuting states
and $D_{\alpha}(\rho\|\sigma)<+\infty$;
this follows from 
\cite[Proposition 2.1]{Hiai-ALT}, taking also into account 
\eqref{alpha swap} and \eqref{alpha swap2} when $\alpha\in(0,1/2)$.
\end{rem}
\medskip

Strict inequality in the first inequality in \eqref{strict inequalities1} was proved in 
\cite[Theorem 6]{BFT_variational} for $\alpha\in(1/2,+\infty)$ 
and non-commuting invertible states. In the next proposition  
we give a slightly stronger statement, with a proof that is different from the one in \cite{BFT_variational}. 

\begin{prop}\label{prop:single-copy equality meas}
For any $\rho,\sigma\in\S(\hil)$,
\begin{align}
D_{\alpha}^{\meas}(\rho\|\sigma)\le
\begin{cases}
\oll{D}_{\alpha}^{\meas}(\rho\|\sigma),&\alpha\in(0,1/2)\cup(1/2,1),\\
D_{\alpha}(\rho\|\sigma),&\alpha=1/2.
\end{cases}
\label{classical sh strict ineq1 meas}
\end{align}
Moreover, if one of the following holds, then $\rho$ and $\sigma$ commute:
\begin{enumerate}
\item $\rho^0\le\sigma^0$ and
$D_{\alpha}^{\meas}(\rho\|\sigma)=\oll{D}_{\alpha}^{\meas}(\rho\|\sigma)$ for some $\alpha\in(1/2,1)$;
\item $\rho^0\ge\sigma^0$ and
$D_{\alpha}^{\meas}(\rho\|\sigma)=\oll{D}_{\alpha}^{\meas}(\rho\|\sigma)$ for some $\alpha\in(0,1/2)$;
\item $\rho^0\le\sigma^0$ or $\rho^0\ge\sigma^0$, and
$D_{1/2}^{\meas}(\rho\|\sigma)=D_{1/2}(\rho\|\sigma)$.
\end{enumerate}
\end{prop}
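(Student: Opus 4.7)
The inequality is almost immediate: $D_{\alpha}^{\meas}(\rho\|\sigma) \le \oll{D}_{\alpha}^{\meas}(\rho\|\sigma)$ is just the $n=1$ term in the supremum defining the regularization, and at $\alpha = 1/2$ the weaker bound $D_{1/2}^{\meas}(\rho\|\sigma) \le D_{1/2}(\rho\|\sigma)$ is obtained by chaining with $\oll{D}_{1/2}^{\meas} = D_{1/2}\nw \le D_{1/2}$ from Lemma~\ref{lemma:Renyi order}. The substance of the proposition lies in claims (i)--(iii), which I would establish in the order (iii), (i), (ii), reducing (ii) to (i) via the swap identity \eqref{alpha swap2}.

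For (iii) assuming $\rho^0 \le \sigma^0$, the hypothesis forces the chain $D_{1/2}^{\meas} \le D_{1/2}\nw \le D_{1/2}$ to collapse, so $\|\rho^{1/2}\sigma^{1/2}\|_1 = \Tr\rho^{1/2}\sigma^{1/2}$. The right-hand side is a non-negative real, and saturation of $|\Tr X| \le \|X\|_1$ is only possible for $X \ge 0$; thus $\rho^{1/2}\sigma^{1/2}$ is positive semi-definite, in particular self-adjoint, so $\rho^{1/2}$ and $\sigma^{1/2}$ commute and $[\rho,\sigma]=0$. The alternative hypothesis $\rho^0 \ge \sigma^0$ is handled by \eqref{alpha swap2} at $\alpha=1/2$, which makes both $D_{1/2}^{\meas}$ and $D_{1/2}$ symmetric in their arguments.

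For (i) with $\alpha\in(1/2,1)$ and $\rho^0 \le \sigma^0$, Lemma~\ref{lemma:Renyi order} identifies $\oll{D}_{\alpha}^{\meas} = D_{\alpha}\nw$, so the hypothesis reads $D_{\alpha}^{\meas}(\rho\|\sigma) = D_{\alpha}\nw(\rho\|\sigma)$. Let $M$ be an optimal rank-one PVM in some orthonormal basis $\{\ket{\psi_j}\}$ attaining the supremum in \eqref{meas def}. Since $M(\rho)$ and $M(\sigma)$ are classical, $D_{\alpha}\nw(M(\rho)\|M(\sigma)) = D_{\alpha}(M(\rho)\|M(\sigma)) = D_{\alpha}^{\meas}(\rho\|\sigma) = D_{\alpha}\nw(\rho\|\sigma)$, so $M$ saturates the data-processing inequality for the sandwiched R\'enyi divergence at this $\alpha$. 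By Jen\v{c}ov\'a's equality theorem (valid on $(1/2,1)\cup(1,\infty)$), this implies Petz sufficiency of $M$, i.e., $\sigma^{it}\rho^{-it} \in \A$ for all $t\in\bR$, where $\A \subseteq \B(\hil)$ is the $^*$-subalgebra of operators diagonal in $\{\ket{\psi_j}\}$. Extending this entire family analytically and evaluating at $t=-i$ yields $\sigma\rho^{-1} \in \A$ (on the joint support); but $\sigma\rho^{-1}$ is similar to the positive $\rho^{-1/2}\sigma\rho^{-1/2}$, so it has positive real spectrum, and a diagonal operator with real spectrum is self-adjoint. Hence $\sigma\rho^{-1} = \rho^{-1}\sigma$, giving $[\rho,\sigma]=0$. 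Case (ii) now follows from (i) via \eqref{alpha swap2}. The principal obstacles are the invocation of Jen\v{c}ov\'a's equality theorem and the careful handling of supports in the analytic continuation when $\rho$ is not of full rank.
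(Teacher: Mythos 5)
Your overall architecture matches the paper's: the inequality comes from Lemma \ref{lemma:Renyi order}, case (ii) is reduced to (i) via the swap identity, and Jen\v{c}ov\'a's equality theorem is the entry point for (i). Where you genuinely diverge is in the two finishing moves. For (iii) your argument is more elementary than the paper's: the paper reduces to the equality $\Tr\M(\rho)^{1/2}\M(\sigma)^{1/2}=\Tr\rho^{1/2}\sigma^{1/2}$ for an optimal rank-one PVM and then invokes a modification of the argument of \cite{Petz2003}, whereas you go straight to $\norm{\rho^{1/2}\sigma^{1/2}}_1=\Tr\rho^{1/2}\sigma^{1/2}$ and the equality case of $|\Tr X|\le\norm{X}_1$. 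This is correct (under the standing assumption $\Tr\rho\sigma>0$ the trace is strictly positive, so $X=c|X|$ with $|c|=1$ forces $c=1$, whence $\rho^{1/2}\sigma^{1/2}\ge0$ is self-adjoint), and it has the bonus of needing no support condition at all.

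For (i), after Jen\v{c}ov\'a the paper applies the Petz recovery map together with a double use of monotonicity of $D_{1/2}$ to transfer the equality down to $\alpha=1/2$, and then concludes as in (iii); you instead pass to the cocycle characterization of sufficiency ($\sigma^{it}\rho^{-it}\in\A$) and analytically continue to $t=-i$. This works cleanly for invertible states, but the step from ``the Petz map recovers both states'' to ``the unitary cocycle lies in the diagonal subalgebra $\A$,'' and the evaluation of the continuation at $t=-i$, are precisely where the non-faithful case bites, and you flag this as an obstacle without closing it; as written, (i) is therefore only proved for full-rank states. (Your endgame itself actually survives pseudo-inverses: $\sigma\rho^{-1}=\rho^{-1}\sigma$ with $\rho^{-1}$ the pseudo-inverse still forces $\rho^0\sigma(I-\rho^0)=0$ and hence commutativity; the fragile step is obtaining $\sigma\rho^{-1}\in\A$ in the first place.) The cheapest repair is the paper's move: recoverability plus monotonicity of $D_{1/2}$ in both directions gives $D_{1/2}(\M(\rho)\|\M(\sigma))=D_{1/2}(\rho\|\sigma)$, which combined with $D_{1/2}(\M(\rho)\|\M(\sigma))\le D_{1/2}^{\meas}(\rho\|\sigma)=D_{1/2}\nw(\rho\|\sigma)\le D_{1/2}(\rho\|\sigma)$ yields $D_{1/2}\nw(\rho\|\sigma)=D_{1/2}(\rho\|\sigma)$, after which your own trace-norm argument for (iii) finishes with no cocycle and no support caveats.
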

\begin{proof}
The inequality in 
\eqref{classical sh strict ineq1 meas} is a special case of 
the inequalities given in Lemma \ref{lemma:Renyi order}.

Assume that $\rho^0\le\sigma^0$, and 
\begin{align*}
&\alpha=1/2\ds\ds\ds\s\ds\text{and}\ds\ds
D_{\alpha}^{\meas}(\rho\|\sigma)
=
D_{\alpha}(\rho\|\sigma),\ds\ds\text{or}\\
&\alpha\in(1/2,1)\ds\ds\text{and}\ds\ds
D_{\alpha}^{\meas}(\rho\|\sigma)
=
\oll{D}_{\alpha}^{\meas}(\rho\|\sigma)=
D_{\alpha}\nw(\rho\|\sigma),
\end{align*}
where the last equality is due to \cite{HT14}.
According to \eqref{meas def}, in each case above there exists a measurement
$M_{\alpha}\in\pvm_1(\hil)$ such that 
with the CPTP map
\begin{align*}
\M_{\alpha}(X):=\sum_{i=1}^{\dim\hil}(\Tr M_{\alpha}X)\pr{i},\ds\ds X\in\B(\hil),
\end{align*}
where $(\ket{i})_{i=1}^{\dim\hil}$ is an orthonormal basis in $\bC^{\dim\hil}$,
we have 
\begin{align*}
D_{\alpha}\bz\M_{\alpha}(\rho)\|\M_{\alpha}(\sigma)\jz
=
\begin{cases}
D_{\alpha}(\rho\|\sigma),&\alpha=1/2,\\
D_{\alpha}\nw(\rho\|\sigma),&\alpha\in(1/2,1).
\end{cases}
\end{align*}
In the case $\alpha\in(1/2,1)$, 
the results of \cite{Jencova_NCLpII} combined with the above imply that
\begin{align*}
\bz\M_{\alpha}\jz_{\sigma}^*(\M_{\alpha}(\rho))=\rho,\ds\ds
\bz\M_{\alpha}\jz_{\sigma}^*(\M_{\alpha}(\sigma))=\sigma,
\end{align*}
where 
\begin{align*}
\bz\M_{\alpha}\jz_{\sigma}^*(Y):=
\sigma^{1/2}
\M_{\alpha}^*\bz (\M_{\alpha}(\sigma))^{-1/2}Y(\M_{\alpha}(\sigma))^{-1/2}\jz
\sigma^{1/2},\ds\ds\ds Y\in\B(\bC^{\dim\hil}),
\end{align*}
is a CPTP map, called the \ki{Petz map} \cite{Petz1988}.
A double application of the monotonicity of $D_{1/2}$ under CPTP maps \cite{P86} then yields
\begin{align*}
D_{1/2}\bz \M_{\alpha}(\rho)\|\M_{\alpha}(\sigma)\jz
=
D_{1/2}\bz\rho\|\sigma\jz.
\end{align*}
Using a suitable modification of the proof of \cite[Lemma 4.1]{Petz2003}, where 
the invertibility of the states is replaced with the condition 
$D_1^0\le D_2^0$ (equivalent to our assumption that 
$\rho^0\le\sigma^0$), one then obtains that 
$\rho$ and $\sigma$ commute.
(See also \cite[Sec.~7.2]{Hiai_fdiv_Springer} in a more general setting.)

This completes the proof of the assertion about the equality 
in \eqref{classical sh strict ineq1 meas}
in the case $\alpha\in[1/2,1)$.
The case $\alpha\in(0,1/2]$ follows from this immediately due to 
\eqref{alpha swap} and 
the identities in 
Lemma \ref{lemma:alpha swap2}.
\end{proof}

\begin{rem}
For $\alpha=1/2$ we have $D_{1/2}^{\meas}(\rho\|\sigma)=-2\log F(\rho,\sigma)=
D_{1/2}\nw(\rho\|\sigma)=\oll{D}_{1/2}^{\meas}(\rho\|\sigma)$, where 
$F(\rho,\sigma):=\Tr\bz\rho^{1/2}\sigma\rho^{1/2}\jz^{1/2}$ is the \ki{fidelity} of $\rho$ and 
$\sigma$, the first equality is explained, e.g., in \cite[Sec.~9]{NC}, the second equality is by 
definition, and the last equality 
(which is a special case of the equality in \eqref{Renyi order3})
follows from the above and the monotonicity 
of the fidelity under CPTP maps and its 
multiplicativity under tensor products.
In particular, $D_{1/2}^{\meas}(\rho\|\sigma)=\oll{D}_{1/2}^{\meas}(\rho\|\sigma)$ always holds, 
and does not imply the commutativity of $\rho$ and $\sigma$, which is why the 
$\alpha=1/2$ case is treated separately in Proposition \ref{prop:single-copy equality meas}.
\end{rem}

\begin{ex}\label{prop:pure states}
Let $\psi,\phi\in\hil$ be unit vectors that are neither parallel nor perpendicular. 
For any $\alpha\in(0,1)$,
\begin{align}
D_{\alpha}\bz\pr{\psi}\|\pr{\phi}\jz&=\frac{1}{\alpha-1}\log|\inner{\psi}{\phi}|^2,
\label{pure standard}\\
D_{\alpha}\nw\bz\pr{\psi}\|\pr{\phi}\jz&=\frac{\alpha}{\alpha-1}\log|\inner{\psi}{\phi}|^2,
\label{pure sandwiched}\\
D_{\alpha}^{\test}\bz\pr{\psi}\|\pr{\phi}\jz&=
D_{\alpha}^{\meas}\bz\pr{\psi}\|\pr{\phi}\jz\nn\\
&=
\oll{D}_{\alpha}^{\meas}\bz\pr{\psi}\|\pr{\phi}\jz=
\oll{D}_{\alpha}^{\test}\bz\pr{\psi}\|\pr{\phi}\jz=
\hat D_{\alpha}^{\test}\bz\pr{\psi}\|\pr{\phi}\jz\nn\\
&=
\begin{cases}
-\log|\inner{\psi}{\phi}|^2,&\alpha\in(0,1/2],\\
\frac{\alpha}{\alpha-1}\log|\inner{\psi}{\phi}|^2,&\alpha\in[1/2,1).
\end{cases}
\label{pure measured}
\end{align}
In particular, with $\divv_{\alpha}$ as in Lemma \ref{lemma:alpha swap2},
\begin{align}
D_{\alpha}\nw\bz\pr{\psi}\|\pr{\phi}\jz
&=
\alpha D_{\alpha}\bz\pr{\psi}\|\pr{\phi}\jz\nn\\
&<
(1-\alpha) D_{\alpha}\bz\pr{\psi}\|\pr{\phi}\jz\nn\\
&=
\divv_{\alpha}\bz\pr{\psi}\|\pr{\phi}\jz\nn\\
&<
D_{\alpha}\bz\pr{\psi}\|\pr{\phi}\jz
\label{pure example}
\end{align}
for every $\alpha\in(0,1/2)$, and
\begin{align}\label{pure example2}
\divv_{\alpha}\bz\pr{\psi}\|\pr{\phi}\jz
=
D_{\alpha}\nw\bz\pr{\psi}\|\pr{\phi}\jz
=
\alpha D_{\alpha}\bz\pr{\psi}\|\pr{\phi}\jz
<
D_{\alpha}\bz\pr{\psi}\|\pr{\phi}\jz
\end{align}
for every $\alpha\in(1/2,1)$.

Indeed, the above are easy to see as follows.
First, it follows by a straightforward computation that for any $\sigma\in\S(\hil)$, 
\begin{align*}
D_{\alpha}\nw\bz\pr{\psi}\|\sigma\jz
=
\frac{\alpha}{\alpha-1}\log\inner{\psi}{\sigma^{\frac{1-\alpha}{\alpha}}\psi},
\ds\ds\ds
D_{\alpha}\bz\pr{\psi}\|\sigma\jz
=
\frac{1}{\alpha-1}\log\inner{\psi}{\sigma^{1-\alpha}\psi},
\end{align*}
from which the equalities in \eqref{pure standard} 
and \eqref{pure sandwiched} follow immediately.

We have 
\begin{align}
-\log|\inner{\psi}{\phi}|^2\le D_{\alpha}^{\test}\bz\pr{\psi}\|\pr{\phi}\jz,
\ds\ds\ds
-\log|\inner{\psi}{\phi}|^2\le \oll{D}_{\alpha}^{\test}\bz\pr{\psi}\|\pr{\phi}\jz,
\label{pure proof}
\end{align}
where the first inequality follows by choosing the test 
$T=\pr{\psi}$, and the second inequality by choosing the test sequence
$T_n=\pr{\psi}^{\otimes n}$, $n\in\bN$.
For $\alpha\in(0,1/2]$ we have 
\begin{align}
\oll{D}_{\alpha}^{\meas}(\rho\|\sigma)=\frac{\alpha}{1-\alpha}D_{1-\alpha}\nw\bz\pr{\phi}\|\pr{\psi}\jz
=
\frac{\alpha}{1-\alpha}\frac{1-\alpha}{-\alpha}\log|\inner{\psi}{\phi}|^2
=
-\log|\inner{\psi}{\phi}|^2,\label{pure proof2}
\end{align}
where the first equality is by \eqref{Renyi order3}, and the
second equality is due to \eqref{pure sandwiched}.
Combining \eqref{pure proof} and \eqref{pure proof2} with the inequalities in 
Lemma \ref{lemma:Renyi order} yields the equalities in 
\eqref{pure measured} for $\alpha\in(0,1/2]$, and the equalities for $\alpha\in[1/2,1)$ follow from this 
due to \eqref{alpha swap2}.
The statements in \eqref{pure example}--\eqref{pure example2} are obvious from 
\eqref{pure standard}--\eqref{pure measured}.
\end{ex}

\begin{rem}
As it was shown in \cite[Theorem 7]{BFT_variational}, for any 
$\alpha\in(0,1/2)$ and any non-commuting invertible states $\rho,\sigma$,
the strict inequality
$D_{\alpha}\nw(\rho\|\sigma)<D_{\alpha}^{\meas}(\rho\|\sigma)$ holds.
In view of Example \ref{prop:pure states}, it is natural to ask whether in this setting we also have 
$D_{\alpha}\nw(\rho\|\sigma)<D_{\alpha}^{\test}(\rho\|\sigma)$,
$D_{\alpha}\nw(\rho\|\sigma)<\oll{D}_{\alpha}^{\test}(\rho\|\sigma)$, or
$D_{\alpha}\nw(\rho\|\sigma)<\hat D_{\alpha}^{\test}(\rho\|\sigma)$.
It is also a question whether such strict inequalities may be obtained without any conditions on the 
supports. Note, for instance, that the strict inequalities in Example 
\ref{prop:pure states} are not covered by the results of 
\cite{BFT_variational}, since the states are not invertible.
\end{rem}

\begin{rem}\label{rem:regularized CPTP mon}
It is obvious from the definitions that 
for any $\alpha\in(0,+\infty)$, 
$\oll{D}_{\alpha}^{\meas}$,
$\oll{D}_{\alpha}^{\test}$, and $\hat D_{\alpha}^{\test}$ are all monotone non-increasing 
under the application of the same completely positive trace-preserving 
(CPTP) map on both of their 
arguments.
\end{rem}

\subsection{Analysis of $\oll{D}_{\alpha}^{\test}(\rho\|\sigma)$}
\label{sec:an1}

According to \eqref{testdiv equalities}, both versions of the regularized test-measured 
R\'enyi divergence are the same and coincide with the sandwiched R\'enyi divergence
for $\alpha>1$, and hence for the rest we focus on the case
$\alpha\in(0,1)$. Our key technical result regarding $\oll{D}_{\alpha}^{\test}(\rho\|\sigma)$ is the following:

\begin{thm}\label{thm:testdiv}
For any $\rho,\sigma\in\S(\hil)$ and any
$\alpha\in(0,1)$,
\begin{align}
\oll{D}_{\alpha}^{\test}(\rho\|\sigma)
&=
\lim_{n\to+\infty}\frac{1}{n}D_{\alpha}^{\test}\bz\rho^{\otimes n}\|\sigma^{\otimes n}\jz\label{testdiv limit}\\
&=
\sup_{r>0}\min\left\{r,\frac{\alpha}{1-\alpha}H_r(\rho\|\sigma)\right\}\\
&=\sup\left\{r\ge 0:\,H_r(\rho\|\sigma)\ge \frac{1-\alpha}{\alpha}r \right\}.
\label{testdiv formula}
\end{align}
\end{thm}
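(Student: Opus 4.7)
The plan is to reduce the asymptotic analysis of $D_\alpha^{\test}$ to the trade-off in binary hypothesis testing between the two types of error, which is controlled by the Hoeffding divergences via Lemma~\ref{lemma:Hoeffding}. For any test $T_n \in \bT(\hil^{\otimes n})$ set $a_n := \Tr\rho^{\otimes n}T_n$ and $b_n := \Tr\sigma^{\otimes n}T_n$, so that
\begin{equation*}
\tfrac{1}{n}D_\alpha\bz\T_n(\rho^{\otimes n})\|\T_n(\sigma^{\otimes n})\jz = \tfrac{1}{n(\alpha-1)}\log\bz a_n^\alpha b_n^{1-\alpha} + (1-a_n)^\alpha(1-b_n)^{1-\alpha}\jz,
\end{equation*}
a quantity invariant under $T_n \leftrightarrow I - T_n$. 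The equivalence of the last two suprema in \eqref{testdiv formula} is a short monotonicity argument: $r\mapsto H_r(\rho\|\sigma)$ is monotone decreasing while $r\mapsto \tfrac{1-\alpha}{\alpha}r$ is increasing, so the set $\{r\ge 0:H_r \ge \tfrac{1-\alpha}{\alpha}r\}$ is an interval of the form $[0,r^*]$ or $[0,r^*)$, and $\sup_r\min\{r,\tfrac{\alpha}{1-\alpha}H_r\} = r^*$.

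For the upper bound, since $\tfrac{1}{\alpha-1}<0$ and the two summands inside the log are non-negative, bounding the sum below by either summand gives
\begin{equation*}
\tfrac{1}{n}D_\alpha\bz\T_n(\rho^{\otimes n})\|\T_n(\sigma^{\otimes n})\jz \le \min\left\{\tfrac{\alpha}{1-\alpha}\tfrac{-\log a_n}{n} + \tfrac{-\log b_n}{n},\ \tfrac{\alpha}{1-\alpha}\tfrac{-\log(1-a_n)}{n} + \tfrac{-\log(1-b_n)}{n}\right\}.
\end{equation*}
Given an arbitrary test sequence, pass to a subsequence $(n_k)$ realizing the $\limsup$ of the LHS and along which all four rates $-\tfrac{1}{n_k}\log a_{n_k},\ -\tfrac{1}{n_k}\log b_{n_k},\ -\tfrac{1}{n_k}\log(1-a_{n_k}),\ -\tfrac{1}{n_k}\log(1-b_{n_k})$ converge in $[0,+\infty]$ to values $a_\infty,b_\infty,a'_\infty,b'_\infty$. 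Since $\max\{a_n,1-a_n\}\ge 1/2$, after a further subsequence one may assume $\min\{a_\infty,a'_\infty\}=0$, and similarly $\min\{b_\infty,b'_\infty\}=0$; the $T\leftrightarrow I-T$ symmetry lets us further assume $a_\infty=0$. If $b_\infty=0$ the bound is trivially non-positive; otherwise $b'_\infty=0$ and the bound reduces to $\min\{b_\infty,\tfrac{\alpha}{1-\alpha}a'_\infty\}$. For $b_\infty>D_0(\rho\|\sigma)$ the converse Hoeffding bound \eqref{Hoeffding converse} applied to $(T_{n_k})$ (with $r=b_\infty-\delta\to b_\infty$ and continuity of $H$ on $(D_0,+\infty)$) yields $a'_\infty\le H_{b_\infty}(\rho\|\sigma)$, giving bound $\min\{b_\infty,\tfrac{\alpha}{1-\alpha}H_{b_\infty}\}$; for $b_\infty\in(0,D_0]$ the trivial estimate $b_\infty\le D_0 \le \sup_{r>0}\min\{r,\tfrac{\alpha}{1-\alpha}H_r\}$ applies, since $H_r=+\infty$ for $r<D_0$ by \eqref{Hr infty}. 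This proves $\oll{D}_\alpha^{\test}(\rho\|\sigma)\le\sup_{r>0}\min\{r,\tfrac{\alpha}{1-\alpha}H_r(\rho\|\sigma)\}$.

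For the lower bound, fix $r>0$ and $\alpha'\in(0,1)$ and insert the Hoeffding test $T_{n,r,\alpha'}$ of \eqref{Hoeffding attainability}. The bounds $b_n\le e^{-nr}$ and $1-a_n\le\exp\bz{-n\tfrac{\alpha'-1}{\alpha'}[r-D_{\alpha'}(\rho\|\sigma)]}\jz$, combined with the trivial $a_n,1-b_n\le 1$, give
\begin{equation*}
a_n^\alpha b_n^{1-\alpha} + (1-a_n)^\alpha(1-b_n)^{1-\alpha} \le 2\exp\bz{-n\min\{(1-\alpha)r,\ \alpha\tfrac{\alpha'-1}{\alpha'}[r-D_{\alpha'}(\rho\|\sigma)]\}}\jz,
\end{equation*}
whence $\tfrac{1}{n}D_\alpha(\T_n(\rho^{\otimes n})\|\T_n(\sigma^{\otimes n})) \ge \min\{r,\tfrac{\alpha}{1-\alpha}\cdot\tfrac{\alpha'-1}{\alpha'}[r-D_{\alpha'}(\rho\|\sigma)]\} - \tfrac{\log 2}{n(1-\alpha)}$. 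Letting $n\to\infty$, optimizing $\alpha'\in(0,1)$ to saturate the sup defining $H_r(\rho\|\sigma)$ in \eqref{Hoeffding def}, and finally taking $\sup$ over $r>0$, gives $\liminf_n \tfrac{1}{n}D_\alpha^{\test}(\rho^{\otimes n}\|\sigma^{\otimes n}) \ge \sup_{r>0}\min\{r,\tfrac{\alpha}{1-\alpha}H_r(\rho\|\sigma)\}$. Combined with the upper bound this shows that the $\limsup$ defining $\oll{D}_\alpha^{\test}$ is actually a limit, proving \eqref{testdiv limit}--\eqref{testdiv formula}. The principal technical hurdle is the upper bound case analysis: keeping track of four potentially infinite exponential rates, exploiting the $T\leftrightarrow I-T$ symmetry to collapse the pigeonhole cases to a single tractable configuration, and handling the borderline regime $b_\infty\in(0,D_0]$ where the converse Hoeffding bound does not apply directly.
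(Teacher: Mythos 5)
Your proposal is correct and follows essentially the same route as the paper: the lower bound comes from plugging the Hoeffding tests of \eqref{Hoeffding attainability} into the two-outcome R\'enyi divergence and optimizing, and the upper bound combines the $T\leftrightarrow I-T$ symmetrization with Nagaoka's converse \eqref{Hoeffding converse} along a subsequence, treating the regime at or below $D_0(\rho\|\sigma)$ separately via \eqref{Hr infty}. The only cosmetic difference is that you track four exponential rates and pigeonhole, whereas the paper normalizes the test sequence to $\tilde T_n$ with $\Tr\sigma^{\otimes n}\tilde T_n\le 1/2$ before invoking the converse; the substance is identical.
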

\begin{proof}
If $\rho\perp\sigma$ then all the terms in \eqref{testdiv limit}--\eqref{testdiv formula} are equal to 
$+\infty$ and the assertion holds trivially. Hence, for the rest we assume that 
$\rho\not\perp\sigma$. 
On the other hand, if $\rho=\sigma$ then 
all the terms in \eqref{testdiv limit}--\eqref{testdiv formula} are equal to 
$0$ and again the assertion holds trivially. Hence, for the rest we also assume that 
$\rho\ne\sigma$.

Let $r>0$ and $H_r:=H_r(\rho\|\sigma)$. If $r<D(\rho\|\sigma)$ 
(so that $H_r>0$), let $H\in(0,H_r)$, otherwise let $H:=0$.
By Lemma \ref{lemma:Hoeffding},
there exists a sequence of tests $(T_n)_{n\in\bN}$
such that 
\begin{align*}
\Tr\sigma^{\otimes n}T_n\le e^{-nr},\ds\ds\ds
\Tr\rho^{\otimes n}(I-T_n)\le e^{-nH},\ds\ds\ds n\in\bN.
\end{align*}
Along any such sequence, and for any $\alpha\in(0,1)$ and any $n\in\bN$, 
we have 
\begin{align}
&\frac{1}{n}D_{\alpha}\bz\T_n(\rho^{\otimes n})\|\T_n(\sigma^{\otimes n})\jz\nn\\
&\ds=
\frac{1}{n}
\frac{1}{\alpha-1}\log\Bigg(
\underbrace{(\Tr\rho^{\otimes n}T_n)^{\alpha}}_{\le 1}
\underbrace{(\Tr\sigma^{\otimes n}T_n)^{1-\alpha}}_{\le e^{-nr(1-\alpha)}}
+
\underbrace{(\Tr\rho^{\otimes n}(I-T_n))^{\alpha}}_{\le e^{-nH\alpha}}
\underbrace{(\Tr\sigma^{\otimes n}(I-T_n))^{1-\alpha}}_{\le 1}
\Bigg)\nn\\
&\ds\ge
\frac{1}{\alpha-1}\frac{1}{n}
\log\bz e^{-nr(1-\alpha)}+e^{-nH\alpha}\jz\nn\\
&\ds\ge
\frac{1}{\alpha-1}\max\{-(1-\alpha)r,-\alpha H\}-\frac{1}{n}\frac{\log 2}{1-\alpha}\nn\\
&\ds=
\min\left\{r,\frac{\alpha}{1-\alpha}H\right\}-\frac{1}{n}\frac{\log 2}{1-\alpha}.
\label{testdiv proof1}
\end{align}
In particular,
\begin{align}
\liminf_{n\to+\infty}\frac{1}{n}D_{\alpha}\bz\T_n(\rho^{\otimes n})\|\T_n(\sigma^{\otimes n})\jz
\ge
\min\left\{r,\frac{\alpha}{1-\alpha}H\right\}.\label{testdiv proof0}
\end{align}
Thus, for any $\alpha\in(0,1)$, 
\begin{align}
\oll{D}_{\alpha}^{\test}(\rho\|\sigma)
&=
\limsup_{n\to+\infty}\frac{1}{n}D_{\alpha}^{\test}(\rho^{\otimes n}\|\sigma^{\otimes n})\nn\\
&\ge
\liminf_{n\to+\infty}\frac{1}{n}D_{\alpha}^{\test}(\rho^{\otimes n}\|\sigma^{\otimes n})
\label{testdiv att-1}\\
&\ge
\max\left\{0,\sup_{0<r<D(\rho\|\sigma)}\sup_{0\le H<H_r}\min\left\{r,\frac{\alpha}{1-\alpha}H\right\}\right\}\label{testdiv att-2}\\
&=
\sup_{r>0}\min\left\{r,\frac{\alpha}{1-\alpha}H_r\right\}
=\sup\left\{r\ge 0:\,H_r\ge \frac{1-\alpha}{\alpha}r \right\},\label{testdiv att}
\end{align}
where the first equality is by definition, the first inequality is obvious, the second inequality follows by optimizing 
\eqref{testdiv proof0} over the choices made at the beginning of the proof, and the rest are obvious due to \eqref{Hr null}.
Our aim is to show that the inequalities in 
\eqref{testdiv att-1}--\eqref{testdiv att} are in fact equalities.

First, note that by \eqref{Hr infty} 
we can rewrite \eqref{testdiv att} as
\begin{align}\label{testdiv att2}
\oll{D}_{\alpha}^{\test}(\rho\|\sigma)
\ge
\sup_{r>0}\min\left\{r,\frac{\alpha}{1-\alpha}H_r\right\}
=
\max\left\{D_0(\rho\|\sigma),\sup_{r>D_0(\rho\|\sigma)}\min\left\{r,\frac{\alpha}{1-\alpha}H_r\right\}\right\}.
\end{align}
(Note that the lower bound $\oll{D}_{\alpha}^{\test}(\rho\|\sigma)\ge D_0(\rho\|\sigma)$ can also be obtained from the definition of 
$\oll{D}_{\alpha}^{\test}(\rho\|\sigma)$ by choosing the test sequence $T_n=(\rho^0)^{\otimes n}$, $n\in\bN$.)

For the rest we fix an $\alpha\in(0,1)$.
Let $(T_n)_{n\in\bN}$ be a test sequence attaining the maximum in 
\eqref{testdiv def}, and 
let $(n_k)_{k\in\bN}$ be a strictly increasing sequence in $\bN$ such that 
\begin{align*}
\oll{D}_{\alpha}^{\test}(\rho\|\sigma)=
\lim_{k\to+\infty}\frac{1}{n_k}
D_{\alpha}\bz\T_{n_k}(\rho^{\otimes n_k})\|\T_{n_k}(\sigma^{\otimes n_k})\jz.
\end{align*}
For every $n\in\bN$, we have 
\begin{align*}
&(\Tr\rho^{\otimes n}T_n)^{\alpha}(\Tr\sigma^{\otimes n}T_n)^{1-\alpha}
+
(\Tr\rho^{\otimes n}(I-T_n))^{\alpha}(\Tr\sigma^{\otimes n}(I-T_n))^{1-\alpha}\nonumber\\
&\ds\ge
\min\left\{\Tr\sigma^{\otimes n}T_n,\Tr\sigma^{\otimes n}(I-T_n)\right\}^{1-\alpha}\underbrace{\bz(\Tr\rho^{\otimes n}T_n)^{\alpha}+(\Tr\rho^{\otimes n}(I-T_n))^{\alpha}\jz}_{\ge 1},
\end{align*}
whence
\begin{align*}
\oll{D}_{\alpha}^{\test}(\rho\|\sigma)
&\le
\liminf_{k\to+\infty}
-\frac{1}{n_k}\log\min\left\{\Tr\sigma^{\otimes n_k}T_{n_k},\Tr\sigma^{\otimes n_k}(I-T_{n_k})\right\}.
\end{align*}
Let us define a new test sequence 
$\tilde T_n:=T_n$ if $\Tr \sigma^{\otimes n}T_n\le 1/2$, and 
$\tilde T_n:=I-T_n$ otherwise. Then 
\begin{align}
\oll{D}_{\alpha}^{\test}(\rho\|\sigma)
&=
\lim_{k\to+\infty}\frac{1}{n_k}
D_{\alpha}\bz\tilde\T_{n_k}(\rho^{\otimes n_k})\|\tilde\T_{n_k}(\sigma^{\otimes n_k})\jz
\le
\liminf_{k\to+\infty}-\frac{1}{n_k}\log\Tr\sigma^{\otimes n_k}\tilde T_{n_k}
=:r.\label{asymptotic proof2}
\end{align}

Note that $r<D_0(\rho\|\sigma)$ is impossible due the lower bound in 
\eqref{testdiv att2}. 
If $r=D_0(\rho\|\sigma)$ then 
\eqref{testdiv att2} and \eqref{asymptotic proof2} yield that
the inequality in \eqref{testdiv att2} is an equality, 
and hence the inequalities in \eqref{testdiv att-1}--\eqref{testdiv att-2}
are also equalities, proving the theorem.
Hence, for the rest we assume that 
$r>D_0(\rho\|\sigma)$.
By Lemma \ref{lemma:Hoeffding},
\begin{align}\label{Hconverse application}
\limsup_{k\to+\infty}-\frac{1}{n_k}\log\Tr\rho^{\otimes n_k}(I-\tilde T_{n_k})
\le H_r.
\end{align}
Note that 
\begin{align}
D_{\alpha}\bz\tilde\T_{n_k}(\rho^{\otimes n_k})\|\tilde\T_{n_k}(\sigma^{\otimes n_k})\jz
&\le
\frac{1}{\alpha-1}\log \bz
\bz\Tr\rho^{\otimes n_k}(I-\tilde T_{n_k})\jz^{\alpha}
\bz\Tr\sigma^{\otimes n_k}(I-\tilde T_{n_k})\jz^{1-\alpha}
\jz\nn\\
&=
\frac{\alpha}{\alpha-1}\log\Tr\rho^{\otimes n_k}(I-\tilde T_{n_k})
-\log\Tr\sigma^{\otimes n_k}(I-\tilde T_{n_k}).
\label{asymptotic proof1}
\end{align}
By assumption, $r>0$, and hence
$\lim_{k\to+\infty}\Tr\sigma^{\otimes n_k} (I-\tilde T_{n_k})=1$,
according to \eqref{asymptotic proof2}.
Using also  \eqref{Hconverse application} and \eqref{asymptotic proof1}, we get
\begin{align}
\oll{D}_{\alpha}^{\test}(\rho\|\sigma)
&=
\lim_{k\to+\infty}\frac{1}{n_k}
D_{\alpha}\bz\tilde\T_{n_k}(\rho^{\otimes n_k})\|\tilde\T_{n_k}(\sigma^{\otimes n_k})\jz
\nn\\
&\le
\frac{\alpha}{1-\alpha}
\limsup_{k\to+\infty}-\frac{1}{n_k}\log\Tr\rho^{\otimes n_k}(I-\tilde T_{n_k})
\le \frac{\alpha}{1-\alpha}H_r.\label{asymptotic proof3}
\end{align}

By \eqref{asymptotic proof2} and \eqref{asymptotic proof3}, $\oll{D}_{\alpha}^{\test}(\rho\|\sigma)\le\min\{r,\frac{\alpha}{1-\alpha}H_r\}$,
and thus \eqref{testdiv att2} holds as an equality, whence the inequalities in 
\eqref{testdiv att-1}--\eqref{testdiv att} hold as equalities. 
\end{proof}

\begin{rem}
From \eqref{testdiv formula} and \eqref{testdiv proof1} we get immediately that for any 
$\rho,\sigma\in\S(\hil)$, any $\alpha\in(0,1)$, and any $n\in\bN$,
\begin{align*}
\frac{1}{n}D_{\alpha}^{\test}(\rho^{\otimes n}\|\sigma^{\otimes n})+\frac{1}{n}\frac{\log 2}{1-\alpha}
\ge
\oll{D}_{\alpha}^{\test}(\rho\|\sigma);
\end{align*}
in particular,
\begin{align*}
D_{\alpha}^{\test}(\rho\|\sigma)+\frac{\log 2}{1-\alpha}
\ge
\oll{D}_{\alpha}^{\test}(\rho\|\sigma).
\end{align*}
\end{rem}

\begin{rem}\label{rem:decreasing}
Assume that $\rho\ne \sigma$, so that $D(\rho\|\sigma)>0$, and let $H_r:=H_r(\rho\|\sigma)$.
It is clear from the properties listed in Lemma \ref{lemma:phi properties} that
$H_r/r=+\infty$ for all $r<D_0(\rho\|\sigma)$, and 
$r\mapsto H_r/r$ is a strictly decreasing continuous function on 
$(D_0(\rho\|\sigma),D(\rho\|\sigma))$, with 
\begin{align*}
\lim_{r\nearrow D(\rho\|\sigma)}\frac{H_r}{r}=0,\ds\ds\ds
\lim_{r\searrow D_0(\rho\|\sigma)}\frac{H_r}{r}=
\begin{cases}
\frac{\psi'(0)}{\psi(0)}+1,&D_0(\rho\|\sigma)>0,\\
+\infty,&\text{otherwise}.
\end{cases}
\end{align*}
From this it follows that for any $\alpha\in(0,1)$ such that 
$\frac{1-\alpha}{\alpha}< \lim_{r\searrow D_0(\rho\|\sigma)}\frac{H_r}{r}$ there exists a 
unique $r_{\alpha}\in(D_0(\rho\|\sigma),D(\rho\|\sigma))$
such that $\frac{H_{r_{\alpha}}}{r_{\alpha}}=\frac{1-\alpha}{\alpha}$,
and for this $r_{\alpha}$,
\begin{align}\label{testdiv}
\oll{D}_{\alpha}^{\test}(\rho\|\sigma)
=
r_{\alpha}.
\end{align}

Although \eqref{testdiv} still involves in fact continuum many optimizations 
(one for each $H_r$), and the solution of the non-trivial equation 
$H_r=\frac{1-\alpha}{\alpha}r$, it is still remarkable in the sense that it 
is single-letter, i.e., does not involve a limit. In particular, numerical computations for small dimensions are feasible.
\end{rem}

\begin{rem}
A different expression for $\oll{D}_{\alpha}^{\test}(\rho\|\sigma)$ can be obtained by exploiting a connection with a recently studied hypothesis testing problem in \cite{Salzmann_Datta21}; 
this yields
\begin{align}
\oll{D}_{\alpha}^{\test}(\rho\|\sigma)
=
\alpha\sup_{0< t< 1}\frac{(t-1)D_t(\rho\|\sigma)}{t(2\alpha-1)-\alpha},\ds\ds\ds\alpha\in(0,1).
\label{alt repr4}
\end{align}
We give the details in Appendix \ref{sec:SD}.
\end{rem}

The main result of this section is the following:

\begin{thm}\label{thm:testdiv standard bounds}
Let $\rho,\sigma\in\S(\hil)$. For every $\alpha\in(0,1)$,
\begin{align}\label{testdiv standard bounds}
\half D_{\alpha}(\rho\|\sigma)\le \oll{D}^{\test}_{\alpha}(\rho\|\sigma)\le
D_{\alpha}(\rho\|\sigma).
\end{align}
Moreover, the following are equivalent:
\begin{enumerate}
\item\label{strict ineq1}
The second inequality in \eqref{testdiv standard bounds}
is strict for every $\alpha\in(0,1)$.

\item\label{strict ineq2}
The second inequality in \eqref{testdiv standard bounds}
is strict for some $\alpha\in(0,1)$.

\item\label{strict ineq3}
$\frac{1}{n}D_{\alpha}^{\test}(\rho^{\otimes n}\|\sigma^{\otimes n})
<D_{\alpha}(\rho\|\sigma)$ for some $\alpha\in(0,1)$ and some $n\in\bN$.

\item\label{strict ineq4}
$\beta\mapsto D_{\beta}(\rho\|\sigma)$ 
and
$\beta\mapsto D_{\beta}(\sigma\|\rho)$ 
are strictly increasing on $(0,1)$.
\end{enumerate}
\end{thm}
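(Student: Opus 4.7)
The upper bound in \eqref{testdiv standard bounds} is immediate from Lemma~\ref{lemma:Renyi order}. For the lower bound, I combine Theorem~\ref{thm:testdiv} with the variational formula $H_r(\rho\|\sigma)=\sup_{\beta\in(0,1)}\frac{\beta-1}{\beta}(r-D_\beta(\rho\|\sigma))$ from \eqref{Hoeffding def}: evaluating at $\beta=\alpha$ gives $H_r\ge\frac{1-\alpha}{\alpha}(D_\alpha-r)$, and the choice $r=D_\alpha(\rho\|\sigma)/2$ then satisfies $H_r\ge\frac{1-\alpha}{\alpha}r$. By \eqref{testdiv formula} this gives $\oll{D}^{\test}_\alpha(\rho\|\sigma)\ge D_\alpha(\rho\|\sigma)/2$.

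To prove the equivalence of \ref{strict ineq1}--\ref{strict ineq4} I run the cycle \ref{strict ineq1}$\Rightarrow$\ref{strict ineq2}$\Rightarrow$\ref{strict ineq3}$\Rightarrow$\ref{strict ineq4}$\Rightarrow$\ref{strict ineq1}. The implication \ref{strict ineq1}$\Rightarrow$\ref{strict ineq2} is trivial; \ref{strict ineq2}$\Rightarrow$\ref{strict ineq3} follows from the identity $\oll{D}^{\test}_\alpha=\lim_n\frac{1}{n} D^{\test}_\alpha(\rho^{\otimes n}\|\sigma^{\otimes n})$ in \eqref{testdiv limit}, which forces the sequence to drop strictly below $D_\alpha$ at some finite $n$.

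For \ref{strict ineq3}$\Rightarrow$\ref{strict ineq4} I argue by contrapositive. If \ref{strict ineq4} fails, Lemma~\ref{lemma:Dalpha monotone}, applied to either $(\rho,\sigma)$ or $(\sigma,\rho)$, produces one of two structural forms. In the first case one has $\rho=\kappa\sum_i s_iP_i$ with $P_i\le Q_i$ and $\sigma=\sum_i s_iQ_i$ the spectral decomposition, so that $\Tr\sigma\rho^0=\sum_i s_i\Tr P_i=1/\kappa$. The test $T=(\rho^0)^{\otimes n}$ then satisfies $\Tr\rho^{\otimes n}T=1$ and $\Tr\sigma^{\otimes n}T=\kappa^{-n}$, yielding $D_\alpha(\T_n(\rho^{\otimes n})\|\T_n(\sigma^{\otimes n}))=n\log\kappa=nD_\alpha(\rho\|\sigma)$ for every $n\in\bN$ and every $\alpha\in(0,1)$, contradicting \ref{strict ineq3}. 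The dual case is handled by the complementary test $T=I-(\sigma^0)^{\otimes n}$ via a parallel computation.

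For \ref{strict ineq4}$\Rightarrow$\ref{strict ineq1} I split on commutativity. When $\rho,\sigma$ do not commute, Remark~\ref{rem:Renyi order strict} already gives $\oll{D}^{\meas}_\alpha(\rho\|\sigma)<D_\alpha(\rho\|\sigma)$ strictly (using $D_\alpha<+\infty$, which holds for $\alpha\in(0,1)$ under $\rho\not\perp\sigma$), so that $\oll{D}^{\test}_\alpha\le\oll{D}^{\meas}_\alpha<D_\alpha$ by Lemma~\ref{lemma:Renyi order}. If $\rho,\sigma$ commute then $\oll{D}^{\meas}_\alpha=D_\alpha$ and I instead apply Theorem~\ref{thm:testdiv} together with Remark~\ref{rem:decreasing}, which reduces $\oll{D}^{\test}_\alpha<D_\alpha$ to the strict inequality $H_{D_\alpha}<\frac{1-\alpha}{\alpha}D_\alpha$. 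Through the tangent parametrization $H_r=r-\psi'(\beta^*)$ with $r=\beta^*\psi'(\beta^*)-\psi(\beta^*)$, a hypothetical equality $\oll{D}^{\test}_\alpha=D_\alpha$ would force some $\beta^*\in(0,1)$ to satisfy simultaneously $\psi'(\beta^*)=\frac{2\alpha-1}{\alpha}D_\alpha$ and $\beta^*\psi'(\beta^*)-\psi(\beta^*)=D_\alpha$, equivalent to the line $L_\alpha(x):=\frac{2\alpha-1}{\alpha}D_\alpha\,x-D_\alpha$ being a support line of $\psi$ tangent at $\beta^*$. The main obstacle is to rule this out under \ref{strict ineq4}: the non-affineness of $\psi$ on $[0,1]$ yields, via strict chord inequalities, that $L_\alpha$ lies strictly below $\psi$ at $x=0,\alpha,1$, and the strict convexity of $\psi$ at interior points precludes any tangency of $L_\alpha$ to $\psi$ inside $(0,1)$, giving the sought contradiction.
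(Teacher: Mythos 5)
Your treatment of the two bounds in \eqref{testdiv standard bounds}, of \ref{strict ineq1}$\imp$\ref{strict ineq2}$\imp$\ref{strict ineq3}, and of the contrapositive of \ref{strict ineq3}$\imp$\ref{strict ineq4} matches the paper's proof (the paper handles the dual case of \ref{strict ineq4} by the symmetry of Lemma \ref{lemma:alpha swap2} rather than by a second explicit test, but your direct computation with $T=I-(\sigma^0)^{\otimes n}$ is equally valid). Your shortcut for non-commuting states in \ref{strict ineq4}$\imp$\ref{strict ineq1}, via Remark \ref{rem:Renyi order strict} and $\oll{D}_{\alpha}^{\test}\le\oll{D}_{\alpha}^{\meas}$, is also a legitimate alternative to the paper, which treats all states uniformly.

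The gap is in the commuting case of \ref{strict ineq4}$\imp$\ref{strict ineq1} --- exactly the case the theorem is really about. You correctly reduce the problem to showing $\psi(\beta)>L_\alpha(\beta)$ for all $\beta$ (equivalently $H_{D_\alpha}<\frac{1-\alpha}{\alpha}D_\alpha$), and your endpoint checks are right and do locate where hypothesis \ref{strict ineq4} enters: $L_\alpha(0)<\psi(0)$ is $D_0(\rho\|\sigma)<D_\alpha(\rho\|\sigma)$ and $L_\alpha(1)<\psi(1)$ is $D_0(\sigma\|\rho)<D_{1-\alpha}(\sigma\|\rho)$. But the closing step --- ``strict convexity of $\psi$ precludes any tangency of $L_\alpha$ inside $(0,1)$'' --- is not a valid deduction: a strictly convex function is tangent to \emph{some} line at every interior point, and if $L_\alpha$ were a support line touching $\psi$ at some $\beta^*$, strict convexity would only give $\psi>L_\alpha$ away from $\beta^*$, which is perfectly consistent with your three strict inequalities at $0,\alpha,1$. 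More bluntly, a convex function can be strictly positive at finitely many prescribed points and still be negative between them (e.g.\ $10(x-0.7)^2-0.1$ on $[\half,1]$), so positivity of $\psi-L_\alpha$ at three points does not yield positivity on $[0,1]$. (A secondary issue: the supremum defining $H_{D_\alpha}$ need not be attained at an interior critical point $\beta^*$ at all; it may only be approached as $\beta\to1^-$.) What is needed, and what the paper supplies, is a pointwise bound at \emph{every} $\beta$: for $\beta\in(0,\alpha]$ one has $\psi(\beta)=(\beta-1)D_\beta\ge(\beta-1)D_\alpha$, whence $\psi(\beta)-L_\alpha(\beta)\ge\frac{1-\alpha}{\alpha}D_\alpha\,\beta>0$; for $\beta\in[\alpha,1)$ the convexity inequality for the triple $0<\alpha<\beta$ (equivalently, the monotonicity of $\gamma\mapsto D_\gamma(\sigma\|\rho)$, which is how the paper phrases it via \eqref{alpha swap}) gives $\psi(\beta)-L_\alpha(\beta)\ge(1-\beta)D_\alpha+\frac{\beta-\alpha}{\alpha}D_0(\rho\|\sigma)\ge(1-\beta)D_\alpha>0$; only the limit $\beta\to 1$ degenerates, and there your $x=1$ check (the strict increase of $D_\gamma(\sigma\|\rho)$) takes over. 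Compactness of $[\alpha,1]$ then upgrades this to the uniform strict inequality on the supremum. Without an estimate of this kind at every interior $\beta$, the tangency you set out to exclude has not been excluded.
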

\begin{proof}
The second inequality in \eqref{testdiv standard bounds} was already stated in 
Lemma \ref{lemma:Renyi order}. 
Note that by definition,  
\begin{align*}
\frac{\alpha}{1-\alpha}H_r(\rho\|\sigma)\ge -r+D_{\alpha}(\rho\|\sigma),
\end{align*}
for every $\alpha\in(0,1)$ and every $r\in\bR$.
Choosing $r=D_{\alpha}(\rho\|\sigma)/2$ yields, 
by \eqref{testdiv formula}, 
the first inequality in \eqref{testdiv standard bounds}.

The implications \ref{strict ineq1}$\imp$\ref{strict ineq2}$\imp$\ref{strict ineq3} 
are trivial. 
We prove the implication \ref{strict ineq3}$\imp$\ref{strict ineq4} by contraposition.
Note that by Lemma \ref{lemma:alpha swap2},
$\frac{1}{n}D_{\alpha}^{\test}(\rho^{\otimes n}\|\sigma^{\otimes n})
<D_{\alpha}(\rho\|\sigma)$ for some $\alpha\in(0,1)$ and some $n\in\bN$
if and only if 
$\frac{1}{n}D_{1-\alpha}^{\test}(\sigma^{\otimes n}\|\rho^{\otimes n})
<D_{1-\alpha}(\sigma\|\rho)$ for the same $\alpha\in(0,1)$ and $n\in\bN$.
Hence, the roles
of $\rho$ and $\sigma$ are symmetric, and therefore we may assume without loss of generality that
it is $\beta\mapsto D_{\beta}(\rho\|\sigma)$ that is not strictly increasing on $(0,1)$. Then we can write
$\rho=\sum_{i=1}^m\kappa s_i P_i$ as in 
Lemma \ref{lemma:Dalpha monotone}. It follows immediately that 
\begin{align*}
0=\log\Tr\rho=\log\kappa+\log\sum_{i=1}^m s_i\Tr P_i
=\log\kappa+\log\Tr\rho^0\sigma,
\end{align*}
and for every $n\in\bN$ and $\alpha\in(0,1)$, the test $T_n:=(\rho^0)^{\otimes n}$ gives
\begin{align*}
D_{\alpha}(\T_n(\rho^{\otimes n})\|\T_n(\sigma^{\otimes n}))
&=
\frac{1}{\alpha-1}\log\bz
\bz\Tr(\rho^0)^{\otimes n}\rho^{\otimes n}\jz^{\alpha}
\bz\Tr(\rho^0)^{\otimes n}\sigma^{\otimes n}\jz^{1-\alpha}\jz\\
&=
-n\log\Tr\rho^0\sigma=n\log\kappa=nD_{\alpha}(\rho|\sigma).
\end{align*}
Thus,
$\frac{1}{n}D_{\alpha}^{\test}(\rho^{\otimes n}\|\sigma^{\otimes n})
=D_{\alpha}(\rho\|\sigma)$ for every $\alpha\in(0,1)$ and every $n\in\bN$,
as required.

Finally, we assume \ref{strict ineq4} and prove \ref{strict ineq1}.
Note that the assumption that $\beta\mapsto D_\beta(\rho\|\sigma)$ is strictly increasing on $(0,1)$ is in fact equivalent to the strict inequality 
$D_0(\rho\|\sigma)<D_{\beta}(\rho\|\sigma)$ for all (equivalently, some) $\beta\in(0,1)$;
see Lemma \ref{lemma:Dalpha monotone}. Note that it also implies that $\rho\ne\sigma$,
whence
\begin{align*}
D_{\alpha}(\rho\|\sigma)>0,\ds\ds\ds\alpha\in(0,1);
\end{align*}
see Remark \ref{rem:strict positivity}. Let us fix an $\alpha\in(0,1)$.
By Lemma \ref{lemma:phi properties}, 
$r\mapsto H_r(\rho\|\sigma)-\frac{1-\alpha}{\alpha}r$ is a 
finite-valued convex, and hence continuous, function on 
the non-trivial interval $[D_0(\rho\|\sigma),D_{\alpha}(\rho\|\sigma)]$. Hence, 
by \eqref{testdiv formula} in Theorem \ref{thm:testdiv}, 
to prove the strict inequality $\oll{D}^{\test}_{\alpha}(\rho\|\sigma)<
D_{\alpha}(\rho\|\sigma)$,
we only need to show that 
\begin{align}\label{strict ineq proof2}
H_{r}(\rho\|\sigma)
-
\frac{1-\alpha}{\alpha}r<0
\end{align}
for $r=D_{\alpha}(\rho\|\sigma)$.
By \eqref{Hoeffding def}, this is equivalent to 
\begin{align*}
\sup_{\beta\in(0,1)}\frac{\beta-1}{\beta}\left[D_{\alpha}(\rho\|\sigma)-D_{\beta}(\rho\|\sigma)\right]<\frac{1-\alpha}{\alpha}D_{\alpha}(\rho\|\sigma).
\end{align*}
Since $\beta\mapsto D_{\beta}(\rho\|\sigma)$ is monotone increasing, 
it is sufficient to prove that 
\begin{align*}
\sup_{\beta\in[\alpha,1)}\frac{\beta-1}{\beta}
\left[D_{\alpha}(\rho\|\sigma)-D_{\beta}(\rho\|\sigma)\right]
<
\frac{1-\alpha}{\alpha}D_{\alpha}(\rho\|\sigma),
\end{align*}
or equivalently, that
\begin{align}\label{strict inequality proof1}
\inf_{\beta\in[\alpha,1)}
\Bigg\{
\underbrace{\frac{1-\alpha}{\alpha}D_{\alpha}(\rho\|\sigma)-
\frac{\beta-1}{\beta}
\left[D_{\alpha}(\rho\|\sigma)-D_{\beta}(\rho\|\sigma)\right]}_{=:g(\beta)}\Bigg\}>0.
\end{align}
For any $\beta\in[\alpha,1)$, we have 
\begin{align*}
g(\beta)&=
\underbrace{\frac{1-\alpha}{\alpha}D_{\alpha}(\rho\|\sigma)}_{=D_{1-\alpha}(\sigma\|\rho)}
+\frac{1-\beta}{\beta}D_{\alpha}(\rho\|\sigma)
-\underbrace{\frac{1-\beta}{\beta}D_{\beta}(\rho\|\sigma)}_{=D_{1-\beta}(\sigma\|\rho)}\\
&=
\underbrace{\frac{1-\beta}{\beta}D_{\alpha}(\rho\|\sigma)}_{>0}+
\underbrace{D_{1-\alpha}(\sigma\|\rho)-D_{1-\beta}(\sigma\|\rho)}_{\ge 0}\\
&>0,
\end{align*}
where the equality in the second line is by \eqref{alpha swap}.
Moreover, 
\begin{align*}
g(1):=\lim_{\beta\nearrow 1}g(\beta)=D_{1-\alpha}(\sigma\|\rho)-D_0(\sigma\|\rho)>0,
\end{align*}
where the strict inequality follows by the assumption on
$\beta\mapsto D_\beta(\sigma\|\rho)$.
Hence, $g$ is a strictly positive continuous function on the compact interval $[\alpha,1]$, 
and therefore \eqref{strict inequality proof1} holds.
\end{proof}

A trivial reformulation of the above gives the following:

\begin{cor}\label{cor:oll test equal Petz}
Let $\rho,\sigma\in\S(\hil)$ with spectral decompositions
$\rho=\sum_{i=1}^rr_iP_i$, 
$\sigma=\sum_{i=1}^ms_iQ_i$.
The following are equivalent:
\begin{enumerate}
\item\label{strict ineq1-1}
$\oll{D}^{\test}_{\alpha}(\rho\|\sigma)=
D_{\alpha}(\rho\|\sigma)$ for some $\alpha\in(0,1)$.
\item\label{strict ineq2-1}
$\oll{D}^{\test}_{\alpha}(\rho\|\sigma)=
D_{\alpha}(\rho\|\sigma)$ for every $\alpha\in(0,1)$.
\item\label{strict ineq3-1}
$\frac{1}{n}D_{\alpha}^{\test}(\rho^{\otimes n}\|\sigma^{\otimes n})
=D_{\alpha}(\rho\|\sigma)$ for every $\alpha\in(0,1)$ and every $n\in\bN$.
\item\label{strict ineq4-1}
One of the following two conditions is satisfied:
a) there exist a number $\kappa>0$ and projections $P_i'\le Q_i$, $i\in[m]$, such that 
$\rho=\sum_{i=1}^m\kappa s_iP_i'$, or 
b) there exist a number $\eta>0$ and projections $Q_i'\le P_i$, $i\in[r]$, such that 
$\sigma=\sum_{i=1}^r\eta r_iQ_i'$.
\end{enumerate}
Moreover, if any (and hence all) of the above holds then 
\begin{align}\label{all equal}
\oll{D}^{\test}_{\alpha}(\rho\|\sigma)=\hat{D}^{\test}_{\alpha}(\rho\|\sigma)=
D_{\alpha}(\rho\|\sigma),\ds\ds\ds\alpha\in(0,1).
\end{align}
\end{cor}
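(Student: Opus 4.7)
The plan is to derive this corollary as a direct contrapositive reformulation of Theorem \ref{thm:testdiv standard bounds}, combined with the explicit characterization of constant R\'enyi divergences given in Lemma \ref{lemma:Dalpha monotone}. First, I would record the universal upper bound $\frac{1}{n} D_\alpha^{\test}(\rho^{\otimes n}\|\sigma^{\otimes n}) \le D_\alpha(\rho\|\sigma)$, valid for all $n \in \bN$ and $\alpha \in (0,1)$; this follows from the data-processing inequality for the classical R\'enyi divergence applied to the two-outcome measurement channel together with additivity of the standard R\'enyi divergence under tensor products (equivalently, it is contained in Lemma \ref{lemma:Renyi order}). With this in hand, the equivalences \ref{strict ineq1-1} $\Leftrightarrow$ \ref{strict ineq2-1} $\Leftrightarrow$ \ref{strict ineq3-1} are precisely the contrapositives of the implications \ref{strict ineq1} $\Leftrightarrow$ \ref{strict ineq2} $\Leftrightarrow$ \ref{strict ineq3} in Theorem \ref{thm:testdiv standard bounds}.

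To connect these with \ref{strict ineq4-1}, I would negate condition \ref{strict ineq4} of the theorem: the statement that \emph{both} $\beta \mapsto D_\beta(\rho\|\sigma)$ and $\beta \mapsto D_\beta(\sigma\|\rho)$ are strictly increasing on $(0,1)$ fails iff at least one of them fails to be strictly increasing. Applying Lemma \ref{lemma:Dalpha monotone} to each function in turn translates these two failure modes exactly into cases (a) and (b) of \ref{strict ineq4-1}; note that the matching of spectral decompositions in the lemma (roles of $P_i, Q_i$ and of $r_i, s_i$) yields the stated forms $\rho = \sum_i \kappa s_i P_i'$ and $\sigma = \sum_i \eta r_i Q_i'$, respectively. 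Taken together, these two steps establish the four-way equivalence of \ref{strict ineq1-1}--\ref{strict ineq4-1}.

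For the final assertion \eqref{all equal}, I would invoke the sandwich $\oll{D}_\alpha^{\test}(\rho\|\sigma) \le \hat{D}_\alpha^{\test}(\rho\|\sigma) \le \oll{D}_\alpha^{\meas}(\rho\|\sigma)$ from Lemma \ref{lemma:Renyi order}, together with the further bound $\oll{D}_\alpha^{\meas}(\rho\|\sigma) \le D_\alpha(\rho\|\sigma)$, which holds directly for $\alpha \in [1/2,1)$ via \eqref{Renyi order3}--\eqref{Renyi order4} and for $\alpha \in (0, 1/2]$ by the same bounds combined with the swap identity \eqref{alpha swap}. Under \ref{strict ineq1-1} the extremes $\oll{D}_\alpha^{\test}$ and $D_\alpha$ coincide, forcing equality throughout the sandwich. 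The only point requiring genuine care throughout the argument is the logical translation between the conjunction in \ref{strict ineq4} (``both functions strictly increasing'') and the disjunction in \ref{strict ineq4-1} (``(a) or (b)''); beyond that, the corollary is essentially a rewording of Theorem \ref{thm:testdiv standard bounds}, so I do not anticipate any substantial technical obstacle.
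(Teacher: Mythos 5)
Your proposal is correct and follows essentially the same route as the paper: the paper's proof likewise observes that conditions \ref{strict ineq1-1}--\ref{strict ineq4-1} are exactly the negations of the corresponding conditions in Theorem \ref{thm:testdiv standard bounds} (using Lemma \ref{lemma:Dalpha monotone} to translate the failure of strict monotonicity into the spectral condition \ref{strict ineq4-1}), and derives \eqref{all equal} from the chain of inequalities in Lemma \ref{lemma:Renyi order}. Your explicit remarks on why the negation of a strict inequality becomes an equality (via the universal upper bound) and on the conjunction-versus-disjunction bookkeeping are exactly the points the paper leaves implicit.
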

\begin{proof}
Since the above points are just the negations of the corresponding points in 
Theorem \ref{thm:testdiv standard bounds},
their equivalence is immediate from Theorem \ref{thm:testdiv standard bounds}
(in the last point we also use Lemma \ref{lemma:Dalpha monotone}). 
The equalities in \eqref{all equal} are immediate from the above and the inequalities
in Lemma \ref{lemma:Renyi order}.
\end{proof}

\begin{rem}
Note that the first inequality in \eqref{testdiv standard bounds}
can also be obtained from \eqref{alt repr4} by putting 
$t=\alpha$ instead of the optimization over $t$.
\end{rem}

\begin{rem}
Combining the first inequality in \eqref{testdiv standard bounds} with the inequalities in 
Lemma \ref{lemma:Renyi order} yields
\begin{align*}
\half D_{\alpha}(\rho\|\sigma)\le 
\oll{D}_{\alpha}^{\meas}(\rho\|\sigma),
\ds\ds\ds \alpha\in(0,1).
\end{align*}
Taking into account \eqref{Renyi order3} and \eqref{alpha swap}, the above is equivalent to 
\begin{align}\label{standard sandwiched ineq}
\half D_{\alpha}(\rho\|\sigma)\le D_{\alpha}\nw(\rho\|\sigma),
\ds\ds\ds \alpha\in[1/2,1).
\end{align}
This is a non-trivial inequality between the standard and the sandwiched R\'enyi divergences. 
However, it was shown in \cite[Corollary 2.3]{ItenRenesSutter} that
$\alpha D_{\alpha}(\rho\|\sigma)\le D_{\alpha}\nw(\rho\|\sigma)$
for $\alpha\in[0,1]$, which gives a stronger inequality 
than \eqref{standard sandwiched ineq} when $\alpha\in(1/2,1)$.
\end{rem}

\begin{rem}
Assume that $D_\alpha(\rho\|\sigma)=D_0(\rho\|\sigma)>0$, $\alpha\in(0,1)$; 
in particular, $\beta\mapsto D_{\beta}(\rho\|\sigma)$ is constant, while 
$\beta\mapsto D_\beta(\sigma\|\rho)={\beta\over1-\beta}D_0(\rho\|\sigma)$
 is strictly increasing on $(0,1)$ (see \eqref{alpha swap} for the equality). Then
\begin{align*}
H_r(\rho\|\sigma)
=
\begin{cases}+\infty, & r<D_0(\rho\|\sigma),\\ 0, & r\ge D_0(\rho\|\sigma), 
\end{cases}
\end{align*}
and thus, by Theorem \ref{thm:testdiv}, 
$\overline D_\alpha^\test(\rho\|\sigma)=D_0(\rho\|\sigma)=D_\alpha(\rho\|\sigma)$, $\alpha\in(0,1)$.
On the other hand, 
$H_{D_\alpha(\rho\|\sigma)}(\rho\|\sigma)=H_{D_0(\rho\|\sigma)}(\rho\|\sigma)=0
<{1-\alpha\over\alpha}D_\alpha(\rho\|\sigma)$ for all $\alpha\in(0,1)$.
Hence, \eqref{strict ineq proof2} and the strict increasing of
$\beta\mapsto D_\beta(\sigma\|\rho)$ are not sufficient to conclude 
$\oll{D}_\alpha^{\test}(\sigma\|\rho)<D_\alpha(\sigma\|\rho)$, $\alpha\in(0,1)$, 
in the proof of \ref{strict ineq4}$\imp$\ref{strict ineq1} in Theorem \ref{thm:testdiv standard bounds}.
\end{rem}

\begin{rem}
Related to the above remark, assume that $0<D_0(\rho\|\sigma)$. Then, by 
\eqref{Hr at r=D_0} and Theorem \ref{thm:testdiv}, we get that 
for a given $\alpha\in(0,1)$,
\begin{align}
\oll{D}_{\alpha}^{\test}(\rho\|\sigma)=D_0(\rho\|\sigma)
&\ds\iff\ds
\frac{1-\alpha}{\alpha}D_0(\rho\|\sigma)\ge H_{D_0(\rho\|\sigma)}(\rho\|\sigma)
=
-\psi(0)-\psi'(0)\nn\\
&\ds\iff\ds
\alpha\le \frac{\psi(0)}{\psi'(0)+2\psi(0)}.\label{alpha upper bound}
\end{align}
In particular, if $\beta\mapsto D_{\beta}(\rho\|\sigma)$ is strictly increasing, then 
$\oll{D}_{\alpha}^{\test}(\rho\|\sigma)=D_0(\rho\|\sigma)<
D_{\alpha}(\rho\|\sigma)$ for every $\alpha$ as in \eqref{alpha upper bound}.
\end{rem}
\medskip

We close this section with a few further corollaries of Theorems \ref{thm:testdiv}
and \ref{thm:testdiv standard bounds}.

\begin{cor}\label{cor:regularized strictly positive}
$\oll{D}_{\alpha}^{\test}$ is strictly positive for all $\alpha\in(0,1)$ in the sense that 
$\oll{D}_{\alpha}^{\test}(\rho\|\sigma)\ge 0$ for all $\rho,\sigma\in\S(\hil)$, with equality if and only if $\rho=\sigma$.
\end{cor}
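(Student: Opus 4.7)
The plan is to deduce the corollary essentially as a direct combination of Lemma \ref{lemma:testdiv strictly positive} with the lower bound from Theorem \ref{thm:testdiv standard bounds}. Since the regularization is a $\limsup$ of the single-shot quantities $\frac{1}{n}D_{\alpha}^{\test}(\rho^{\otimes n}\|\sigma^{\otimes n})$, which are non-negative by Lemma \ref{lemma:testdiv strictly positive}, non-negativity of $\oll{D}_{\alpha}^{\test}(\rho\|\sigma)$ is immediate. The ``only if'' direction is also trivial: if $\rho=\sigma$, then for every test sequence $(T_n)_{n\in\bN}$ one has $\T_n(\rho^{\otimes n})=\T_n(\sigma^{\otimes n})$, and so $D_{\alpha}\bz\T_n(\rho^{\otimes n})\|\T_n(\sigma^{\otimes n})\jz=0$ for every $n$, yielding $\oll{D}_{\alpha}^{\test}(\rho\|\rho)=0$.

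The one non-trivial implication is that $\rho\ne\sigma$ forces $\oll{D}_{\alpha}^{\test}(\rho\|\sigma)>0$. The main point is that although the standard inequalities in Lemma \ref{lemma:Renyi order} go the ``wrong way'' for $\oll{D}_{\alpha}^{\test}$ (it is upper-bounded by the other divergences, not lower-bounded), Theorem \ref{thm:testdiv standard bounds} already supplies the key lower bound
\begin{align*}
\oll{D}_{\alpha}^{\test}(\rho\|\sigma)\ge \half D_{\alpha}(\rho\|\sigma),\ds\ds\ds \alpha\in(0,1).
\end{align*}
Thus it suffices to argue $D_{\alpha}(\rho\|\sigma)>0$ whenever $\rho\ne\sigma$, which follows from the chain $D_{\alpha}^{\test}(\rho\|\sigma)\le D_{\alpha}^{\meas}(\rho\|\sigma)\le D_{\alpha}(\rho\|\sigma)$ in Lemma \ref{lemma:Renyi order} together with the strict positivity of $D_{\alpha}^{\test}$ from Lemma \ref{lemma:testdiv strictly positive} (this is also explicitly recorded in Remark \ref{rem:strict positivity}).

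I do not foresee any real obstacle; once one has Theorem \ref{thm:testdiv standard bounds} the corollary is a one-line consequence. The only reason it deserves a separate statement is that, as pointed out in Remark \ref{rem:strict positivity}, strict positivity of $\oll{D}_{\alpha}^{\test}$ does \emph{not} follow from the inequalities in Lemma \ref{lemma:Renyi order} alone, since $\oll{D}_{\alpha}^{\test}$ sits below $\hat D_{\alpha}^{\test}$, $\oll{D}_{\alpha}^{\meas}$, and $D_{\alpha}$, and a priori could collapse to zero under regularization. The lower bound $\oll{D}_{\alpha}^{\test}\ge\tfrac12 D_{\alpha}$ from Theorem \ref{thm:testdiv standard bounds}, itself extracted from the Hoeffding-divergence representation \eqref{testdiv formula} by the single test $r=D_{\alpha}(\rho\|\sigma)/2$, is precisely what rules this out.
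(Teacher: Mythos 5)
Your proposal is correct and follows essentially the same route as the paper: the paper's proof is exactly the combination of the lower bound $\oll{D}_{\alpha}^{\test}\ge\tfrac12 D_{\alpha}$ from Theorem \ref{thm:testdiv standard bounds} with the strict positivity of $D_{\alpha}$ recorded in Remark \ref{rem:strict positivity}. Your additional remarks on why the inequalities of Lemma \ref{lemma:Renyi order} alone do not suffice match the paper's own commentary in that remark.
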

\begin{proof}
Immediate from Theorem \ref{thm:testdiv standard bounds} and the strict positivity of 
$D_{\alpha}$ (see Remark \ref{rem:strict positivity}).
\end{proof}

\begin{cor}\label{cor:wa}
$\oll{D}_{\alpha}^{\test}$ is \ki{weakly additive} in the sense that 
\begin{align*}
\oll{D}_{\alpha}^{\test}\bz\rho^{\otimes k}\|\sigma^{\otimes k}\jz
=
k\oll{D}_{\alpha}^{\test}(\rho\|\sigma),\ds\ds\ds k\in\bN.
\end{align*}
\end{cor}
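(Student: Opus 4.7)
The plan is to leverage the closed-form characterization established in Theorem~\ref{thm:testdiv}, which expresses $\oll{D}_{\alpha}^{\test}(\rho\|\sigma)$ purely in terms of the Hoeffding divergences $H_r(\rho\|\sigma)$. The key observation is that these Hoeffding divergences transform in a very simple way under tensor powers, and this transformation interacts nicely with the $\sup$-$\min$ formula.

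First, I would record the basic scaling identity $\psi(\rho^{\otimes k}\|\sigma^{\otimes k}|\alpha)=k\psi(\rho\|\sigma|\alpha)$, which is immediate from the multiplicativity of the trace under tensor products and $(\rho^{\otimes k})^\alpha=(\rho^\alpha)^{\otimes k}$. Plugging this into the variational formula
\begin{align*}
H_r(\rho\|\sigma)=\sup_{\alpha\in(0,1)}\frac{(\alpha-1)r-\psi(\alpha)}{\alpha}
\end{align*}
from \eqref{Hoeffding def} and factoring out $k$ yields the scaling law
\begin{align*}
H_r(\rho^{\otimes k}\|\sigma^{\otimes k})=k\,H_{r/k}(\rho\|\sigma),\qquad r\in\bR,\ k\in\bN.
\end{align*}

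Second, I would apply Theorem~\ref{thm:testdiv} to $\rho^{\otimes k}$ and $\sigma^{\otimes k}$, substitute the above scaling law, and then change variables $r=ks$ in the supremum:
\begin{align*}
\oll{D}_{\alpha}^{\test}(\rho^{\otimes k}\|\sigma^{\otimes k})
&=\sup_{r>0}\min\left\{r,\frac{\alpha}{1-\alpha}H_r(\rho^{\otimes k}\|\sigma^{\otimes k})\right\}\\
&=\sup_{r>0}\min\left\{r,\frac{k\alpha}{1-\alpha}H_{r/k}(\rho\|\sigma)\right\}\\
&=\sup_{s>0}\min\left\{ks,\frac{k\alpha}{1-\alpha}H_s(\rho\|\sigma)\right\}\\
&=k\sup_{s>0}\min\left\{s,\frac{\alpha}{1-\alpha}H_s(\rho\|\sigma)\right\}
=k\,\oll{D}_{\alpha}^{\test}(\rho\|\sigma),
\end{align*}
using once more Theorem~\ref{thm:testdiv} in the last equality.

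There is essentially no obstacle here; the only thing to be mildly careful about is the pathological cases. If $\rho\perp\sigma$ or $\rho=\sigma$, the same degenerate cases hold for $\rho^{\otimes k}$ and $\sigma^{\otimes k}$, and both sides are either $+\infty$ or $0$, so the identity holds trivially. Apart from this bookkeeping, the result is a direct consequence of the single-letter formula in Theorem~\ref{thm:testdiv} together with the elementary scaling of $H_r$ under tensor powers, which is why it suffices to present the argument in this short form rather than invoking any fresh hypothesis-testing input.
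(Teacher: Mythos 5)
Your proof is correct, but it follows a different route from the paper's. The paper's entire proof is the remark that the statement is ``immediate from \eqref{testdiv limit}'': since Theorem \ref{thm:testdiv} upgrades the $\limsup$ in the definition of $\oll{D}_{\alpha}^{\test}$ to an honest limit, one simply writes
$\oll{D}_{\alpha}^{\test}(\rho^{\otimes k}\|\sigma^{\otimes k})
=\lim_{n}\frac{1}{n}D_{\alpha}^{\test}(\rho^{\otimes nk}\|\sigma^{\otimes nk})
=k\lim_{n}\frac{1}{nk}D_{\alpha}^{\test}(\rho^{\otimes nk}\|\sigma^{\otimes nk})
=k\oll{D}_{\alpha}^{\test}(\rho\|\sigma)$,
a generic subsequence argument that uses no information about the value of the limit. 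You instead invoke the closed-form expression \eqref{testdiv formula} together with the scaling law $H_r(\rho^{\otimes k}\|\sigma^{\otimes k})=kH_{r/k}(\rho\|\sigma)$, which you derive correctly from the additivity $\psi(\rho^{\otimes k}\|\sigma^{\otimes k}|\alpha)=k\psi(\rho\|\sigma|\alpha)$; the change of variables $r=ks$ and the homogeneity of $\min$ then finish the argument, and your handling of the degenerate cases $\rho\perp\sigma$ and $\rho=\sigma$ is fine. Both arguments are immediate consequences of the same theorem; the paper's is shorter and applies verbatim to any regularized quantity once the limit is known to exist, while yours has the minor side benefit of exhibiting weak additivity directly at the level of the Hoeffding-divergence formula. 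As in the paper, your argument covers $\alpha\in(0,1)$ only; for $\alpha\ge 1$ the claim is already contained in \eqref{testdiv equalities} via the additivity of $D_{\alpha}\nw$.
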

\begin{proof}
Immediate from \eqref{testdiv limit}.
\end{proof}

\begin{cor}
For any states $\rho,\sigma\in\S(\hil)$ and any $\alpha,\beta\in(0,1)$, 
\begin{align}\label{Renyi ineq}
D_{\alpha}(\rho\|\sigma)\ge\frac{\alpha(1-\beta)}{\alpha-2\alpha \beta+\beta}D_{\beta}(\rho\|\sigma).
\end{align}
\end{cor}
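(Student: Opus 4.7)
The plan is to feed the elementary lower bound on the Hoeffding divergence coming from a single value of the optimization parameter into the expression for $\oll{D}_{\alpha}^{\test}(\rho\|\sigma)$ provided by Theorem~\ref{thm:testdiv}, optimize in~$r$, and then combine with the upper bound $\oll{D}_{\alpha}^{\test}(\rho\|\sigma)\le D_{\alpha}(\rho\|\sigma)$ from Lemma~\ref{lemma:Renyi order}. The idea is that the proof of the lower bound $\tfrac12 D_{\alpha}(\rho\|\sigma)\le\oll{D}_{\alpha}^{\test}(\rho\|\sigma)$ in Theorem~\ref{thm:testdiv standard bounds} came from keeping the single term $t=\alpha$ in the supremum defining $H_r(\rho\|\sigma)$, and then choosing $r=D_{\alpha}(\rho\|\sigma)/2$; here I simply repeat that argument keeping the term $t=\beta$ instead, which will introduce $D_{\beta}(\rho\|\sigma)$ on the right and produce a $\beta$-dependent coefficient.

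Concretely, first I would note that \eqref{Hoeffding def} immediately gives
\begin{align*}
H_{r}(\rho\|\sigma)\ge \tfrac{1-\beta}{\beta}\bigl[D_{\beta}(\rho\|\sigma)-r\bigr],\ds\ds\ds r\in\bR,
\end{align*}
and hence
\begin{align*}
\tfrac{\alpha}{1-\alpha}H_{r}(\rho\|\sigma)\ge \tfrac{\alpha(1-\beta)}{\beta(1-\alpha)}\bigl[D_{\beta}(\rho\|\sigma)-r\bigr].
\end{align*}
Second, I would choose the value $r=r^{*}$ that makes the right-hand side equal to $r$ itself; a short calculation gives
\begin{align*}
r^{*}:=\frac{\alpha(1-\beta)}{\alpha+\beta-2\alpha\beta}\,D_{\beta}(\rho\|\sigma),
\end{align*}
where the denominator equals $\beta(1-\alpha)+\alpha(1-\beta)$. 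At this choice of $r$ both entries under the $\min$ in Theorem~\ref{thm:testdiv} are bounded below by $r^{*}$, so
\begin{align*}
\oll{D}_{\alpha}^{\test}(\rho\|\sigma)\ge \min\!\left\{r^{*},\tfrac{\alpha}{1-\alpha}H_{r^{*}}(\rho\|\sigma)\right\}\ge r^{*}.
\end{align*}

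Finally I would apply Lemma~\ref{lemma:Renyi order} to get $D_{\alpha}(\rho\|\sigma)\ge \oll{D}_{\alpha}^{\test}(\rho\|\sigma)\ge r^{*}$, which is exactly \eqref{Renyi ineq}. There is no real obstacle: the only thing to watch is the algebra $\beta(1-\alpha)+\alpha(1-\beta)=\alpha+\beta-2\alpha\beta$ so that the coefficient matches the statement, and the sanity check $\beta=\alpha$ recovering the factor $\tfrac12$ already proved in Theorem~\ref{thm:testdiv standard bounds}. Note that the argument is vacuous when $D_{\beta}(\rho\|\sigma)=0$, so I may tacitly assume $\rho\ne\sigma$ to have $r^{*}>0$ and stay in the regime where Theorem~\ref{thm:testdiv} is used with $r>0$.
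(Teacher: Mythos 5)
Your proof is correct and rests on exactly the same ingredients as the paper's: the formula of Theorem \ref{thm:testdiv}, the bound $D_{\alpha}(\rho\|\sigma)\ge\oll{D}_{\alpha}^{\test}(\rho\|\sigma)$ from Lemma \ref{lemma:Renyi order}, and the single-term lower bound $H_r(\rho\|\sigma)\ge\frac{\beta-1}{\beta}\left[r-D_{\beta}(\rho\|\sigma)\right]$ coming from \eqref{Hoeffding def}. The only difference is cosmetic: the paper evaluates the Hoeffding divergence at $r=D_{\alpha}(\rho\|\sigma)$ and rearranges, whereas you plug the crossing point $r^{*}$ into the $\sup_{r}\min\{\cdot,\cdot\}$ formula; the resulting algebra is identical.
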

\begin{proof}
Since $D_{\alpha}(\rho\|\sigma)\ge \oll{D}_{\alpha}^{\test}(\rho\|\sigma)$,
Theorem \ref{thm:testdiv} gives 
\begin{align}\label{Renyi ineq proof}
\frac{1-\alpha}{\alpha}D_{\alpha}(\rho\|\sigma)\ge H_{D_{\alpha}(\rho\|\sigma)}(\rho\|\sigma)
\ge
\frac{\beta-1}{\beta}\left[D_{\alpha}(\rho\|\sigma)-D_{\beta}(\rho\|\sigma)\right],
\end{align}
where the second inequality is obvious by definition. 
Multipliying by $\alpha \beta$, and noting that $(\alpha+\beta)/2\ge\sqrt{\alpha \beta}
> \alpha \beta$, so that division by $\alpha-2\alpha \beta+\beta$ does not change the direction of
the inequality, we get 
\eqref{Renyi ineq} from \eqref{Renyi ineq proof} by a simple rearrangement.
\end{proof}
\medskip

The \ki{Chernoff divergence} of $\rho$ and $\sigma$ is defined as
\begin{align*}
C(\rho\|\sigma):=-\min_{0\le\alpha\le 1}\log\Tr\rho^{\alpha}\sigma^{1-\alpha}
=-\min_{0\le\alpha\le 1}\psi(\alpha).
\end{align*}
It is not too difficult to see from its operational interpretation in hypothesis testing 
\cite{Aud,NSz}
that 
\begin{align}\label{Chernoff char}
H_r(\rho\|\sigma)=r\ds\iff\ds r=C(\rho\|\sigma).
\end{align}

\begin{cor}\label{C-III.29}
For any $\rho,\sigma\in\S(\hil)$,
\begin{align*}
\oll{D}^{\test}_{1/2}(\rho\|\sigma)=C(\rho\|\sigma).
\end{align*}
\end{cor}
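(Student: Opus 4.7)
The plan is to derive this as an essentially immediate consequence of Theorem \ref{thm:testdiv} combined with the characterization \eqref{Chernoff char} of the Chernoff divergence. Setting $\alpha=1/2$ in \eqref{testdiv formula} gives the factor $\frac{\alpha}{1-\alpha}=1$, so Theorem \ref{thm:testdiv} simplifies to
\begin{align*}
\oll{D}^{\test}_{1/2}(\rho\|\sigma)=\sup_{r>0}\min\{r,H_r(\rho\|\sigma)\}.
\end{align*}
The task is then to show that this supremum equals $C(\rho\|\sigma)$.

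The key observation is that $r\mapsto r$ is strictly increasing while $r\mapsto H_r(\rho\|\sigma)$ is monotone decreasing on $\bR$ (Lemma \ref{lemma:phi properties}(iv)). Hence $\min\{r,H_r\}$ is controlled by $r$ below the crossing point and by $H_r$ above it, and the supremum is attained precisely where $r=H_r$. By \eqref{Chernoff char}, this unique crossing point is $r=C(\rho\|\sigma)$. More concretely, I would argue as follows. For any $r<C(\rho\|\sigma)$, the monotonicity of $H$ and the fact $H_{C(\rho\|\sigma)}=C(\rho\|\sigma)$ yield $H_r\ge C(\rho\|\sigma)>r$, so $\min\{r,H_r\}=r<C(\rho\|\sigma)$. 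For $r>C(\rho\|\sigma)$, we likewise get $H_r\le C(\rho\|\sigma)<r$, so $\min\{r,H_r\}=H_r\le C(\rho\|\sigma)$. Taking the sup over $r>0$ yields the value $C(\rho\|\sigma)$, approached from below by $r\nearrow C(\rho\|\sigma)$.

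A few degenerate cases need a brief sanity check but should not pose any real difficulty. If $\rho\perp\sigma$ then $\psi\equiv-\infty$ on $(0,1)$ in the appropriate sense and both sides are $+\infty$. If $\rho=\sigma$ then $\psi\equiv 0$, so $C(\rho\|\sigma)=0$, and the left-hand side is $0$ as well by Corollary \ref{cor:regularized strictly positive}. In the remaining generic case one only needs to ensure the supremum is actually attained (or at least approached) as $r\to C(\rho\|\sigma)$; since $H_{C(\rho\|\sigma)}=C(\rho\|\sigma)$ is finite, the function $r\mapsto\min\{r,H_r\}$ is continuous at this point, so no subtlety arises. I do not anticipate a main obstacle: once Theorem \ref{thm:testdiv} and \eqref{Chernoff char} are in hand, this is a short monotonicity argument.
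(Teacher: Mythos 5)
Your proposal is correct and follows exactly the paper's own (very terse) proof, which simply cites Theorem \ref{thm:testdiv} together with the characterization \eqref{Chernoff char}; you have merely filled in the short monotonicity argument that the paper leaves implicit. No issues.
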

\begin{proof}
Immediate from Theorem \ref{thm:testdiv} and \eqref{Chernoff char}. 
Alternatively, it follows from the operational interpretation of the Chernoff divergence
\cite{Aud,NSz} and Proposition \ref{lemma:alt repr},
or directly from \eqref{alt repr4}.
\end{proof}
\medskip

Recall from \eqref{testdiv equalities} that the regularized measured and 
the regularized test-measured R\'enyi $\alpha$-divergences all coincide for 
$\alpha\ge 1$ and, moreover, for $\alpha=1$ they all yield the Umegaki relative entropy. 
In the following proposition we will use the \ki{max-relative entropy} \cite{Datta,RennerPhD}
of two states $\rho$ and $\sigma$, defined as 
\begin{align*}
D_{\max}(\rho\|\sigma):=
\begin{cases}
\log\min\{\lambda\ge 0:\,\rho\le \lambda\sigma\},&\rho^0\le\sigma^0,\\
+\infty,&\text{otherwise}.
\end{cases}
\end{align*}

\begin{prop}\label{prop:limits}
For $\rho,\sigma\in\S(\hil)$,
let $\divv_{\alpha}(\rho\|\sigma)$ denote any of the (regularized) measured or test-measured 
R\'enyi $\alpha$-divergences, as in Lemma \ref{lemma:alpha swap2}.
Then $\divv_{\alpha}(\rho\|\sigma)$ is increasing in $\alpha\in(0,+\infty)$, and
\begin{enumerate}
\item\label{0 limit}
$\lim_{\alpha\searrow 0}\divv_{\alpha}(\rho\|\sigma)=D_0(\rho\|\sigma)$;
\item\label{1 limit 1}
$\lim_{\alpha\to1}D_\alpha^\test(\rho\|\sigma)=D^\test(\rho\|\sigma)$;
\item\label{1 limit 2}
 $\lim_{\alpha\to1}D_\alpha^\meas(\rho\|\sigma)=D^\meas(\rho\|\sigma)$;
\item\label{1 limit 3} 
$\lim_{\alpha\to1}\oll{D}_\alpha^\test(\rho\|\sigma)
=\lim_{\alpha\to1}\hat D_\alpha^\test(\rho\|\sigma)
=\lim_{\alpha\to1}\oll{D}_{\alpha}^\meas(\rho\|\sigma)=D(\rho\|\sigma)$;
\item\label{infty limit} 
$\lim_{\alpha\to +\infty}\divv_{\alpha}(\rho\|\sigma)=
D_{\max}(\rho\|\sigma)$.
\end{enumerate}
\end{prop}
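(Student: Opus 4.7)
The plan is to handle the five parts in order, relying on Lemma~\ref{lemma:Renyi order}, Theorem~\ref{thm:testdiv}, the coincidences \eqref{testdiv equalities}, and the standard single-letter limits of the sandwiched R\'enyi divergence. Monotonicity in $\alpha\in(0,+\infty)$ is immediate from Lemma~\ref{lemma:Dalpha monotone}, since each $\divv_\alpha(\rho\|\sigma)$ is a supremum or limsup of classical R\'enyi divergences of post-measurement distributions, and all of these are monotone increasing in $\alpha$. For part~\ref{0 limit} I use a sandwich: the test sequence $T_n:=(\rho^0)^{\otimes n}$ gives $\frac{1}{n}D_\alpha(\T_n(\rho^{\otimes n})\|\T_n(\sigma^{\otimes n}))=-\log\Tr\rho^0\sigma=D_0(\rho\|\sigma)$ for every $n\in\bN$ and every $\alpha\in(0,1)$ by direct computation, so $\divv_\alpha(\rho\|\sigma)\ge D_0(\rho\|\sigma)$ for all five choices of $\divv_\alpha$; the reverse bound is $\divv_\alpha\le D_\alpha\searrow D_0$ by Lemma~\ref{lemma:Renyi order} and the definition of $D_0$.

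For parts~\ref{1 limit 1} and~\ref{1 limit 2}, the left limit $\alpha\nearrow 1$ is straightforward: classically $\sup_{\alpha\in(0,1)}D_\alpha(p\|q)=D(p\|q)$, so interchanging $\sup_\alpha$ with $\sup_T$ (respectively $\sup_M$) and invoking monotonicity yields $\lim_{\alpha\nearrow 1}D_\alpha^\test(\rho\|\sigma)=D^\test(\rho\|\sigma)$ and analogously for $D^\meas$. The right limit $\alpha\searrow 1$ is the main technical point, which I handle case-by-case: if $\rho^0\not\le\sigma^0$ then both sides are already $+\infty$ (take $T=I-\sigma^0$); and if $\rho^0\le\sigma^0$, then for any test $T$ the condition $\Tr T\sigma=0$ forces $T\sigma^0=0$, whence $T\rho^0=T(I-\sigma^0)\rho^0=0$ using $(I-\sigma^0)\rho^0=0$, so $\T(\sigma)_i=0\imp\T(\rho)_i=0$; combined with $\rho\le c\sigma$ for some $c>0$, this yields joint continuity of $(\alpha,T)\mapsto D_\alpha(\T(\rho)\|\T(\sigma))$ on the compact set $[1,2]\times\bT(\hil)$, hence the maximum in $T$ is continuous at $\alpha=1$. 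The same argument on $\pvm_1(\hil)$ covers $D_\alpha^\meas$.

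Part~\ref{1 limit 3} then follows by combining the identifications: for $\alpha\in[1/2,1)$, Lemma~\ref{lemma:Renyi order} gives $\oll{D}_\alpha^\meas=D_\alpha\nw$, and for $\alpha\ge 1$, \eqref{testdiv equalities} gives $\oll{D}_\alpha^\test=\hat D_\alpha^\test=\oll{D}_\alpha^\meas=D_\alpha\nw$, while $\lim_{\alpha\to 1}D_\alpha\nw=D$ by Lemma~\ref{lemma:Dalpha monotone}. The only remaining issue is $\liminf_{\alpha\nearrow 1}\oll{D}_\alpha^\test\ge D$, which I derive from the representation \eqref{testdiv formula}: for any $r<D(\rho\|\sigma)$, Lemma~\ref{lemma:phi properties} gives $H_r(\rho\|\sigma)>0$, so $H_r\ge\frac{1-\alpha}{\alpha}r$ whenever $\alpha$ is close enough to $1$, hence $\oll{D}_\alpha^\test(\rho\|\sigma)\ge r$, and sending $r\nearrow D(\rho\|\sigma)$ completes the bound; the matching upper bound comes from the squeeze $\oll{D}_\alpha^\test\le\hat D_\alpha^\test\le\oll{D}_\alpha^\meas$.

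For~\ref{infty limit}, I invoke the known limit $\lim_{\alpha\to+\infty}D_\alpha\nw(\rho\|\sigma)=D_{\max}(\rho\|\sigma)$ from \cite{Renyi_new,WWY}, which together with Lemma~\ref{lemma:Renyi order} and \eqref{testdiv equalities} settles all four regularized divergences. For $D_\alpha^\test$ and $D_\alpha^\meas$, the upper bound is $D_\alpha^\test\le D_\alpha^\meas\le\oll{D}_\alpha^\meas\to D_{\max}$, and for the lower bound I use the dual formula $D_{\max}(\rho\|\sigma)=\sup_{\psi\ne 0}\log\frac{\langle\psi,\rho\psi\rangle}{\langle\psi,\sigma\psi\rangle}$ (with convention $+\infty$ when $\rho^0\not\le\sigma^0$, realized by any nonzero $\psi$ with $\sigma\psi=0$ and $\rho\psi\ne 0$), pick $\psi$ attaining this up to $\ep$, and use the test $T:=\pr{\psi}$: the classical fact $\lim_{\alpha\to+\infty}D_\alpha(\T(\rho)\|\T(\sigma))=D_{\max}(\T(\rho)\|\T(\sigma))$ then gives $\liminf_{\alpha\to+\infty}D_\alpha^\test(\rho\|\sigma)\ge\log\frac{\langle\psi,\rho\psi\rangle}{\langle\psi,\sigma\psi\rangle}$, and $\ep\to 0$ finishes. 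The hard part is the right limit at $\alpha=1$ in parts~\ref{1 limit 1}--\ref{1 limit 2}: the classical R\'enyi divergence is not jointly continuous on the boundary of its domain, so exchanging the optimization over tests/measurements with $\lim_{\alpha\searrow 1}$ requires the structural consequences of $\rho^0\le\sigma^0$ sketched above.
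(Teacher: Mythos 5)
Your proof is correct, and for the monotonicity statement and parts \ref{0 limit}--\ref{1 limit 3} it follows essentially the same route as the paper: the same test sequence $(\rho^0)^{\otimes n}$ for the lower bound in \ref{0 limit} (which is exactly what is needed to cover $\oll{D}_{\alpha}^{\test}$, since the single-copy inequality $D_{\alpha}^{\test}\le\oll{D}_{\alpha}^{\test}$ is not available), the same exchange of suprema for $\alpha\nearrow 1$, and the same use of \eqref{testdiv formula} together with $H_r>0\iff r<D(\rho\|\sigma)$ for \ref{1 limit 3}. The two places where you genuinely diverge are these. First, for $\lim_{\alpha\searrow 1}$ in \ref{1 limit 1}--\ref{1 limit 2} the paper exchanges $\inf_{\alpha>1}$ and $\sup_T$ via the minimax result \cite[Corollary A.2]{MH}, using only continuity in $T$ for each fixed $\alpha>1$ plus monotonicity in $\alpha$; you instead claim joint continuity of $(\alpha,T)\mapsto D_{\alpha}(\T(\rho)\|\T(\sigma))$ on $[1,2]\times\bT(\hil)$ when $\rho\le c\sigma$. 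This is true, but it is the one step you should justify: you need $D_{\alpha}\to D_1$ \emph{uniformly} over the set $\{(p,q)=(\T(\rho),\T(\sigma)):T\in\bT(\hil)\}$, which follows from convexity of $g(\alpha)=\log\sum_i p_i^{\alpha}q_i^{1-\alpha}$, the Jensen bound $e^{g(\alpha)}\ge 1$ for $\alpha\ge 1$, and the uniform boundedness of $g'$ and $g''$ on $[1,2]$ over $\{p\le cq\}$; as stated, "this yields joint continuity" is a nontrivial assertion, not an observation. Second, in \ref{infty limit} you import $\lim_{\alpha\to+\infty}D_{\alpha}\nw=D_{\max}$ from \cite{Renyi_new} and push it through \eqref{testdiv equalities}, whereas the paper deliberately avoids this: it proves the lower bound directly from the dual formula \eqref{LP duality} with the product tests $T^{\otimes n}$, precisely so that the argument is self-contained and yields, as a by-product, a new proof of the sandwiched limit (see the remark following the proposition). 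Your shortcut is perfectly valid but buys less; the paper's longer route buys an independent derivation of \eqref{sandwiched infty limit}. (Minor slips: there are three, not four, regularized divergences, and your single-vector dual formula for $D_{\max}$ is equivalent to \eqref{LP duality} by extremality of rank-one tests.)
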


\begin{proof}
Monotonicity in $\alpha$ is obvious from the monotonicity of the classical R\'enyi divergences in 
$\alpha$; see, e.g., Lemma \ref{lemma:Dalpha monotone}.

\ref{0 limit}:\enspace
With the test $T:=\rho^0$ we have 
\begin{align*}
D_0(\rho\|\sigma)
=
D_0(\T(\rho)\|\T(\sigma))
=
\lim_{\alpha\searrow 0}D_{\alpha}(\T(\rho)\|\T(\sigma))
\le
\lim_{\alpha\searrow 0}D_{\alpha}^{\test}(\rho\|\sigma),
\end{align*}
where the first equality is straightforward to verify, and the second equality and the inequality are by definition. The assertion then follows from this and the inequalities in 
Lemma \ref{lemma:Renyi order}, except when 
$\divv_{\alpha}(\rho\|\sigma)=\oll{D}_{\alpha}^{\test}(\rho\|\sigma)$, in which case it follows immediately from \eqref{testdiv standard bounds}.

\ref{1 limit 1} and \ref{1 limit 2}:\enspace
We have
\begin{align*}
\sup_{\alpha\in(0,1)}D_\alpha^\test(\rho\|\sigma)
&=\sup_{\alpha\in(0,1)}\sup_{T\in\bT(\hil)}D_\alpha(\T(\rho)\|\T(\sigma))\\
&=\sup_{T\in\bT(\hil)}\sup_{\alpha\in(0,1)}D_\alpha(\T(\rho)\|\T(\sigma)) \\
&=\sup_{T\in\bT(\hil)}D(\T(\rho)\|\T(\sigma))\\
&=D^\test(\rho\|\sigma),
\end{align*}
which gives 
$\lim_{\alpha\nearrow1}D_\alpha^\test(\rho\|\sigma)=D^\test(\rho\|\sigma)$. 
An analogous argument gives 
$\lim_{\alpha\nearrow1}D_\alpha^\meas(\rho\|\sigma)=D^\meas(\rho\|\sigma)$. 

If $\rho^0\nleq\sigma^0$ then 
$\lim_{\alpha\searrow1}D_\alpha^\test(\rho\|\sigma)=+\infty=D^\test(\rho\|\sigma)$
and
$\lim_{\alpha\searrow1}D_\alpha^\meas(\rho\|\sigma)=+\infty=D^\meas(\rho\|\sigma)$ 
hold trivially, and hence for the rest we assume 
$\rho^0\le\sigma^0$. Then 
\begin{align*}
\lim_{\alpha\searrow 1}D_\alpha^\test(\rho\|\sigma)
&=
\inf_{\alpha>1}D_\alpha^\test(\rho\|\sigma)\\
&=\inf_{\alpha>1}\sup_{T\in\bT(\hil)}D_\alpha(\T(\rho)\|\T(\sigma))\\
&=\sup_{T\in\bT(\hil)}\inf_{\alpha>1}D_\alpha(\T(\rho)\|\T(\sigma)) \\
&=\sup_{T\in\bT(\hil)}D(\T(\rho)\|\T(\sigma))\\
&=D^\test(\rho\|\sigma),
\end{align*}
where the third equality can be seen from the minimax theorem in
\cite[Corollary A.2]{MH}, since $T\mapsto D_\alpha(\T(\rho)\|\T(\sigma))$ is continuous
on the compact space $\bT(\hil)$. 
An analogous argument gives that 
$\lim_{\alpha\searrow 1}D_\alpha^\meas(\rho\|\sigma)=
D^\meas(\rho\|\sigma)$, using 
\eqref{meas def} and 
the continuity of 
$M\mapsto D_\alpha(\M(\rho)\|\M(\sigma))$ on the compact space
$\mathrm{PVM}_1(\hil)$.

\ref{1 limit 3}:\enspace
By \eqref{testdiv equalities} and Lemmas \ref{lemma:Dalpha monotone} and \ref{lemma:Renyi order},
it suffices to prove that
$\lim_{\alpha\nearrow 1}\oll{D}_\alpha^\test(\rho\|\sigma)\ge D(\rho\|\sigma)$. 
Since this holds trivially when $\rho=\sigma$, for the rest we assume that $\rho\ne\sigma$, or equivalently, that $D(\rho\|\sigma)>0$. 
Note that $H_r(\rho\|\sigma)/r>0\iff r\in(0,D(\rho\|\sigma))$, 
according to \eqref{Hr null},
and hence for a given $r\in(0,+\infty)$, there exists an 
$\alpha\in(0,1)$ such that $H_r(\rho\|\sigma)/r\ge (1-\alpha)/\alpha$
if and only if $r\in(0,D(\rho\|\sigma))$.
Using then Theorem \ref{thm:testdiv} we get 
\begin{align*}
\lim_{\alpha\nearrow 1}\oll{D}_\alpha^\test(\rho\|\sigma)
=
\sup_{\alpha\in(0,1)}\oll{D}_{\alpha}^{\test}(\rho\|\sigma)
=
\sup_{\alpha\in(0,1)}\sup\left\{r:\,\frac{H_r(\rho\|\sigma)}{r}\ge\frac{1-\alpha}{\alpha}\right\}
=
D(\rho\|\sigma),
\end{align*}
where the first equality is trivial from the previously stated monotonicity in $\alpha$.

\ref{infty limit}:\enspace
The assertion is trivial when $\rho^0\nleq\sigma^0$, so for the rest we assume that 
$\rho^0\le\sigma^0$. 
If $\lambda>0$ is such that $\rho\le \lambda\sigma$ then 
$\divv_{\alpha}(\rho\|\sigma)\le \lambda$ is easy to see 
for every $\alpha>1$, which implies
\begin{align}\label{infty proof1}
\divv_{\alpha}(\rho\|\sigma)\le D_{\max}(\rho\|\sigma),\ds\ds\ds\alpha>1.
\end{align}
It is easy to verify that 
\begin{align}
D_{\max}(\rho\|\sigma)
&=
\log\max\left\{\frac{\Tr \rho T}{\Tr\sigma T}:\,T\in\bT(\hil),
\,\Tr\sigma T>0\right\};
\label{LP duality}
\end{align}
see, e.g., \cite[Corollary III.45]{Mosonyi_sc_2021}.
Choosing a $T$ that attains the maximum in \eqref{LP duality} gives that 
\begin{align}
\lim_{\alpha\to +\infty}D_{\alpha}^{\test}(\rho\|\sigma)
\ge
\lim_{\alpha\to +\infty}D_{\alpha}(\T(\rho)\|\T(\sigma))
=
\log\frac{\Tr\rho T}{\Tr\sigma T}=D_{\max}(\rho\|\sigma).
\label{infty proof2}
\end{align} 
Choosing for every $n\in\bN$ the test $T_n:=T^{\otimes n}$ with the above $T$ yields
for every $\alpha>1$,
\begin{align*}
\frac{1}{n}D_{\alpha}^{\test}(\rho^{\otimes n}\|\sigma^{\otimes n})
&\ge
\frac{1}{n}\frac{1}{\alpha-1}\log\bz\bz\Tr\rho^{\otimes n}T_n\jz^{\alpha}
\bz\Tr\sigma^{\otimes n}T_n\jz^{1-\alpha}\jz\\
&=
\frac{\alpha}{\alpha-1}\log\frac{\Tr\rho T}{\Tr\sigma T}
+\frac{1}{\alpha-1}\log\Tr\sigma T\\
&=
\frac{\alpha}{\alpha-1}D_{\max}(\rho\|\sigma)
+\frac{1}{\alpha-1}\log\Tr\sigma T.
\end{align*}
Taking first the limit $n\to+\infty$ and then the limit $\alpha\to +\infty$, we get 
\begin{align}\label{infty proof3}
\lim_{\alpha\to +\infty}\oll{D}_{\alpha}^{\test}(\rho\|\sigma)
\ge
D_{\max}(\rho\|\sigma).
\end{align}
Combining the lower and the upper bounds in \eqref{infty proof1}--\eqref{infty proof3} with the inequalities in Lemma \ref{lemma:Renyi order}
gives the desired statement. 
\end{proof}

\begin{rem}
An alternative proof for 
$\lim_{\alpha\nearrow 1}\oll{D}_\alpha^\test(\rho\|\sigma)=D(\rho\|\sigma)$, given in 
\ref{1 limit 3} of Proposition \ref{prop:limits}, can be obtained from \eqref{alt repr4}, or more precisely,
from \cite[Lemma 2]{Salzmann_Datta21}; our case 
corresponds to $\lim_{s\to 0}$ in the latter. In detail, 
\begin{align*}
\lim_{\alpha\nearrow 1}\oll{D}_\alpha^\test(\rho\|\sigma)
&=
\sup_{\alpha\in(0,1)}\oll{D}_\alpha^\test(\rho\|\sigma)\\
&=\sup_{\alpha\in(0,1)}\sup_{t\in(0,1)}
{(t-1)D_t(\rho\|\sigma)\over t(2-1/\alpha)-1}\\
&=\sup_{t\in(0,1)}\sup_{\alpha\in(0,1)}
{(t-1)D_t(\rho\|\sigma)\over t(2-1/\alpha)-1} \\
&=\sup_{t\in(0,1)}D_t(\rho\|\sigma)
=D(\rho\|\sigma),
\end{align*}
which from the second equality is the same proof as in 
\cite[Lemma 2]{Salzmann_Datta21}.
\end{rem}

\begin{rem}
It is known that 
\begin{align}\label{sandwiched infty limit}
D_{\infty}\nw(\rho\|\sigma):=\lim_{\alpha\to+\infty}D_{\alpha}\nw(\rho\|\sigma)
=
D_{\max}(\rho\|\sigma)
\end{align}
for any pair of states $\rho,\sigma$; see \cite[Theorem 5]{Renyi_new} or 
\cite[Sec.~4.2.4]{TomamichelBook}.
This of course also implies that 
$\lim_{\alpha\to+\infty}\oll{D}_{\alpha}^{\test}(\rho\|\sigma)=
\lim_{\alpha\to+\infty}\hat D_{\alpha}^{\test}(\rho\|\sigma)=
\lim_{\alpha\to+\infty}\oll{D}_{\alpha}^{\meas}(\rho\|\sigma)=
D_{\max}(\rho\|\sigma)$, if we use the identities in \eqref{testdiv equalities};
however, in the proof of Proposition \ref{prop:limits} \ref{infty limit} above, we did not rely on these.

Vice versa, we can get a new proof of \eqref{sandwiched infty limit}
from Proposition \ref{prop:limits} \ref{infty limit}
by only using the monotonicity of $D_{\alpha}\nw$ under CPTP maps for $\alpha>1$ as follows. 
First, it is straightforward to verify that 
$D_{\alpha}\nw(\rho\|\sigma)\le D_{\max}(\rho\|\sigma)$.
Second, using monotonicity gives 
$\lim_{\alpha\to+\infty}D_{\alpha}\nw(\rho\|\sigma)
\ge
\lim_{\alpha\to+\infty}D_{\alpha}^{\test}(\rho\|\sigma)
=
D_{\max}(\rho\|\sigma)$,
where the equality is due to 
Proposition \ref{prop:limits} \ref{infty limit}.
\end{rem}

\subsection{Analysis of $\hat D_{\alpha}^{\test}$}
\label{sec:an2}

Our goal in this section is to show that the strict inequality 
$\hat D_{\alpha}^{\test}(\rho\|\sigma)<\oll{D}_{\alpha}^{\meas}(\rho\|\sigma)$
can hold for $\alpha\in(0,1)$; in fact, we show that this is the case for
any $\alpha\in(0,1)$ whenever $\rho$ and $\sigma$ are commuting states
(and hence $\oll{D}_{\alpha}^{\meas}(\rho\|\sigma)=D_{\alpha}(\rho\|\sigma)$), 
that are not equal and have the same supports.
Interestingly, we will derive this from the strict inequality 
$\oll{D}_{\alpha}^{\test}(\rho\|\sigma)<D_{\alpha}(\rho\|\sigma)$,
(established in Theorem \ref{thm:testdiv standard bounds})
which might seem a bit counter-intuitive at first sight, since 
$\oll{D}_{\alpha}^{\test}(\rho\|\sigma)\le 
\hat D_{\alpha}^{\test}(\rho\|\sigma)$; \
moreover, this last inequality might be strict, as we show below.
The key to go from the strict upper bound on $\oll{D}_{\alpha}^{\test}$
to the strict upper bound on $\hat D_{\alpha}^{\test}$
is a simple observation, given in the following two lemmas:

\begin{lemma}\label{lemma:sup eq lim}
If $\frac{1}{n}D_{\alpha}^{\test}(\rho^{\otimes n}\|\sigma^{\otimes n})<\hat D_{\alpha}^{\test}(\rho\|\sigma)$ for all $n$, then 
$\oll{D}_{\alpha}^{\test}(\rho\|\sigma)=
\hat D_{\alpha}^{\test}(\rho\|\sigma)$;
or equivalently, if 
$\oll{D}_{\alpha}^{\test}(\rho\|\sigma)<\hat D_{\alpha}^{\test}(\rho\|\sigma)$
then there exists an $n\in\bN$ such that 
$\frac{1}{n}D_{\alpha}^{\test}(\rho^{\otimes n}\|\sigma^{\otimes n})=\hat D_{\alpha}^{\test}(\rho\|\sigma)$.
\end{lemma}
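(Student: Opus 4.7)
The plan is to treat $a_n:=\frac{1}{n}D_{\alpha}^{\test}(\rho^{\otimes n}\|\sigma^{\otimes n})$ abstractly, so that the statement reduces to a purely real-analytic fact: if $s:=\sup_n a_n$ and $a_n<s$ for every $n$, then $\limsup_n a_n=s$ as well. Since $\limsup_n a_n\le\sup_n a_n$ is automatic (and amounts to $\oll{D}_{\alpha}^{\test}(\rho\|\sigma)\le\hat D_{\alpha}^{\test}(\rho\|\sigma)$), only the reverse inequality needs an argument under the stated hypothesis.

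The main step is to show that the tail supremum $\sup_{n>N}a_n$ equals $s$ for every $N\in\bN$. This will follow by a short contradiction: if $\sup_{n>N}a_n=s'<s$ for some $N$, then $s=\sup_n a_n=\max(\max_{n\le N}a_n,\,s')$, and since the maximum over the finite set $\{1,\dots,N\}$ is attained, some $n_0\le N$ would satisfy $a_{n_0}=s$, contradicting the hypothesis $a_n<s$ for all $n$.

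Once the tail supremum property is established, I would recursively pick indices $n_1<n_2<\cdots$ with $a_{n_k}>s-1/k$, producing a strictly increasing subsequence along which $a_{n_k}\to s$. This forces $\oll{D}_{\alpha}^{\test}(\rho\|\sigma)=\limsup_n a_n\ge s=\hat D_{\alpha}^{\test}(\rho\|\sigma)$, concluding the main implication. The ``equivalently'' reformulation is just the contrapositive, which is immediate because $a_n\le s$ always holds, so the negation of ``$a_n<s$ for all $n$'' is exactly the existence of some $n$ with $a_n=s$. There is really no obstacle here, as the argument is a routine exercise in real analysis that uses nothing about the test-measured R\'enyi divergence beyond the definitions of $\oll{D}_{\alpha}^{\test}$ and $\hat D_{\alpha}^{\test}$.
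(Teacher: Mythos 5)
Your proof is correct and follows essentially the same route as the paper's: both arguments reduce the claim to extracting a strictly increasing subsequence along which $\frac{1}{n}D_{\alpha}^{\test}(\rho^{\otimes n}\|\sigma^{\otimes n})$ tends to $\hat D_{\alpha}^{\test}(\rho\|\sigma)$, such a subsequence existing precisely because the supremum is not attained at any finite $n$. Your tail-supremum argument merely spells out the step the paper leaves implicit, so there is nothing to add.
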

\begin{proof}
If $\frac{1}{n}D_{\alpha}^{\test}(\rho^{\otimes n}\|\sigma^{\otimes n})<\hat D_{\alpha}^{\test}(\rho\|\sigma)$ for all $n$, then
there exists a strictly increasing function 
$k:\,\bN\to\bN$ such that 
\begin{align*}
\hat D_{\alpha}^{\test}(\rho\|\sigma)
=
\lim_{n\to+\infty}\frac{1}{k(n)}D_{\alpha}^{\test}(\rho^{\otimes k(n)}\|\sigma^{\otimes k(n)})
\le 
\oll{D}_{\alpha}^{\test}(\rho\|\sigma),
\end{align*}
where the inequality is by definition. Since $\oll{D}_{\alpha}^{\test}(\rho\|\sigma)\le
\hat D_{\alpha}^{\test}(\rho\|\sigma)$ is obvious, the statement follows.
\end{proof}

\begin{lemma}\label{lemma:maximal testdiv smaller}
If $\rho,\sigma\in\S(\hil)$ and $\alpha\in(0,1)$ are such that 
\begin{enumerate}
\item\label{maximal testdiv smaller1}
$\oll{D}_{\alpha}^{\test}(\rho\|\sigma)<\oll{D}_{\alpha}^{\meas}(\rho\|\sigma)$;
\item\label{maximal testdiv smaller2}
$\frac{1}{n}D_{\alpha}^{\test}(\rho^{\otimes n}\|\sigma^{\otimes n})<
\oll{D}_{\alpha}^{\meas}(\rho\|\sigma)$, $n\in\bN$,
\end{enumerate}
then $\hat D_{\alpha}^{\test}(\rho\|\sigma)< \oll{D}_{\alpha}^{\meas}(\rho\|\sigma)$.
\end{lemma}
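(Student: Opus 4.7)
The plan is to proceed by a simple dichotomy based on whether $\oll{D}_{\alpha}^{\test}(\rho\|\sigma)$ equals $\hat D_{\alpha}^{\test}(\rho\|\sigma)$ or is strictly smaller, and to handle each case using one of the two hypotheses together with Lemma \ref{lemma:sup eq lim}. Recall from Lemma \ref{lemma:Renyi order} that we always have $\oll{D}_{\alpha}^{\test}(\rho\|\sigma)\le\hat D_{\alpha}^{\test}(\rho\|\sigma)$.

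First I would consider the easy case $\hat D_{\alpha}^{\test}(\rho\|\sigma)=\oll{D}_{\alpha}^{\test}(\rho\|\sigma)$: then hypothesis \ref{maximal testdiv smaller1} immediately gives the conclusion $\hat D_{\alpha}^{\test}(\rho\|\sigma)<\oll{D}_{\alpha}^{\meas}(\rho\|\sigma)$. In the complementary case $\oll{D}_{\alpha}^{\test}(\rho\|\sigma)<\hat D_{\alpha}^{\test}(\rho\|\sigma)$, I invoke the equivalent formulation of Lemma \ref{lemma:sup eq lim}: the strict inequality $\oll{D}_{\alpha}^{\test}(\rho\|\sigma)<\hat D_{\alpha}^{\test}(\rho\|\sigma)$ forces the existence of some $n\in\bN$ at which the supremum in the definition of $\hat D_{\alpha}^{\test}$ is attained, i.e.,
\begin{align*}
\tfrac{1}{n}D_{\alpha}^{\test}(\rho^{\otimes n}\|\sigma^{\otimes n})=\hat D_{\alpha}^{\test}(\rho\|\sigma).
\end{align*}
Hypothesis \ref{maximal testdiv smaller2} applied to this particular $n$ then yields $\hat D_{\alpha}^{\test}(\rho\|\sigma)<\oll{D}_{\alpha}^{\meas}(\rho\|\sigma)$, which completes the argument.

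There is no real obstacle here; the only conceptual point is recognizing that the ``strict supremum versus limsup'' alternative recorded in Lemma \ref{lemma:sup eq lim} is exactly what allows us to convert the two ingredients—a strict bound on the limsup regularization and finite-level strict bounds—into a strict bound on the sup-regularization.
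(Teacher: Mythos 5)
Your proof is correct and is essentially the paper's argument in a slightly different packaging: the paper assumes $\hat D_{\alpha}^{\test}(\rho\|\sigma)=\oll{D}_{\alpha}^{\meas}(\rho\|\sigma)$ and derives a contradiction via the first formulation of Lemma \ref{lemma:sup eq lim}, while you run a direct case split on $\oll{D}_{\alpha}^{\test}$ versus $\hat D_{\alpha}^{\test}$ using the second (equivalent) formulation of the same lemma. Both arguments rest on exactly the same ingredient, so there is nothing to add.
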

\begin{proof}
Assume that $\hat D_{\alpha}^{\test}(\rho\|\sigma)= \oll{D}_{\alpha}^{\meas}(\rho\|\sigma)$. Then, by \ref{maximal testdiv smaller2} above, 
$\frac{1}{n}D_{\alpha}^{\test}(\rho^{\otimes n}\|\sigma^{\otimes n})<
\oll{D}_{\alpha}^{\meas}(\rho\|\sigma)=
\hat D_{\alpha}^{\test}(\rho\|\sigma)$ for all $n$, i.e., 
the assumption in Lemma \ref{lemma:sup eq lim} holds, and therefore
$\oll{D}_{\alpha}^{\test}(\rho\|\sigma)
=\hat D_{\alpha}^{\test}(\rho\|\sigma)
=
\oll{D}_{\alpha}^{\meas}(\rho\|\sigma)$, 
contradicting \ref{maximal testdiv smaller1}.
\end{proof}

We have seen in Theorem \ref{thm:testdiv standard bounds} that \ref{maximal testdiv smaller1} in 
Lemma \ref{lemma:maximal testdiv smaller} holds for generic commuting pairs, and hence 
for the rest we focus on finite-copy bounds as in \ref{maximal testdiv smaller2}
of Lemma \ref{lemma:maximal testdiv smaller}.
First, we consider the case where
equality holds in \eqref{test-meas ineq}.

\begin{lemma}\label{lemma:test-meas eq}
Let $\rho,\sigma\in\S(\hil)$ and $\alpha\in(0,1)$.
If $D_{\alpha}^{\test}(\rho\|\sigma)=D_{\alpha}^{\meas}(\rho\|\sigma)$ holds, then
for any projection $P=T$ attaining the maximum in \eqref{test-div pr2}, we have 
\begin{align}\label{test-meas eq}
(\Tr\sigma P)P\rho P=(\Tr\rho P)P\sigma P,\ds\ds\ds
(\Tr\sigma P^{\perp})P^{\perp}\rho P^{\perp}=(\Tr\rho P^{\perp})P^{\perp}\sigma P^{\perp}.
\end{align}
If, moreover, $\rho\ne \sigma$ then $\rho^0\le\sigma^0$ implies $P\sigma P\ne 0$, $P^{\perp}\sigma P^{\perp}\ne 0$,
and $\rho^0\ge\sigma^0$ implies $P\rho P\ne 0$, $P^{\perp}\rho P^{\perp}\ne 0$.
\end{lemma}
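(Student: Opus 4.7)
The plan is to exploit that the $2$-outcome PVM $(P,P^\perp)$ attains $D_\alpha^{\meas}(\rho\|\sigma)$ under the hypothesis $D_\alpha^{\test}=D_\alpha^{\meas}$, and compare it against refined PVMs. For any projection $Q\le P$, the $3$-outcome PVM $(Q,P-Q,P^\perp)$ has Rényi divergence at most $D_\alpha^{\meas}(\rho\|\sigma)=D_\alpha(\T(\rho)\|\T(\sigma))$. On the other hand, for $\alpha\in(0,1)$ Hölder's inequality gives the superadditivity
\begin{align*}
(\Tr\rho Q)^\alpha(\Tr\sigma Q)^{1-\alpha}+(\Tr\rho(P-Q))^\alpha(\Tr\sigma(P-Q))^{1-\alpha}
\le (\Tr\rho P)^\alpha(\Tr\sigma P)^{1-\alpha},
\end{align*}
so the sum $\sum_x(\Tr\rho M_x)^\alpha(\Tr\sigma M_x)^{1-\alpha}$ can only decrease under this refinement; dividing by $\alpha-1<0$ shows that the Rényi divergence can only weakly increase, and combined with the previous upper bound this forces equality throughout.

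Next I would use the equality case of Hölder (in the non-degenerate situation $\Tr\rho P>0$ and $\Tr\sigma P>0$; the other cases make the first identity in \eqref{test-meas eq} trivially true) to get $\Tr\rho Q\cdot\Tr\sigma P=\Tr\sigma Q\cdot\Tr\rho P$ for every projection $Q\le P$. Rewriting this using $\Tr\rho Q=\Tr(P\rho P)Q$ and analogously for $\sigma$, the condition becomes $\Tr AQ=0$ for every projection $Q\le P$, with $A:=(\Tr\sigma P)P\rho P-(\Tr\rho P)P\sigma P$ a self-adjoint operator supported on $\ran P$. Letting $Q$ range over the spectral projections of $A$ viewed as an operator on $\ran P$ then forces $A=0$, which is the first identity. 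The second identity follows by the symmetric refinement argument, replacing the $P$-block by a refinement of the $P^\perp$-block of the form $(P,Q',P^\perp-Q')$ with $Q'\le P^\perp$.

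For the second assertion I would argue by contradiction. Assume $\rho\ne\sigma$, $\rho^0\le\sigma^0$, and $P\sigma P=0$; then $\Tr\sigma P=0$, so $P\le I-\sigma^0\le I-\rho^0$, whence also $\Tr\rho P=0$. Consequently $\T(\rho)=\T(\sigma)=(0,1)$ and $D_\alpha(\T(\rho)\|\T(\sigma))=0$, so by the maximality of $P$ and the hypothesis we would get $D_\alpha^{\meas}(\rho\|\sigma)=0$, contradicting the strict positivity of $D_\alpha^{\meas}$ for distinct states (via Lemma \ref{lemma:testdiv strictly positive} and the chain in Lemma \ref{lemma:Renyi order}). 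The same argument with $P^\perp$ in place of $P$ gives $P^\perp\sigma P^\perp\ne 0$, and the case $\rho^0\ge\sigma^0$ reduces to the one just treated using Lemma \ref{lemma:alpha swap2}: indeed $D_\alpha^{\test}(\rho\|\sigma)=D_\alpha^{\meas}(\rho\|\sigma)$ is equivalent to $D_{1-\alpha}^{\test}(\sigma\|\rho)=D_{1-\alpha}^{\meas}(\sigma\|\rho)$, and the same projection $P$ attains both maxima.

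The step I expect to be most delicate is the passage from the family of scalar identities $\Tr AQ=0$, $Q\le P$, to the operator equation $A=0$: one has to handle simultaneously the degenerate cases where $\Tr\rho Q$, $\Tr\sigma Q$, or one of $\Tr\rho P$, $\Tr\sigma P$ vanishes (so that Hölder equality becomes automatic and carries no information) and verify that the spectral projections of $A$ on $\ran P$ are admissible choices of $Q$ that actually produce the desired identity.
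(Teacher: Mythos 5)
Your proof is correct and follows essentially the same route as the paper's: both squeeze a refinement of the optimal two\nobreakdash-outcome PVM between $D_{\alpha}^{\test}(\rho\|\sigma)$ (via H\"older's inequality) and $D_{\alpha}^{\meas}(\rho\|\sigma)$ (by definition), extract \eqref{test-meas eq} from the equality case of H\"older, and prove the support assertion by the same contradiction with the strict positivity of $D_{\alpha}^{\test}$. The only cosmetic difference is that the paper refines all the way to an orthonormal basis adapted to $P$ and varies the basis, while you refine only to $(Q,P-Q,P^{\perp})$ and let $Q$ run over spectral projections of the difference operator; both correctly yield the operator identity, and your degenerate cases ($\Tr\rho P=0$ or $\Tr\sigma P=0$) are indeed trivial as you note.
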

\begin{proof}
Assume that $D_{\alpha}^{\test}(\rho\|\sigma)=D_{\alpha}^{\meas}(\rho\|\sigma)$
holds, and let $P=T$ be a projection attaining the maximum in \eqref{test-div pr2}.
Let $(e_i)_{i=1}^r$ be an orthonormal basis in $\ran P$, and $(e_i)_{i=r+1}^d$ be an 
orthonormal basis in $\ran P^{\perp}$. By H\"older's inequality,
\begin{align}
\sum_{i=1}^r\inner{e_i}{\rho e_i}^{\alpha}\inner{e_i}{\sigma e_i}^{1-\alpha}
&\le \bz\sum_{i=1}^r\inner{e_i}{\rho e_i}\jz^{\alpha}
\bz\sum_{i=1}^r\inner{e_i}{\sigma e_i}\jz^{1-\alpha}
=
(\Tr\rho P)^{\alpha}(\Tr\sigma P)^{1-\alpha},\label{test-meas eq proof}\\
\sum_{i=r+1}^d\inner{e_i}{\rho e_i}^{\alpha}\inner{e_i}{\sigma e_i}^{1-\alpha}
&\le \bz\sum_{i=r+1}^d\inner{e_i}{\rho e_i}\jz^{\alpha}
\bz\sum_{i=r+1}^d\inner{e_i}{\sigma e_i}\jz^{1-\alpha}
=
(\Tr\rho P^{\perp})^{\alpha}(\Tr\sigma P^{\perp})^{1-\alpha}.\label{test-meas eq proof1}
\end{align}
Thus,
\begin{align*}
D_\alpha^\meas(\rho\|\sigma)
&\ge{1\over\alpha-1}\log\sum_{i=1}^d\inner{e_i}{\rho e_i}^\alpha
\inner{e_i}{\sigma e_i}^{1-\alpha} \\
&\ge{1\over\alpha-1}\log\bz(\Tr\rho P)^\alpha(\Tr\sigma P)^{1-\alpha}
+(\Tr\rho P^\perp)^\alpha(\Tr\sigma P^\perp)^{1-\alpha}\jz \\
&=D_\alpha^{\test}(\rho\|\sigma),
\end{align*}
where the first inequality is by definition, and the second one follows from 
\eqref{test-meas eq proof}--\eqref{test-meas eq proof1}.
Therefore, the assumed equality implies that the inequalities in 
\eqref{test-meas eq proof}--\eqref{test-meas eq proof1} hold as equalities.
Using the characterization of equality in H\"older's inequality, 
we get 
\begin{align}\label{test-meas eq proof2}
(\Tr\sigma P)\inner{e_i}{\rho e_i}=(\Tr\rho P)\inner{e_i}{\sigma e_i},\ds\ds\ds
i\in[r],
\end{align}
and similarly for $P^{\perp}$ in place of $P$ and $i=r+1,\ldots,d$.
Since \eqref{test-meas eq proof2} holds for every orthonormal basis in $\ran P$, and any unit vector $\psi\in\ran P$ can be extended to such an orthonormal basis, we get that 
\begin{align*}
(\Tr\sigma P)\inner{\psi}{\rho \psi}=(\Tr\rho P)\inner{\psi}{\sigma \psi},\ds\ds\ds
\psi\in\ran P,
\end{align*}
or equivalently,
\begin{align*}
\inner{\psi}{\left[(\Tr\sigma P)P\rho P-(\Tr\rho P)P\sigma P\right]\psi}=0,\ds\ds\ds
\psi\in\hil,
\end{align*}
and similarly with $P^{\perp}$ in place of $P$, which is exactly
\eqref{test-meas eq}.

Consider now the case
$\rho\ne\sigma$ and $\rho^0\le\sigma ^0$.
Assume that $P\sigma P=0$, or equivalently, 
$\sigma^0\le P^{\perp}$. Then we also have $\rho^0\le P^{\perp}$, and
\begin{align*}
D_{\alpha}^{\test}(\rho\|\sigma)
&=
D_{\alpha}(\P(\rho)\|\P(\sigma))
=
\frac{1}{\alpha-1}\log\underbrace{(\Tr\rho P^{\perp})^{\alpha}}_{=1}\underbrace{(\Tr\sigma 
P^{\perp})^{1-\alpha}}_{=1}
=0<D_{\alpha}^{\test}(\rho\|\sigma),
\end{align*}
where $\P$ is defined analogously to $\T$ in \eqref{postmeas}, and
the last inequality is due to the strict positivity of $D_{\alpha}^{\test}$ given in 
Lemma \ref{lemma:testdiv strictly positive}. This is a contradiction, and hence 
$P\sigma P=0$ cannot hold. Replacing $P$ with $P^{\perp}$ in the same argument gives that 
$P^{\perp}\sigma P^{\perp}=0$ cannot hold, either.
The assertion about the case $\rho^0\ge\sigma^0$ follows in the same way.
\end{proof}

\begin{cor}\label{cor:classical equality}
Assume that $\rho,\sigma\in\S(\hil)$ commute, and hence can be written as 
in \eqref{commuting pair}.
Then the following are equivalent:
\begin{enumerate}
\item\label{classical equality 0}
$D_{\alpha}^{\test}(\rho\|\sigma)=D_{\alpha}(\rho\|\sigma)$ for all $\alpha\in(0,1)$;
\item\label{classical equality 1}
$D_{\alpha}^{\test}(\rho\|\sigma)=D_{\alpha}(\rho\|\sigma)$ for some $\alpha\in(0,1)$;
\item\label{classical equality 2}
there exists a subset $\Omega_0\subseteq\Omega$ 
such that 
\begin{align}\label{classical equality}
\sigma(\Omega_0)\rho(\omega)=\rho(\Omega_0)\sigma(\omega),\ds\omega\in\Omega_0,\ds\ds\ds
\sigma(\Omega_1)\rho(\omega)=\rho(\Omega_1)\sigma(\omega),\ds\omega\in\Omega_1,
\end{align}
where $\Omega_1:=\Omega\setminus\Omega_0$, and 
$\sigma(\Omega_0):=\sum_{\omega\in\Omega_0}\sigma(\omega)$, etc.
\end{enumerate}
Moreover, if any (and hence all) of the above holds then 
\begin{align*}
D_{\alpha}^{\test}(\rho\|\sigma)=
\hat D_{\alpha}^{\test}(\rho\|\sigma)=
D_{\alpha}(\rho\|\sigma),\ds\ds\ds\alpha\in(0,1).
\end{align*}
\end{cor}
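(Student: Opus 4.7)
The plan is to prove the chain \ref{classical equality 0}$\Rightarrow$\ref{classical equality 1}$\Rightarrow$\ref{classical equality 2}$\Rightarrow$\ref{classical equality 0}, with the final assertion about $\hat D_\alpha^{\test}$ following afterwards by a short sandwich argument. The implication \ref{classical equality 0}$\Rightarrow$\ref{classical equality 1} is immediate.

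For \ref{classical equality 1}$\Rightarrow$\ref{classical equality 2}, I would invoke Remark \ref{rem:subalgebra}: since $\rho$ and $\sigma$ commute, the maximum in \eqref{test-div pr2} is attained at a projection $P$ that is diagonal in the common eigenbasis $(\ket{\omega})_{\omega\in\Omega}$, so $P=\sum_{\omega\in\Omega_0}\pr{\omega}$ for some $\Omega_0\subseteq\Omega$. Lemma \ref{lemma:test-meas eq} then gives $(\Tr\sigma P)P\rho P=(\Tr\rho P)P\sigma P$ and the corresponding identity for $P^{\perp}$; reading off the diagonal entries in the basis $(\ket{\omega})_{\omega\in\Omega}$ yields exactly the two conditions in \eqref{classical equality} with this choice of $\Omega_0$. (Note: I must also check that $\rho=\sigma$ is not a pathological case, but then any $\Omega_0$ trivially works.)

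For \ref{classical equality 2}$\Rightarrow$\ref{classical equality 0}, the key observation is that \eqref{classical equality} says $\rho(\omega)=c_0\sigma(\omega)$ on $\Omega_0$ and $\rho(\omega)=c_1\sigma(\omega)$ on $\Omega_1$, where $c_j:=\rho(\Omega_j)/\sigma(\Omega_j)$. Taking the test $T:=\sum_{\omega\in\Omega_0}\pr{\omega}$, a direct computation gives
\begin{align*}
\sum_{\omega}\rho(\omega)^{\alpha}\sigma(\omega)^{1-\alpha}
&=c_0^{\alpha}\sigma(\Omega_0)+c_1^{\alpha}\sigma(\Omega_1)\\
&=(\Tr\rho T)^{\alpha}(\Tr\sigma T)^{1-\alpha}+(\Tr\rho T^{\perp})^{\alpha}(\Tr\sigma T^{\perp})^{1-\alpha},
\end{align*}
so $D_{\alpha}(\T(\rho)\|\T(\sigma))=D_{\alpha}(\rho\|\sigma)$ for every $\alpha\in(0,1)$. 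Combined with the general upper bound $D_{\alpha}^{\test}\le D_{\alpha}$ from Lemma \ref{lemma:Renyi order} (via \eqref{Renyi order0}, \eqref{Renyi order4} and $\oll D^{\meas}_\alpha=D_\alpha$ in the classical case, cf. Remark \ref{rem:Renyi order strict}), this yields \ref{classical equality 0}.

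For the final ``moreover" assertion, once \ref{classical equality 0} holds the bounds $D_{\alpha}^{\test}(\rho\|\sigma)\le\hat D_{\alpha}^{\test}(\rho\|\sigma)\le\oll D_{\alpha}^{\meas}(\rho\|\sigma)=D_{\alpha}(\rho\|\sigma)$ (the first from taking $n=1$ in the definition of $\hat D_\alpha^{\test}$, the second from Lemma \ref{lemma:Renyi order}, and the third from commutativity) sandwich all quantities to the same value. I expect no genuine obstacle here; the main subtlety is simply tracking that Remark \ref{rem:subalgebra} really lets one choose the optimal test to be diagonal in the common eigenbasis, so that Lemma \ref{lemma:test-meas eq} can be read coordinate-wise to produce the set $\Omega_0$.
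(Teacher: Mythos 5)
Your proof is correct and takes essentially the same route as the paper's: \ref{classical equality 0}$\imp$\ref{classical equality 1} trivially, \ref{classical equality 1}$\imp$\ref{classical equality 2} via Remark \ref{rem:subalgebra} and Lemma \ref{lemma:test-meas eq} (whose hypothesis $D_{\alpha}^{\test}=D_{\alpha}^{\meas}$ is met because $D_{\alpha}^{\meas}=D_{\alpha}$ for commuting states), and \ref{classical equality 2}$\imp$\ref{classical equality 0} by the same direct computation with the diagonal projective test, with the ``moreover'' part obtained by the same sandwich $D_{\alpha}^{\test}\le\hat D_{\alpha}^{\test}\le D_{\alpha}$. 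The only cosmetic caveat is that your constants $c_j=\rho(\Omega_j)/\sigma(\Omega_j)$ presuppose $\sigma(\Omega_j)>0$, whereas the cross-multiplied form \eqref{classical equality} used in the paper's computation also covers the degenerate case $\sigma(\Omega_j)=0$ without comment.
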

\begin{proof}
The implication \ref{classical equality 0}$\imp$\ref{classical equality 1} is obvious.
By Remark \ref{rem:subalgebra}, an optimal projective test $P$ as in Lemma \ref{lemma:test-meas eq} can be written as $P=\sum_{\omega\in\Omega_0}\pr{\omega}$ for some 
$\Omega_0\subseteq\Omega$, and 
\ref{classical equality 1}$\imp$\ref{classical equality 2}
follows from Lemma \ref{lemma:test-meas eq}.

Assume now that \ref{classical equality 2} holds, and 
let $P:=\sum_{\omega\in\Omega_0}\pr{\omega}$. A straightforward computation yields that 
for any $\alpha\in(0,1)$,
\begin{align*}
D_{\alpha}(\P(\rho)\|\P(\sigma))
=
\frac{1}{\alpha-1}\log
\bz \rho(\Omega_0)^{\alpha}\sigma(\Omega_0)^{1-\alpha}
+
\rho(\Omega_1)^{\alpha}\sigma(\Omega_1)^{1-\alpha}
\jz
=
D_{\alpha}(\rho\|\sigma),
\end{align*}
whence $D_{\alpha}^{\test}(\rho\|\sigma)=D_{\alpha}(\rho\|\sigma)$.
Thus, \ref{classical equality 0} holds.

The last assertion is immediate from the above and the inequalities
$D_{\alpha}^{\test}(\rho\|\sigma)\le
\hat D_{\alpha}^{\test}(\rho\|\sigma)\le
D_{\alpha}(\rho\|\sigma)$, $\alpha\in(0,1)$.
\end{proof}

Corollaries \ref{cor:classical equality} and \ref{cor:oll test equal Petz}
yield immediately that $\oll{D}_{\alpha}^{\test}(\rho\|\sigma)$ and 
$\hat D_{\alpha}^{\test}(\rho\|\sigma)$ need not be equal in general:

\begin{thm}\label{thm:different reg}
Let $\rho=\sum_{\omega\in\Omega}\rho(\omega)\pr{\omega}$ and 
$\sigma=\sum_{\omega\in\Omega}\sigma(\omega)\pr{\omega}$ be unequal commuting states, and assume that \eqref{classical equality}
holds with $\rho(\Omega_k)\ne 0$, $\sigma(\Omega_k)\ne 0$, $k=0,1$. Then
\begin{align}\label{different test regularizations}
\oll{D}_{\alpha}^{\test}(\rho\|\sigma)<
\hat D_{\alpha}^{\test}(\rho\|\sigma)=
D_{\alpha}(\rho\|\sigma),\ds\ds\ds\alpha\in(0,1).
\end{align}
In particular, this is the case for unequal commuting qubit states with full support.
\end{thm}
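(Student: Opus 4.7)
The plan is to derive this theorem as a direct consequence of two earlier results: Corollary~\ref{cor:classical equality} supplies the equality $\hat D_\alpha^\test(\rho\|\sigma) = D_\alpha(\rho\|\sigma)$, while Theorem~\ref{thm:testdiv standard bounds} will supply the strict inequality $\oll{D}_\alpha^\test(\rho\|\sigma) < D_\alpha(\rho\|\sigma)$. The hypothesis \eqref{classical equality} (together with $\rho(\Omega_k),\sigma(\Omega_k)\ne 0$, which is used only to define the constants below) is exactly condition (iii) of Corollary~\ref{cor:classical equality}, so the equality half of \eqref{different test regularizations} is immediate from the ``moreover'' clause there; no further work is required for it.

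The main task is the strict inequality, for which I will verify condition (iv) of Theorem~\ref{thm:testdiv standard bounds}: both $\beta \mapsto D_\beta(\rho\|\sigma)$ and $\beta \mapsto D_\beta(\sigma\|\rho)$ are strictly increasing on $(0,1)$. Unpacking \eqref{classical equality} with $\sigma(\Omega_k) \ne 0$ shows that on each $\Omega_k$ the ratio $\rho(\omega)/\sigma(\omega)$ equals the constant $c_k := \rho(\Omega_k)/\sigma(\Omega_k)$. If $c_0 = c_1$ then $\rho(\omega)=c_0\sigma(\omega)$ for all $\omega$, and normalization forces $c_0=1$, i.e., $\rho = \sigma$, contradicting the standing hypothesis; hence $c_0 \ne c_1$, so $\rho/\sigma$ is non-constant on $\supp\sigma$.

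The classical specialization of Lemma~\ref{lemma:Dalpha monotone} now applies: in the commuting case, the degenerate form \eqref{constant Renyi char} is precisely the assertion that $\rho/\sigma$ is a single constant on $\supp\sigma$. Since we just ruled this out, $\beta \mapsto D_\beta(\rho\|\sigma)$ is strictly increasing on $(0,1)$; interchanging the roles of $\rho$ and $\sigma$ (using the symmetric hypothesis $\rho(\Omega_k) \ne 0$ so that the reciprocal ratios $1/c_k$ are well-defined and also distinct) yields strict increase of $\beta \mapsto D_\beta(\sigma\|\rho)$ by the same one-line argument. Theorem~\ref{thm:testdiv standard bounds} then delivers $\oll{D}_\alpha^\test(\rho\|\sigma) < D_\alpha(\rho\|\sigma)$ for every $\alpha \in (0,1)$, completing \eqref{different test regularizations}.

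For the qubit addendum, any two unequal commuting full-support states on $\bC^2$ can be diagonalized in a common orthonormal basis $\{\ket{0}, \ket{1}\}$; setting $\Omega_0 = \{0\}$, $\Omega_1 = \{1\}$ makes \eqref{classical equality} a tautology since each $\Omega_k$ is a singleton, and full support guarantees $\rho(\Omega_k), \sigma(\Omega_k) > 0$, so the main statement applies. I do not anticipate any real obstacle: the theorem is a packaging of Corollary~\ref{cor:classical equality} and Theorem~\ref{thm:testdiv standard bounds}, and the only genuinely new content is the one-line verification that $c_0 \ne c_1$, which is forced by $\rho \ne \sigma$ and normalization.
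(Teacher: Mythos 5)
Your proof is correct and takes essentially the same route as the paper's: the equality half comes from Corollary~\ref{cor:classical equality}, and the strict inequality from observing $\rho(\Omega_0)/\sigma(\Omega_0)\ne\rho(\Omega_1)/\sigma(\Omega_1)$ (else $\rho=\sigma$) and then applying the characterization of when $\oll{D}_{\alpha}^{\test}=D_{\alpha}$. The only cosmetic difference is that the paper invokes condition \ref{strict ineq4-1} of Corollary~\ref{cor:oll test equal Petz} while you directly verify the (equivalent, negated) condition \ref{strict ineq4} of Theorem~\ref{thm:testdiv standard bounds} via Lemma~\ref{lemma:Dalpha monotone}; your side remark that \eqref{constant Renyi char} means $\rho/\sigma$ is constant on $\supp\sigma$ should strictly say ``constant on $\supp\rho$ with $\supp\rho\subseteq\supp\sigma$,'' but this is immaterial here since the hypotheses force $\supp\rho=\supp\sigma$.
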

\begin{proof}
The equality in \eqref{different test regularizations} is immediate from 
Corollary \ref{cor:classical equality}.
Note that $\frac{\rho(\Omega_0)}{\sigma(\Omega_0)}\ne\frac{\rho(\Omega_1)}{\sigma(\Omega_1)}$, 
since otherwise \eqref{classical equality} would imply $\rho=\sigma$, contrary to the
assumption. Thus, by \eqref{classical equality},
neither a) nor b) in \ref{strict ineq4-1} of 
Corollary \ref{cor:oll test equal Petz} hold, whence 
$\oll{D}_{\alpha}^{\test}(\rho\|\sigma)<D_{\alpha}(\rho\|\sigma)$,
$\alpha\in(0,1)$. The last assertion about qubit states is obvious.
\end{proof}
\medskip

Finally, we show (in Theorem \ref{thm:an2 main} below)
that \eqref{classical equality} is not only sufficient but also necessary 
for the equality 
$\hat D_{\alpha}^{\test}(\rho\|\sigma)=D_{\alpha}(\rho\|\sigma)$
in the case of generic commuting states, thereby giving a complete
and practically verifiable condition for this equality for such pairs of states. 
To this end, we fist prove the following two lemmas:

\begin{lemma}\label{lemma:single-copy equality}
For any $\rho,\sigma\in\S(\hil)$,
\begin{align}
D_{\alpha}^{\test}(\rho\|\sigma)\le
\begin{cases}
\oll{D}_{\alpha}^{\meas}(\rho\|\sigma),&\alpha\in(0,1/2)\cup(1/2,1),\\
D_{\alpha}(\rho\|\sigma),&\alpha=1/2.
\end{cases}
\label{classical sh strict ineq1}
\end{align}
Moreover, if $\rho^0\le\sigma^0$ 
and equality holds above for some $\alpha\in[1/2,1)$, or 
$\rho^0\ge \sigma^0$ 
and equality holds above for some $\alpha\in(0,1/2]$, 
then 
$\rho$ and $\sigma$ commute (whence they can be written as in \eqref{commuting pair}), and 
\ref{classical equality 2} of Corollary \ref{cor:classical equality} holds.
\end{lemma}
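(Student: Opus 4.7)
The plan is to reduce the statement to two results already in hand: the commutativity criterion in Proposition \ref{prop:single-copy equality meas} and the classical characterization in Corollary \ref{cor:classical equality}.

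The inequalities in \eqref{classical sh strict ineq1} are immediate from the chain $D_{\alpha}^{\test}(\rho\|\sigma)\le D_{\alpha}^{\meas}(\rho\|\sigma)\le \oll{D}_{\alpha}^{\meas}(\rho\|\sigma)\le D_{\alpha}(\rho\|\sigma)$ supplied by Lemma \ref{lemma:Renyi order}, the last inequality being needed only at $\alpha=1/2$. So the real content is the equality analysis.

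Next, suppose equality holds in \eqref{classical sh strict ineq1}. For $\alpha\in(0,1/2)\cup(1/2,1)$, the hypothesis $D_{\alpha}^{\test}(\rho\|\sigma)=\oll{D}_{\alpha}^{\meas}(\rho\|\sigma)$ squeezed in the above chain forces $D_{\alpha}^{\meas}(\rho\|\sigma)=\oll{D}_{\alpha}^{\meas}(\rho\|\sigma)$; I would then invoke the clause of Proposition \ref{prop:single-copy equality meas} matching the given support assumption (clause (i) when $\rho^0\le\sigma^0$ and $\alpha\in(1/2,1)$, clause (ii) when $\rho^0\ge\sigma^0$ and $\alpha\in(0,1/2)$) to conclude $\rho\sigma=\sigma\rho$. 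For $\alpha=1/2$, the hypothesis $D_{1/2}^{\test}(\rho\|\sigma)=D_{1/2}(\rho\|\sigma)$ together with the chain $D_{1/2}^{\test}\le D_{1/2}^{\meas}\le D_{1/2}$ collapses to $D_{1/2}^{\meas}(\rho\|\sigma)=D_{1/2}(\rho\|\sigma)$, and clause (iii) of Proposition \ref{prop:single-copy equality meas} again delivers $\rho\sigma=\sigma\rho$ under either of the two support assumptions.

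Once commutativity is established, Remark \ref{rem:Renyi order strict} yields $\oll{D}_{\alpha}^{\meas}(\rho\|\sigma)=D_{\alpha}(\rho\|\sigma)$ for every $\alpha\in(0,1)$, so in every case the assumed equality upgrades to $D_{\alpha}^{\test}(\rho\|\sigma)=D_{\alpha}(\rho\|\sigma)$ for the given $\alpha$. The existence of a subset $\Omega_0\subseteq\Omega$ satisfying \eqref{classical equality} then follows from the implication \ref{classical equality 1}$\imp$\ref{classical equality 2} in Corollary \ref{cor:classical equality}. The whole proof is structurally a sandwich argument feeding into Proposition \ref{prop:single-copy equality meas} and then into Corollary \ref{cor:classical equality}; I do not foresee any substantial obstacle beyond the routine bookkeeping that correctly matches each support condition to the appropriate clause of Proposition \ref{prop:single-copy equality meas}.
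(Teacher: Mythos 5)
Your proposal is correct and follows essentially the same route as the paper: the inequality comes from $D_{\alpha}^{\test}\le D_{\alpha}^{\meas}$ combined with Proposition \ref{prop:single-copy equality meas}, equality is squeezed down to the corresponding equality for $D_{\alpha}^{\meas}$ whose clauses yield commutativity, and then Corollary \ref{cor:classical equality} (\ref{classical equality 1}$\imp$\ref{classical equality 2}) finishes the argument. Your write-up merely makes explicit the clause-matching and the upgrade to $D_{\alpha}^{\test}=D_{\alpha}$ that the paper leaves implicit.
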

\begin{proof}
The inequality and the statement about the case of equality follows immediately 
from Proposition \ref{prop:single-copy equality meas} and the obvious inequality
$D_{\alpha}^{\test}(\rho\|\sigma)\le D_{\alpha}^{\meas}(\rho\|\sigma)$.
In particular, equality in \eqref{classical sh strict ineq1}
under the stated assumptions yields that 
$\rho$ and $\sigma$ commute, and 
\ref{classical equality 1} of Corollary \ref{cor:classical equality} holds, whence
\ref{classical equality 2} of the same corollary holds as well. 
\end{proof}

\begin{lemma}\label{lemma:classical strict inequality}
Let $\rho,\sigma\in\S(\hil)$ be unequal states with $\rho^0=\sigma^0$. For every 
$n\ge 2$,
\begin{align}
\frac{1}{n}D_{\alpha}^{\test}(\rho^{\otimes n}\|\sigma^{\otimes n})<
\begin{cases}
\oll{D}_{\alpha}^{\meas}(\rho\|\sigma),&\alpha\in(0,1/2)\cup(1/2,1),\\
D_{\alpha}(\rho\|\sigma),&\alpha=1/2.
\end{cases}
\label{classical strict ineq1}
\end{align}
\end{lemma}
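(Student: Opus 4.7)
The plan is to apply Lemma \ref{lemma:single-copy equality} to the tensor-power pair $\rho^{\otimes n},\sigma^{\otimes n}$, whose supports are equal (so that both support hypotheses of the lemma are met), and to derive the strict inequality from the equality case by contradiction. Additivity of $\oll{D}_{\alpha}^{\meas}$ and $D_{\alpha}$ under tensor products together with Lemma \ref{lemma:single-copy equality} yields the non-strict version of \eqref{classical strict ineq1} immediately. The key is then to rule out equality, which by that lemma forces $\rho^{\otimes n}$ and $\sigma^{\otimes n}$ to commute and condition \ref{classical equality 2} of Corollary \ref{cor:classical equality} to hold for them.

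To exploit this via the product structure I first reduce to the commuting setting for $\rho$ and $\sigma$ themselves: from $\rho^{\otimes n}\sigma^{\otimes n}=\sigma^{\otimes n}\rho^{\otimes n}$, i.e., $(\rho\sigma)^{\otimes n}=(\sigma\rho)^{\otimes n}$, a partial trace over $n-1$ tensor factors gives $\rho\sigma\cdot(\Tr\rho\sigma)^{n-1}=\sigma\rho\cdot(\Tr\rho\sigma)^{n-1}$, and $\Tr\rho\sigma>0$ (which follows from $\rho^0=\sigma^0\ne 0$) forces $\rho\sigma=\sigma\rho$. Writing $\rho=\sum_{\omega\in\Omega}\rho(\omega)\pr{\omega}$ and $\sigma=\sum_{\omega\in\Omega}\sigma(\omega)\pr{\omega}$ in a common eigenbasis, the product basis indexed by $\Omega^n$ diagonalizes both $\rho^{\otimes n}$ and $\sigma^{\otimes n}$, and condition \ref{classical equality 2} of Corollary \ref{cor:classical equality}, applied in this basis, delivers a partition $\Omega^n=\tilde\Omega_0\sqcup\tilde\Omega_1$ on each piece of which, intersected with $(\supp\rho)^n$, the likelihood ratio $\vec\omega\mapsto\prod_{i=1}^n f(\omega_i)$ is constant, where $f(\omega):=\rho(\omega)/\sigma(\omega)$ for $\omega\in\supp\rho$. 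If one of the two parts misses $(\supp\rho)^n$ or if the two constants coincide, this ratio is constant on all of $(\supp\rho)^n$, forcing $\rho^{\otimes n}=\sigma^{\otimes n}$ and hence $\rho=\sigma$, a contradiction; so $\prod_i f(\omega_i)$ takes exactly two distinct positive values on $(\supp\rho)^n$.

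The final contradiction is purely combinatorial. Since $\rho\ne\sigma$ and $\rho^0=\sigma^0$, the function $f$ is non-constant on $\supp\rho$, so I may pick $\omega_a,\omega_b\in\supp\rho$ with $a:=f(\omega_a)<f(\omega_b)=:b$. For every $n\ge 2$, the three vectors $(\omega_a,\ldots,\omega_a)$, $(\omega_a,\ldots,\omega_a,\omega_b)$, and $(\omega_b,\ldots,\omega_b)$ all lie in $(\supp\rho)^n$ and produce three distinct values $a^n<a^{n-1}b<b^n$ of $\prod_i f(\omega_i)$, contradicting the two-value restriction. I expect the main technical hurdle to be the partial-trace reduction from commutativity of $\rho^{\otimes n},\sigma^{\otimes n}$ to commutativity of $\rho,\sigma$; once that is in place, the counting step handles all $\alpha\in(0,1)$ and all $n\ge 2$ uniformly.
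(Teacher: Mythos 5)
Your proof is correct and takes essentially the same route as the paper's: assume equality, invoke Lemma \ref{lemma:single-copy equality} for the pair $\rho^{\otimes n},\sigma^{\otimes n}$ (using additivity of $\oll{D}_{\alpha}^{\meas}$ and $D_{\alpha}$) to force commutativity and a two-valued likelihood ratio on $\Omega^n$, then derive a combinatorial contradiction for $n\ge 2$. The only differences are cosmetic: you make the partial-trace reduction to commutativity of $\rho,\sigma$ explicit, and your final step exhibits three sequences with three distinct ratio values $a^n<a^{n-1}b<b^n$ instead of the paper's equivalent type-decomposition argument.
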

\begin{proof}
Assume that equality holds in \eqref{classical strict ineq1} for some 
$n\in\bN$ and $\alpha\in(0,1)$. Then, by Lemma \ref{lemma:single-copy equality},
$\rho^{\otimes n}$ and $\sigma^{\otimes n}$ commute.
This of course implies that 
$\rho$ and $\sigma$ commute, and hence 
they can be written as in \eqref{commuting pair}.
Moreover, still by Lemma \ref{lemma:single-copy equality},
there exists a subset
$\Omega_{n,0}\subseteq\Omega^n$ 
such that
\begin{align*}
&\rho^{\otimes n}(\vecc{\omega})=c_{n,0}\sigma^{\otimes n}(\vecc{\omega}),\ds\ds
\vecc{\omega}\in\Omega_{n,0},
\\
&\rho^{\otimes n}(\vecc{\omega})=c_{n,1}\sigma^{\otimes n}(\vecc{\omega}),\ds\ds
\vecc{\omega}\in\Omega_{n,1},
\end{align*}
where $\Omega_{n,1}:=\Omega^n\setminus\Omega_{n,0}$, and
$c_{n,0}:=\rho^{\otimes n}\bz\Omega_{n,0}\jz/
\sigma^{\otimes n}\bz\Omega_{n,0}\jz$,
$c_{n,1}:=\rho^{\otimes n}\bz\Omega_{n,1}\jz/
\sigma^{\otimes n}\bz\Omega_{n,1}\jz$
are strictly positive constants.
This is equivalent to 
\begin{align}
\sum_{\omega\in\Omega}\tau_{n,\vecc{\omega}}(\omega)\log\frac{\rho(\omega)}{\sigma(\omega)}
&=\frac{1}{n}\log c_{n,0}=:\tilde c_{n,0},\ds\ds\ds\vecc{\omega}\in\Omega_{n,0},
\label{eq5}\\
\sum_{\omega\in\Omega}\tau_{n,\vecc{\omega}}(\omega)\log\frac{\rho(\omega)}{\sigma(\omega)}
&=\frac{1}{n}\log c_{n,1}=:\tilde c_{n,1},\ds\ds\ds\vecc{\omega}\in\Omega_{n,1},
\label{eq6}
\end{align}
where $\tau_{n,\vecc{\omega}}$ is the type of $\vecc{\omega}$, 
i.e., $\tau_{n,\vecc{\omega}}(\gamma):=\frac{1}{n}\#\{k\in[n]:\,\omega_k=\gamma\}$,
$\gamma\in\Omega$.
In particular, $\{\tau_{n,\vecc{\omega}}:\,\vecc{\omega}\in\Omega_{n,0}\}$ is in the intersection of the probability simplex with a hyperplane with normal vector 
$\bz\log\frac{\rho(\omega)}{\sigma(\omega)}\jz_{\omega\in\Omega}$, and 
$\{\tau_{n,\vecc{\omega}}:\,\vecc{\omega}\in\Omega_{n,1}\}$ 
is in the intersection of the probability simplex with a translate of that hyperplane.
Obviously, the two hyperplanes are different, i.e., $c_{n,0}\ne c_{n,1}$, since the opposite would yield that $\rho=\sigma$.

Since $\tau_{n,\vecc{\omega}}$ is the Dirac probability density function 
$\egy_{\{\omega\}}$ when $\vecc{\omega}=(\omega,\ldots,\omega)$ for some $\omega\in\Omega$,
we get
\begin{align*}
\log\frac{\rho(\omega)}{\sigma(\omega)}=\tilde c_{n,0},\ds\ds\ds(\omega,\ldots,\omega)\in\Omega_{n,0},
\ds\ds\ds\ds\ds
\log\frac{\rho(\omega)}{\sigma(\omega)}=\tilde c_{n,1},\ds\ds\ds(\omega,\ldots,\omega)\in\Omega_{n,1}.
\end{align*}
Combining this with \eqref{eq5}--\eqref{eq6}, we get that for any sequence 
$\vecc{\omega}\in\Omega^n$, 
\begin{align}
\sum_{\omega\in\Omega}\tau_{n,\vecc{\omega}}(\omega)\log\frac{\rho(\omega)}{\sigma(\omega)}
=
\frac{k}{n}\tilde c_{n,0}+\frac{n-k}{n}\tilde c_{n,1},
\label{eq7}
\end{align}
where $k:=\#\{j\in[n]:\,(\omega_j,\ldots,\omega_j)\in\Omega_{n,0}\}$. 
If $n>1$ then there exist sequences $\vecc{\omega}$ containing elements from 
both $\Omega_{n,0}$ and $\Omega_{n,1}$, whence 
the RHS of \eqref{eq7} is strictly between 
$\tilde c_{n,0}$ and $\tilde c_{n,1}$. By \eqref{eq5}--\eqref{eq6}, this means that 
$\vecc{\omega}$ is neither in $\Omega_{n,0}$ nor in $\Omega_{n,1}=\Omega^n\setminus\Omega_{n,0}$, a contradiction.
\end{proof}

\begin{rem}
Note that the support condition in Lemma \ref{lemma:classical strict inequality} cannot be weakened to 
$\rho^0\le\sigma^0$ in general. Indeed, if 
$\sigma=\sum_{\omega\in\Omega}\sigma(\omega)\pr{\omega}$ and 
$\rho=\kappa\sum_{\omega\in\Omega_0}\sigma(\omega)\pr{\omega}$
are commuting states, where $\Omega_0\subseteq\Omega$ and $\kappa\in(0,+\infty)$, then 
it is easy to see that 
$\frac{1}{n}D_{\alpha}^{\test}(\rho^{\otimes n}\|\sigma^{\otimes n})=
D_{\alpha}(\rho\|\sigma)$ for every $n\in\bN$ and $\alpha\in(0,+\infty)$. 
\end{rem}


\begin{thm}\label{thm:an2 main}
Let $\rho$ and $\sigma$ be unequal states 
with $\rho^0=\sigma^0$, such that they are commuting,
and hence can be jointly diagonalized as
$\rho=\sum_{\omega\in\Omega}\rho(\omega)\pr{\omega}$ and 
$\sigma=\sum_{\omega\in\Omega}\sigma(\omega)\pr{\omega}$ 
as in \eqref{commuting pair}.
Then the following are equivalent:
\begin{enumerate}
\item\label{Dhat equality1}
$\hat D_{\alpha}^{\test}(\rho\|\sigma)=D_{\alpha}(\rho\|\sigma)$
for every/some $\alpha\in(0,1)$;
\item\label{Dhat equality2}
$D_{\alpha}^{\test}(\rho\|\sigma)=D_{\alpha}(\rho\|\sigma)$
for every/some $\alpha\in(0,1)$;
\item\label{Dhat equality3}
there exist some $\Omega_0\subseteq\Omega$ and strictly positive constants 
$c_0$, $c_1$ such that 
\begin{align*}
\rho(\omega)=c_0\sigma(\omega),\ds\ds\omega\in\Omega_0,\ds\ds\ds
\rho(\omega)=c_1\sigma(\omega),\ds\ds\omega\in\Omega\setminus\Omega_0.
\end{align*}
\end{enumerate}
In particular, if \ref{Dhat equality3} is not satisfied then $\hat D_{\alpha}^{\test}(\rho\|\sigma)<D_{\alpha}(\rho\|\sigma)$ for every $\alpha\in(0,1)$.
\end{thm}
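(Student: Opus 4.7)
The plan is to establish the circle \ref{Dhat equality3}\,$\Rightarrow$\,\ref{Dhat equality2}\,$\Rightarrow$\,\ref{Dhat equality1}\,$\Rightarrow$\,\ref{Dhat equality3}; since condition \ref{Dhat equality3} is $\alpha$-independent, this automatically delivers the ``for some/every $\alpha$'' equivalences. The equivalence \ref{Dhat equality2}\,$\iff$\,\ref{Dhat equality3} is already contained in Corollary \ref{cor:classical equality}: the condition $\rho(\omega)=c_k\sigma(\omega)$ on $\Omega_k$ with $c_k>0$ is equivalent to $\sigma(\Omega_k)\rho(\omega)=\rho(\Omega_k)\sigma(\omega)$ on $\Omega_k$, with the constants identified as $c_k=\rho(\Omega_k)/\sigma(\Omega_k)$ (which are strictly positive thanks to $\rho^0=\sigma^0$ and $\rho\not\perp\sigma$). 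The implication \ref{Dhat equality2}\,$\Rightarrow$\,\ref{Dhat equality1} is the squeeze $D_\alpha^\test\le\hat D_\alpha^\test\le\oll{D}_\alpha^\meas=D_\alpha$, where the last equality uses commutativity of $\rho$ and $\sigma$ (Remark \ref{rem:Renyi order strict}).

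The main content is \ref{Dhat equality1}\,$\Rightarrow$\,\ref{Dhat equality2}, which I plan to handle by contradiction. Assume $\hat D_\alpha^\test(\rho\|\sigma)=D_\alpha(\rho\|\sigma)$ for some $\alpha\in(0,1)$ while $D_\alpha^\test(\rho\|\sigma)<D_\alpha(\rho\|\sigma)$. For every $n\ge 2$, Lemma \ref{lemma:classical strict inequality} yields $\frac{1}{n}D_\alpha^\test(\rho^{\otimes n}\|\sigma^{\otimes n})<\oll{D}_\alpha^\meas(\rho\|\sigma)=D_\alpha(\rho\|\sigma)$, commutativity again making both right-hand sides of the lemma equal. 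Combining this with the hypothesized strict inequality at $n=1$, the strict bound $\frac{1}{n}D_\alpha^\test(\rho^{\otimes n}\|\sigma^{\otimes n})<\hat D_\alpha^\test(\rho\|\sigma)$ holds for every $n\in\bN$, so Lemma \ref{lemma:sup eq lim} forces $\oll{D}_\alpha^\test(\rho\|\sigma)=\hat D_\alpha^\test(\rho\|\sigma)=D_\alpha(\rho\|\sigma)$.

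The remaining step, which I expect to be the main obstacle, is to contradict this last equality via Corollary \ref{cor:oll test equal Petz}: one of the alternatives in its condition \ref{strict ineq4-1} must then hold. Writing $\sigma=\sum_i s_iQ_i$ spectrally and translating alternative a) to the classical setting, the subordinate projection $P_i'\le Q_i$ is supported on some $\Omega_i\subseteq\{\omega:\sigma(\omega)=s_i\}$, so $\rho(\omega)=\kappa s_i$ on $\Omega_i$ and $\rho(\omega)=0$ on the complement within $\{\omega:\sigma(\omega)=s_i\}$; the hypothesis $\rho^0=\sigma^0$ rules out the second possibility whenever $s_i>0$, forcing $\Omega_i=\{\omega:\sigma(\omega)=s_i\}$ and hence $\rho=\kappa\sigma$. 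Normalization then gives $\kappa=1$, contradicting $\rho\ne\sigma$. Alternative b) is ruled out by the symmetric argument with the roles of $\rho$ and $\sigma$ swapped. The final ``in particular'' assertion is then just the contrapositive of the established \ref{Dhat equality3}\,$\Rightarrow$\,\ref{Dhat equality1} specialized to the ``for every $\alpha$'' quantifier.
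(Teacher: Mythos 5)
Your proposal is correct and follows essentially the same route as the paper: the equivalence \ref{Dhat equality2}$\iff$\ref{Dhat equality3} via Corollary \ref{cor:classical equality}, the strict finite-copy bounds from Lemma \ref{lemma:classical strict inequality}, the passage from ``strict for all $n$'' to $\oll{D}_{\alpha}^{\test}=\hat D_{\alpha}^{\test}$ via Lemma \ref{lemma:sup eq lim} (the paper packages this as Lemma \ref{lemma:maximal testdiv smaller}), and the exclusion of $\oll{D}_{\alpha}^{\test}(\rho\|\sigma)=D_{\alpha}(\rho\|\sigma)$ under $\rho^0=\sigma^0$, $\rho\ne\sigma$ (the paper invokes Theorem \ref{thm:testdiv standard bounds} via Lemma \ref{lemma:Dalpha monotone}, you use the equivalent Corollary \ref{cor:oll test equal Petz}). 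The only cosmetic slip is the final sentence: the ``in particular'' statement is the contrapositive of \ref{Dhat equality1}$\imp$\ref{Dhat equality3} (in its ``for some $\alpha$'' form), combined with the general inequality $\hat D_{\alpha}^{\test}\le D_{\alpha}$, not of \ref{Dhat equality3}$\imp$\ref{Dhat equality1} — but since all three conditions are already proved equivalent this does not affect correctness.
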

\begin{proof}
The equivalence \ref{Dhat equality2}$\iff$\ref{Dhat equality3} follows from Corollary 
\ref{cor:classical equality},
and the implication \ref{Dhat equality2}$\imp$\ref{Dhat equality1} is obvious
from the inequalities in Lemma \ref{lemma:Renyi order}.

We prove \ref{Dhat equality1}$\imp$\ref{Dhat equality3} by contraposition. 
Assume that \ref{Dhat equality3} does not hold. Then, by 
Corollary \ref{cor:classical equality}, we have 
$\frac{1}{n}D_{\alpha}^{\test}(\rho^{\otimes n}\|\sigma^{\otimes n})<
\oll{D}_{\alpha}^{\meas}(\rho\|\sigma)=D_{\alpha}(\rho\|\sigma)$ for 
every $\alpha\in(0,1)$ and
$n=1$, 
and the same holds for every $n\ge 2$ by Lemma \ref{lemma:classical strict inequality}. 
Hence, \ref{maximal testdiv smaller2} in Lemma \ref{lemma:maximal testdiv smaller} holds
for every $\alpha\in(0,1)$.
According to Lemma \ref{lemma:Dalpha monotone}, the assumptions that 
$\rho^0=\sigma^0$ and $\rho\ne\sigma$ guarantee that 
$\beta\mapsto D_{\beta}(\rho\|\sigma)$ is strictly increasing on $(0,1)$, whence,
by Theorem \ref{thm:testdiv standard bounds}, 
$\oll{D}_{\alpha}^{\test}(\rho\|\sigma)<D_{\alpha}(\rho\|\sigma)$, i.e., 
\ref{maximal testdiv smaller1} in Lemma \ref{lemma:maximal testdiv smaller} holds
for every $\alpha\in(0,1)$.
Thus, by Lemma \ref{lemma:maximal testdiv smaller},
$\hat D_{\alpha}^{\test}(\rho\|\sigma)< D_{\alpha}(\rho\|\sigma)$
for every $\alpha\in(0,1)$.
\end{proof}

\begin{rem}
Note that \ref{Dhat equality3} in Theorem \ref{thm:an2 main} is a single-copy condition, as are 
$\rho^0=\sigma^0$ and $\rho\ne\sigma$. Hence, these can be easily verified. 

Moreover, 
if $\rho$ and $\sigma$ are 
probability density functions
on a finite set $\Omega$ with $|\Omega|\ge 3$ that are selected randomly according to the uniform distribution 
on the probability simplex, then with probability $1$ the conditions
$\rho^0=\sigma^0$ and $\rho\ne\sigma$ are satisfied, while 
\ref{Dhat equality3} in Theorem \ref{thm:an2 main} fails, and hence
$\oll{D}_{\alpha}^{\test}(\rho\|\sigma)<D_{\alpha}(\rho\|\sigma)$
with probability $1$.
\end{rem}

\section{Conclusion}

We have introduced two different definitions of the regularized test-measured R\'enyi 
$\alpha$-divergences, which coincide with each other and also with the regularized measured R\'enyi 
$\alpha$-divergences, for $\alpha>1$. Our main result is that for $\alpha\in(0,1)$, both versions 
are strictly smaller than the unique classical R\'enyi $\alpha$-divergence for generic pairs of 
classical probability distributions (on at least three points), and hence neither definitions give a quantum extension 
of the classical R\'enyi $\alpha$-divergence. 

In Theorem \ref{thm:different reg} we showed that strict inequality between the two different 
regularized test-measured R\'enyi $\alpha$-divergences may hold as 
$\oll{D}_{\alpha}^{\test}(\rho\|\sigma)<
\hat D_{\alpha}^{\test}(\rho\|\sigma)=
\oll{D}_{\alpha}^{\meas}(\rho\|\sigma)$ for commuting pairs of states. 
It would be interesting to find explicit examples with
non-commuting states where this holds. It is also an open question whether examples exist where the above can be strengthened to 
$\oll{D}_{\alpha}^{\test}(\rho\|\sigma)<
\hat D_{\alpha}^{\test}(\rho\|\sigma)<
\oll{D}_{\alpha}^{\meas}(\rho\|\sigma)$; this would be interesting both in the commuting and 
in the non-commuting cases.

It is also an interesting question whether closed-form expressions can be found for the 
regularized test-measured R\'enyi $\alpha$-divergences for $\alpha\in(0,1)$, similarly to 
the $\alpha>1$ case. 
In this respect probably the best one can hope for is the representation in 
\eqref{alt repr4} based on the results of \cite{Salzmann_Datta21}.

\section*{Acknowledgments}

The work of M.M. was partially funded by the
National Research, Development and 
Innovation Office of Hungary via the research grants K124152 and KH129601, and
by the Ministry of Innovation and
Technology and the National Research, Development and Innovation
Office within the Quantum Information National Laboratory of Hungary.
The authors are indebted to an anonymous reviewer for pointing out the example demonstrating that
$\oll{D}_{\alpha}^\test(\rho\|\sigma)<\hat D_{\alpha}^\test(\rho\|\sigma)$ can happen.

\appendix

\bigskip

\section{Relation to the hypothesis testing problem of Salzmann and Datta}
\label{sec:SD}

It is easy to see that Theorem \ref{thm:testdiv} is closely related to a problem recently studied  
in \cite{Salzmann_Datta21}. We have the following:

\begin{prop}\label{lemma:alt repr}
For any $\alpha\in(0,1)$, 
\begin{align}
\limsup_{n\to+\infty}\frac{1}{n}D_{\alpha}^{\test}(\rho^{\otimes n}\|\sigma^{\otimes n})
&=
\limsup_{n\to+\infty}-\frac{1}{n}\log\min_{T\in\bT(\hil^{\otimes n})}\left\{
\bz\Tr\rho^{\otimes n}(I-T)\jz^{\frac{\alpha}{1-\alpha}}+\Tr\sigma^{\otimes n}T\right\},
\label{alt repr1}\\
\liminf_{n\to+\infty}\frac{1}{n}D_{\alpha}^{\test}(\rho^{\otimes n}\|\sigma^{\otimes n})
&=
\liminf_{n\to+\infty}-\frac{1}{n}\log\min_{T\in\bT(\hil^{\otimes n})}\left\{
\bz\Tr\rho^{\otimes n}(I-T)\jz^{\frac{\alpha}{1-\alpha}}+\Tr\sigma^{\otimes n}T\right\}.
\label{alt repr2}
\end{align}
In particular, 
\begin{align}
\exists \s \lim_{n\to+\infty}\frac{1}{n}D_{\alpha}^{\test}(\rho^{\otimes n}\|\sigma^{\otimes n})\ds\iff\ds
\exists\s\lim_{n\to+\infty}-\frac{1}{n}\log\min_{T\in\bT(\hil^{\otimes n})}\left\{
\bz\Tr\rho^{\otimes n}(I-T)\jz^{\frac{\alpha}{1-\alpha}}+\Tr\sigma^{\otimes n}T\right\},
\label{alt repr3}
\end{align}
and if the limits exist then they are equal.
\end{prop}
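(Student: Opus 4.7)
The plan is to compare, at the level of individual tests, the two objective functions and to show they agree asymptotically up to an additive $O(1/n)$ error in log-scale. For any test $T\in\bT(\hil^{\otimes n})$, I would set $a:=\Tr\rho^{\otimes n}T$ and $b:=\Tr\sigma^{\otimes n}T$, and introduce the shorthand
\begin{align*}
F_n(T)&:=a^{\alpha}b^{1-\alpha}+(1-a)^{\alpha}(1-b)^{1-\alpha},\\
G_n(T)&:=(1-a)^{\alpha/(1-\alpha)}+b,
\end{align*}
so that, since $\alpha\in(0,1)$, one has $\frac{1}{n}D_{\alpha}^{\test}(\rho^{\otimes n}\|\sigma^{\otimes n})=-\frac{1}{n(1-\alpha)}\log\min_{T}F_n(T)$, while the inner minimum in the Salzmann--Datta expression is exactly $\min_{T}G_n(T)$. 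The goal reduces to showing $\bigl|\frac{1}{n}D_{\alpha}^{\test}(\rho^{\otimes n}\|\sigma^{\otimes n})+\frac{1}{n}\log\min_{T}G_n(T)\bigr|\le\frac{\log 2}{n(1-\alpha)}$.

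The main technical ingredients will be two pointwise inequalities derived from the elementary estimates $(u+v)^p\le u^p+v^p$ and $u^p+v^p\le 2(u+v)^p$, both valid for $u,v\ge 0$ and $p\in(0,1)$ (the first by subadditivity of $t^p$; the second via $u^p+v^p\le 2\max(u,v)^p\le 2(u+v)^p$). Applied with $p=1-\alpha$, $u=(1-a)^{\alpha/(1-\alpha)}$, $v=b$, and combined with $a,1-b\le 1$, they yield the unconditional upper bound $F_n(T)\le(1-a)^{\alpha}+b^{1-\alpha}\le 2G_n(T)^{1-\alpha}$, together with the lower bound $F_n(T)\ge\tfrac12\bigl[(1-a)^{\alpha}+b^{1-\alpha}\bigr]\ge\tfrac12 G_n(T)^{1-\alpha}$ restricted to the \emph{good regime} $\{a\ge 1/2,\,b\le 1/2\}$ (where $a^{\alpha}\ge 2^{-\alpha}\ge 1/2$ and $(1-b)^{1-\alpha}\ge 2^{-(1-\alpha)}\ge 1/2$). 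The first inequality, applied to a $T$ minimizing $G_n$, immediately gives $\frac{1}{n}D_{\alpha}^{\test}(\rho^{\otimes n}\|\sigma^{\otimes n})\ge-\frac{\log 2}{n(1-\alpha)}-\frac{1}{n}\log\min_{T}G_n(T)$.

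For the matching upper bound I would exploit the symmetry $F_n(T)=F_n(I-T)$ to reduce, without loss of generality, to tests with $b\le 1/2$. If additionally $a\ge 1/2$, the good-regime lower bound on $F_n$ applies. Otherwise $1-a\ge 1/2$ and $1-b\ge 1/2$ force $F_n(T)\ge(1/2)^{\alpha}(1/2)^{1-\alpha}=1/2$, so such ``bad'' tests contribute at most $\frac{\log 2}{n(1-\alpha)}$ to $\frac{1}{n}D_{\alpha}^{\test}$. Combining both cases with the trivial bound $\min_{T}G_n(T)\le G_n(I)=1$ (hence $-\log\min_{T}G_n(T)\ge 0$) produces $\frac{1}{n}D_{\alpha}^{\test}(\rho^{\otimes n}\|\sigma^{\otimes n})\le\frac{\log 2}{n(1-\alpha)}-\frac{1}{n}\log\min_{T}G_n(T)$; passing to $\limsup$ and $\liminf$ then yields \eqref{alt repr1}--\eqref{alt repr3}. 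The main obstacle is that $G_n$ is \emph{not} symmetric under $T\leftrightarrow I-T$ (unlike $F_n$), which blocks a naive pointwise comparison across the whole test set; the reduction to $b\le 1/2$ via the symmetry of $F_n$, together with the a priori lower bound $F_n\ge 1/2$ on the remaining bad tests, is precisely what circumvents this asymmetry at negligible $O(1/n)$ cost.
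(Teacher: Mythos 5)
Your proof is correct. The elementary ingredients are the same ones the paper uses (subadditivity of $t\mapsto t^{1-\alpha}$ to pass between $a^{\alpha}b^{1-\alpha}+(1-a)^{\alpha}(1-b)^{1-\alpha}$ and $\bigl[(1-a)^{\alpha/(1-\alpha)}+b\bigr]^{1-\alpha}$, and the symmetry $F_n(T)=F_n(I-T)$ to reduce to $\Tr\sigma^{\otimes n}T\le 1/2$), but the architecture is genuinely different. The paper argues asymptotically: it fixes $c$ strictly below the relevant $\limsup$, extracts a subsequence $(n_k)$ along which the two error quantities are exponentially small, must treat the case where the left-hand side vanishes separately, and in the converse direction needs the intermediate observation that $\Tr\rho^{\otimes n_k}T_{n_k}\to 1$ before it can bound $\Tr\sigma^{\otimes n_k}T_{n_k}$. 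You instead prove a uniform, non-asymptotic sandwich
\begin{align*}
\Bigl|\tfrac{1}{n}D_{\alpha}^{\test}(\rho^{\otimes n}\|\sigma^{\otimes n})
+\tfrac{1}{n}\log\min_{T\in\bT(\hil^{\otimes n})}G_n(T)\Bigr|\le\frac{\log 2}{n(1-\alpha)}
\end{align*}
valid for every $n$, from which all three displayed identities follow in one stroke. The case split on whether the $F_n$-minimizer (normalized to $b\le 1/2$) has $a\ge 1/2$ or $a<1/2$, together with the trivial bounds $F_n\ge 1/2$ in the bad regime and $\min_T G_n\le G_n(I)=1$, cleanly replaces the paper's limit extraction; your version is slightly stronger in that it gives an explicit $O(1/n)$ rate of agreement, in the same spirit as the remark following Theorem \ref{thm:testdiv}. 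The only point worth making explicit is the standing assumption $\rho\not\perp\sigma$, which guarantees $\min_T F_n(T)>0$ so that all logarithms are finite; in the orthogonal case both sides are $+\infty$ and the claim is trivial.
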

\begin{proof}
We prove the equality in \eqref{alt repr1}, as the equality in \eqref{alt repr2} follows in a similar way.

For every $n\in\bN$, let $T_n$ be the minimizer of 
$\bz\Tr\rho^{\otimes n}(I-T)\jz^{\frac{\alpha}{1-\alpha}}+\Tr\sigma^{\otimes n}T$
over all tests, and let 
$c<\limsup_{n\to+\infty}-\frac{1}{n}\log\bz\bz\Tr\rho^{\otimes n}(I-T_n)\jz^{\frac{\alpha}{1-\alpha}}+\Tr\sigma^{\otimes n}T_n\jz$. Then there exists a strictly increasing sequence
$(n_k)_{k\in\bN}$ in $\bN$ such that 
$\bz\Tr\rho^{\otimes n_k}(I-T_{n_k})\jz^{\frac{\alpha}{1-\alpha}}<e^{-n_kc}$, 
$\Tr\sigma^{\otimes n_k}T_{n_k}<e^{-n_kc}$ for all $k\in\bN$, whence
\begin{align*}
\underbrace{\bz\Tr\rho^{\otimes n_k}T_{n_k}\jz^{\alpha}}_{\le 1}
\underbrace{\bz \Tr\sigma^{\otimes n_k}T_{n_k}\jz^{1-\alpha}}_{<e^{-n_kc(1-\alpha)}}
+
\underbrace{\bz\Tr\rho^{\otimes n_k}(I-T_{n_k})\jz^{\alpha}}_{<e^{-n_kc(1-\alpha)}}
\underbrace{\bz\Tr\sigma^{\otimes n_k}(I-T_{n_k})\jz^{1-\alpha}}_{\le 1}
<
2e^{-n_kc(1-\alpha)}.
\end{align*}
Thus, 
\begin{align*}
\limsup_{n\to+\infty}\frac{1}{n}D_{\alpha}^{\test}(\rho^{\otimes n}\|\sigma^{\otimes n})
\ge
\limsup_{k\to+\infty}\frac{1}{n_k}D_{\alpha}\bz\T_{n_k}(\rho^{\otimes n_k})\|\T_{n_k}(\rho^{\sigma n_k})\jz
\ge c. 
\end{align*}
Since this holds for all $c$ as above, we get LHS$\ge$RHS in \eqref{alt repr1}.

If the LHS in \eqref{alt repr1} is $0$ then,
by the above, both expressions in \eqref{alt repr1} are $0$, and the proof is complete. Hence, for the rest we assume that the LHS in \eqref{alt repr1} is strictly positive.

For every $n\in\bN$, let $T_n$ be a test achieving $D_{\alpha}^{\test}(\rho^{\otimes n}\|\sigma^{\otimes n})$, and let 
$0<c<\limsup_{n\to+\infty}\frac{1}{n}D_{\alpha}\bz\T_n(\rho^{\otimes n})\|\T_n(\sigma^{\otimes n})\jz$.
Then there exists a strictly increasing sequence
$(n_k)_{k\in\bN}$ in $\bN$ such that for every $k\in\bN$,
\begin{align}
e^{-n_k(1-\alpha)c}
>
(\Tr\rho^{\otimes n_k}T_{n_k})^{\alpha}(\Tr\sigma^{\otimes n_k}T_{n_k})^{1-\alpha}
+
(\Tr\rho^{\otimes n_k}(I-T_{n_k}))^{\alpha}(\Tr\sigma^{\otimes n_k}(I-T_{n_k}))^{1-\alpha}\nonumber\\
\ge
\left[
(\Tr\rho^{\otimes n_k}T_{n_k})^{\frac{\alpha}{1-\alpha}}\Tr\sigma^{\otimes n_k}T_{n_k}
+
(\Tr\rho^{\otimes n_k}(I-T_{n_k}))^{\frac{\alpha}{1-\alpha}}\Tr\sigma^{\otimes n_k}(I-T_{n_k})\right]
^{1-\alpha}.\label{DS proof1}
\end{align}
Since $D_{\alpha}\bz\T_{n_k}(\rho^{\otimes n})\|\T_{n_k}(\sigma^{\otimes n})\jz$, as well as the expression in \eqref{DS proof1}, are symmetric under exchanging $T_{n_k}$ with 
$I-T_{n_k}$, we may assume that 
$\Tr\sigma^{\otimes n_k}T_{n_k}\le 1/2$, and thus 
$\Tr\sigma^{\otimes n_k}(I-T_{n_k})\ge 1/2$ for every $k\in\bN$.
This, together with \eqref{DS proof1}, implies that 
\begin{align*}
(\Tr\rho^{\otimes n_k}(I-T_{n_k}))^{\frac{\alpha}{1-\alpha}}<2e^{-n_kc},\ds\ds\ds
k\in\bN.
\end{align*}
In particular, 
$\lim_{k\to+\infty}\Tr\rho^{\otimes n_k}T_{n_k}=1$, so that
$\Tr\rho^{\otimes n_k}T_{n_k}>(1/2)^{\frac{1-\alpha}{\alpha}}$ for every $k\ge k_0$
with some $k_0\in\bN$, which, when combined with 
\eqref{DS proof1}, yields 
\begin{align*}
\Tr\sigma^{\otimes n_k}T_{n_k}<2e^{-n_kc},\ds\ds\ds k\ge k_0.
\end{align*}
Thus,
\begin{align*}
&\limsup_{n\to+\infty}-\frac{1}{n}\log\min_{0\le T\le I}\left\{
\bz\Tr\rho^{\otimes n}(I-T)\jz^{\frac{\alpha}{1-\alpha}}+\Tr\sigma^{\otimes n}T\right\}\\
&\ds\ge
\limsup_{k\to+\infty}-\frac{1}{n_k}\log
\bz (\Tr\rho^{\otimes n_k}(I-T_{n_k}))^{\frac{\alpha}{1-\alpha}}
+
\Tr\sigma^{\otimes n_k}T_{n_k}\jz
\ge c.
\end{align*}
Since this holds for every $c$ as above, we get 
that LHS$\le$RHS in \eqref{alt repr1}.
\end{proof}

\medskip
Theorem \ref{thm:testdiv} shows that the limit on the LHS in 
\eqref{alt repr3} exists, and is equal to the expressions in 
\eqref{testdiv formula}, while \cite[Theorem 7]{Salzmann_Datta21}
shows that the limit on the RHS in \eqref{alt repr3} exists, and gives a different expression for it. Combining the two results yields that 
\begin{align}
\oll{D}_{\alpha}^{\test}(\rho\|\sigma)
&=
\lim_{n\to+\infty}\frac{1}{n}D_{\alpha}^{\test}(\rho^{\otimes n}\|\sigma^{\otimes n})
=\sup\left\{r:\,\frac{H_r}{r}\ge \frac{1-\alpha}{\alpha} \right\}\nonumber\\
&=
\lim_{n\to+\infty}-\frac{1}{n}\log\min_{0\le T\le I}\left\{
\bz\Tr\rho^{\otimes n}(I-T)\jz^{\frac{\alpha}{1-\alpha}}+\Tr\sigma^{\otimes n}T\right\}
\nonumber\\
&=
\alpha\sup_{0< t< 1}\frac{(t-1)D_t(\rho\|\sigma)}{t(2\alpha-1)-\alpha},\nn
\end{align}
where the last equality was proved in \cite{Salzmann_Datta21}.

\section{The von Neumann algebra case}
\label{sec:vN}

In this appendix we extend 
the main results in Section \ref{sec:an1}
to the general von Neumann algebra case.
For that, it is sufficient to extend Theorem \ref{thm:testdiv};
once that is done, the extensions of
Theorem \ref{thm:testdiv standard bounds} and Corollaries
\ref{cor:regularized strictly positive}--\ref{C-III.29} 
follow the same way as in Section \ref{sec:an1}.
Moreover, the essential thing for 
the proof of Theorem \ref{thm:testdiv} is the version of the 
Hoeffding bound theorem given in Lemma \ref{lemma:Hoeffding}, 
so it is crucial for our purpose to extend Lemma \ref{lemma:Hoeffding}
to the von Neumann algebra setting, which was mainly done in 
\cite{JOPS}, under some technical assumptions. Our main contribution below is 
removing those assumptions, and showing that Lemma \ref{lemma:Hoeffding}
can be extended to the von Neumann algebra setting for arbitrary 
pairs of normal states $\rho,\sigma$.

We assume that the reader is familiar with the standard form of von Neumann algebras and the notions
of the relative modular operator and Connes' cocycle derivative. We refer to \cite{Hiai_Lectures2021} for the details.

Let $\M$
be a von Neumann algebra represented in the \emph{standard form} $(\M,\hil,J,\P)$. Let
$\rho,\sigma\in\M_*^+$ (normal positive linear functionals on $\M$) and $\Delta_{\rho,\sigma}$ be
the \emph{relative modular operator} $\Delta_{\rho,\sigma}$ with the spectral decomposition
$\Delta_{\rho,\sigma}=\int_0^{+\infty}t\,dE_{\rho,\sigma}(t)$. 
We define
\[
Q_\alpha(\rho\|\sigma):=\|\Delta_{\rho,\sigma}^{\alpha/2}\xi_\sigma\|^2
=\int_{(0,+\infty)}t^\alpha\,d\|E_{\rho,\sigma}(t)\xi_\sigma\|^2,
\qquad\alpha\in[0,1],
\]
where $\xi_\sigma\in\P$ is the vector representative of $\sigma$ so that
$\sigma(x)=\langle\xi_\sigma,x\xi_\sigma\rangle$ ($x\in\M$). We define
\[
\psi(\alpha)=\psi(\alpha|\rho\|\sigma):=\log Q_\alpha(\rho\|\sigma),\qquad\alpha\in[0,1],
\]
and further define $\vfi(c)$ and $\vfi_+(c)$ for $c\in\bR$ as in \eqref{phi def}--\eqref{phi+ def},
where the existence of the maxima in the definitions of $\vfi(c),\vfi_+(c)$ is clear from the continuity
of $\psi$ (see \ref{Leg1} in Lemma \ref{L-C.1}). In our discussions below it is convenient to consider
the measure $d\nu_{\rho,\sigma}(t):=d\|E_{\rho,\sigma}(t)\xi_\sigma\|^2$ for $t\in(0,+\infty)$, i.e.,
$\nu_{\rho,\sigma}$ is the spectral measure of $\Delta_{\rho,\sigma}$ on
$s(\Delta_{\rho,\sigma})\hil$ 
with respect to $\xi_\sigma$, where $s(\Delta_{\rho,\sigma})$ is the
support projection of $\Delta_{\rho,\sigma}$. 
Note that $s(\Delta_{\rho,\sigma})=s(\rho)Js(\sigma)J$; see \cite[Proposition 10.3]{Hiai_Lectures2021}.
We then have
$Q_\alpha(\rho\|\sigma)=\int_{(0,+\infty)}t^\alpha\,d\nu_{\rho,\sigma}(t)$ for all $\alpha\in[0,1]$.

Let $\rho,\sigma\in\M_*^+$ be normal states. The \emph{standard $\alpha$-R\'enyi divergence}
$D_\alpha(\rho\|\sigma)$ with parameter $\alpha\in[0,1)$ is
\[
D_\alpha(\rho\|\sigma):={1\over\alpha-1}\log Q_\alpha(\rho\|\sigma).
\]
Properties of $D_\alpha(\rho\|\sigma)$ as well as $Q_\alpha(\rho\|\sigma)$ in the present
setting can be found in \cite{Hiai_fdiv_standard}.
The \emph{Hoeffding divergence} $H_r(\rho\|\sigma)$ of $\rho,\sigma$ for $r\in\bR$ is
defined by
\begin{align}\label{Hr-def}
H_r(\rho\|\sigma):=\sup_{\alpha\in(0,1)}{(\alpha-1)r-\psi(\alpha)\over\alpha}
=\sup_{\alpha\in(0,1)}{\alpha-1\over\alpha}\left[r-D_\alpha(\rho\|\sigma)\right].
\end{align}

Just like in Section \ref{sec:def}, the \emph{test-measured R\'enyi $\alpha$-divergence} of
$\rho,\sigma$ is defined as
\[
D_\alpha^\test(\rho\|\sigma):=\sup_{T\in\M,\,0\le T\le1}D_\alpha(\T(\rho)\|\T(\sigma))
=\max_{T\in\M,\,0\le T\le1}D_\alpha(\T(\rho)\|\T(\sigma)),
\]
where $\T(\rho):=(\rho(T),\rho(1-T))$ and similarly for $\T(\sigma)$. The \emph{regularized
test-measured R\'enyi $\alpha$-divergence} is defined as
\[
\oll{D}_\alpha^\test(\rho\|\sigma)
:=\limsup_{n\to\infty}{1\over n}D_\alpha^\test(\rho^{\otimes n}\|\sigma^{\otimes n}),
\]
where the $n$-fold tensor products $\rho^{\otimes n},\sigma^{\otimes n}$ are normal states on the
$n$-fold von Neumann algebra tensor product $\M^{\overline\otimes n}$.

If $s(\rho)\perp s(\sigma)$ for the support projections of $\rho,\sigma$, then it is immediate to see
that $D_\alpha(\rho\|\sigma)=D_\alpha(\rho^{\otimes n}\|\sigma^{\otimes n})=+\infty$ for all
$\alpha\in[0,1)$ and $n\in\bN$, and $H_r(\rho\|\sigma)=+\infty$ for all $r\in\bR$; thus Theorem
\ref{thm:testdiv} holds trivially. So in the rest we always assume that $s(\rho)\not\perp s(\sigma)$.

\begin{lemma}\label{L-C.1}
Let $\rho,\sigma$ be normal states on $\M$.
\begin{enumerate}
\item \label{Leg1}
$\psi$ is a $(-\infty,0]$-valued continuous and convex function on $[0,1]$,
and real analytic in $(0,1)$.
\item \label{Leg2}
$\vfi_+$ is a strictly increasing function on $(\psi'(0^+),+\infty)$ mapping onto $(-\psi(0),+\infty)$.
\item \label{Leg3}
For every $r>-\psi(0)$ there exists a unique $c_r\in(\psi'(0^+),+\infty)$ such that $\vfi_+(c_r)=r$ and
$H_r(\rho\|\sigma)=r-c_r=\vfi(c_r)$.
\item \label{Leg4}
$r\mapsto H_r(\rho\|\sigma)$ is convex, lower semi-continuous, and monotone
decreasing on $\bR$, and for every $r<-\psi(0)$, $H_r(\rho\|\sigma)=+\infty$.
\end{enumerate}
\end{lemma}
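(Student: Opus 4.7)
The plan is to adapt the argument of Lemma \ref{lemma:phi properties} to the present setting, the main change being that the analytic and convexity properties of $\psi$ must be extracted from the integral representation $\psi(\alpha)=\log\int_{(0,+\infty)}t^{\alpha}\,d\nu_{\rho,\sigma}(t)$ against the spectral measure of the relative modular operator, rather than from finite-dimensional trace expressions. Once \ref{Leg1} is established, parts \ref{Leg2}--\ref{Leg4} are Legendre-duality consequences that go through exactly as in the proof of Lemma \ref{lemma:phi properties}.

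For \ref{Leg1} the key observation is that $\nu_{\rho,\sigma}$ is a finite positive Borel measure on $(0,+\infty)$, non-zero under the standing assumption $s(\rho)\not\perp s(\sigma)$, with total mass $Q_0(\rho\|\sigma)=\|s(\Delta_{\rho,\sigma})\xi_\sigma\|^2\le 1$ and with $Q_1(\rho\|\sigma)\le 1$, the latter being a standard property of the relative modular operator for normal states on von Neumann algebras (cf.\ \cite{Hiai_fdiv_standard,Hiai_Lectures2021}). Combining $|t^{z}|=t^{\Re z}\le\max(1,t)$ with integrability of $1+t$ against $d\nu_{\rho,\sigma}$, dominated convergence gives continuity of $\alpha\mapsto Q_\alpha(\rho\|\sigma)$ on $[0,1]$, and differentiation under the integral sign shows that $z\mapsto\int t^{z}\,d\nu_{\rho,\sigma}(t)$ is holomorphic on the strip $\{0<\Re z<1\}$; since $Q_\alpha(\rho\|\sigma)>0$ for all $\alpha\in[0,1]$ (because $\nu_{\rho,\sigma}$ is non-zero), $\psi$ is real analytic on $(0,1)$. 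Convexity of $\psi$ on $[0,1]$ is then a direct H\"older inequality for the integral, and non-positivity of $\psi$ on $[0,1]$ follows from $\psi(0),\psi(1)\le 0$ together with convexity.

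For \ref{Leg2}, I would use convexity of $\psi$ to see that the one-sided derivatives $\psi'(0^+),\psi'(1^-)$ exist in $[-\infty,+\infty]$ and that for $c>\psi'(0^+)$ the maximizer $\alpha^{*}(c)\in[0,1]$ of $\alpha\mapsto c\alpha-\psi(\alpha)$ is strictly positive, from which strict monotonicity of $\vfi_+$ on $(\psi'(0^+),+\infty)$ with range $(-\psi(0),+\infty)$ follows (surjectivity using $\vfi_+(c)\to+\infty$ as $c\to+\infty$). Part \ref{Leg3} is then a direct Fenchel--Young computation: for $r>-\psi(0)$ the unique $c_r$ with $\vfi_+(c_r)=r$ exists by \ref{Leg2}, and plugging the corresponding maximizer $\alpha_r$ into \eqref{Hr-def} shows that it also attains the supremum defining $H_r(\rho\|\sigma)$, with common value $r-c_r=\vfi(c_r)$. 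For \ref{Leg4}, the representation $H_r(\rho\|\sigma)=\sup_{\alpha\in(0,1)}[(\alpha-1)r-\psi(\alpha)]/\alpha$ is a supremum of affine functions of $r$ with strictly negative slopes $(\alpha-1)/\alpha$, yielding convexity, lower semi-continuity, and monotone decrease; divergence to $+\infty$ for $r<-\psi(0)$ follows since, as $\alpha\searrow 0$, the inner expression behaves like $(-r-\psi(0))/\alpha+O(1)$ with $-r-\psi(0)>0$.

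The main obstacle will be cleanly setting up \ref{Leg1} in the absence of any a priori compactness of $\supp\nu_{\rho,\sigma}$: continuity and analyticity must rely on the \emph{global} bounds $Q_0(\rho\|\sigma),Q_1(\rho\|\sigma)\le 1$ via Hadamard's three-lines lemma (or equivalently H\"older against the measure $\nu_{\rho,\sigma}$), rather than on uniform boundedness of the integrand; the bound $Q_1(\rho\|\sigma)\le 1$ itself requires invoking the standard theory of the relative modular operator for normal states, which is where the strongest reliance on the von Neumann algebraic background lies.
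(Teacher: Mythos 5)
Your proposal is correct and, for parts \ref{Leg2}--\ref{Leg4}, follows essentially the same Legendre-duality argument as the paper (strict monotonicity of $\vfi_+$ from the positivity of the maximizer $\alpha^*(c)$ for $c>\psi'(0^+)$, the Fenchel--Young identification of $H_r$ with $\vfi(c_r)$, and the supremum-of-affine-functions argument for \ref{Leg4}). The only genuine difference is in \ref{Leg1}: the paper obtains real analyticity of $Q_\alpha(\rho\|\sigma)$ by citing the strong-operator continuity/analyticity of $z\mapsto\Delta_{\rho,\sigma}^{z/2}\xi_\sigma$ on the closed/open strip, whereas you work directly with the spectral measure $\nu_{\rho,\sigma}$, dominating $|t^{z}|\le\max(1,t)\le 1+t$ and using $\int(1+t)\,d\nu_{\rho,\sigma}=Q_0+Q_1\le 2$ to justify dominated convergence and differentiation under the integral; both devices are valid and rest on the same underlying facts ($Q_0(\rho\|\sigma)=\sigma(s(\rho))\le1$, $Q_1(\rho\|\sigma)=\rho(s(\sigma))\le1$, and $Q_\alpha>0$ from $s(\rho)\not\perp s(\sigma)$), so your route is a legitimate, slightly more self-contained alternative for that one step.
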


\begin{proof}
\ref{Leg1}:\enspace
That $Q_\alpha(\rho\|\sigma)>0$ for all $\alpha\in[0,1]$ follows from the assumption
$s(\rho)\not\perp s(\sigma)$. As in the proof of \cite[Proposition 5.3]{Hiai_fdiv_standard}, convexity of
$\psi$ is a consequence of the H\"older inequality
\begin{align}\label{F-C.1}
\int_{(0,+\infty)}t^{\lambda\alpha_1+(1-\lambda)\alpha_2}\,d\nu_{\rho,\sigma}(t)
\le\biggl[\int_{(0,+\infty)}t^{\alpha_1}\,d\nu_{\rho,\sigma}(t)\biggr]^\lambda
\biggl[\int_{(0,+\infty)}t^{\alpha_2}\,d\nu_{\rho,\sigma}(t)\biggr]^{1-\lambda}.
\end{align}
Since
\begin{align*}
Q_0(\rho\|\sigma)&=\|s(\Delta_{\rho,\sigma})\xi_\sigma\|^2
=\|s(\rho)\xi_\sigma\|^2=\sigma(s(\rho))\le1, \\
Q_1(\rho\|\sigma)&=\|\Delta_{\rho,\sigma}^{1/2}\xi_\sigma\|^2
=\|s(\sigma)\xi_\rho\|^2=\rho(s(\sigma))\le1,
\end{align*}
convexity of $\psi$ implies that $\psi(\alpha)\le0$ for all $\alpha\in[0,1]$. Note (see, e.g.,
\cite[Theorem A.7]{Hiai_Lectures2021}) that $\Delta_{\rho,\sigma}^{z/2}\xi_\sigma$ is continuous
on $0\le\mathrm{Re}\,z\le1$ and analytic in $0<\mathrm{Re}\,z<1$ in the strong operator topology,
so that $\bigl\langle\Delta_{\rho,\sigma}^{\overline z/2}\xi_\sigma,
\Delta_{\rho,\sigma}^{z/2}\xi_\sigma\bigr\rangle$ is analytic in $0<\mathrm{Re}\,z<1$. Hence
$Q_\alpha(\rho\|\sigma)$ is real analytic in $0<\alpha<1$ and so is $\psi(\alpha)$.

The proofs of the remaining \ref{Leg2}--\ref{Leg4} are similar to those in \cite[Sec.~IV]{HMO2} in the
finite-dimensional case, while we give them for readers' convenience.

\ref{Leg2}:\enspace
It is obvious that $\vfi_+$ is strictly increasing on $[\psi'(1^-),+\infty)$
(whenever $\psi'(1^-)<+\infty$). Let us show that $\vfi_+$ is strictly increasing on
$(\psi'(0^+),\psi'(1^-))$ (whenever $\psi'(0^+)<\psi'(1^-)$). By \ref{Leg1} note that $\psi'(\alpha)$ is strictly
increasing in $(0,1)$. For every $a\in(\psi'(0^+),\psi'(1^-))$ there exists a unique $\alpha_a\in(0,1)$
such that $\psi'(\alpha_a)=a$ and hence $\vfi_+(a)=a\alpha_a-\psi(\alpha_a)$. Let
$\psi'(0^+)<a<b<\psi'(1^-)$. Then, since
\[
a\alpha_a-\vfi_+(a)=\psi(\alpha_a)>b(\alpha_a-\alpha_b)+\psi(\alpha_b)
=b\alpha_a-\vfi_+(b),
\]
one has $\vfi_+(b)-\vfi_+(a)>(b-a)\alpha_a>0$, so $\vfi_+(a)<\vfi_+(b)$. Hence $\vfi_+$ is strictly
increasing on $(\psi'(0^+),+\infty)$. Furthermore, it is immediate to see that
$\vfi_+(a)\to+\infty$ as $a\to+\infty$ and $\vfi_+(a)\to-\psi(0)$ as $a\to\psi'(0^+)$.

\ref{Leg3}:\enspace
Let $r>-\psi(0)$. By (ii) there exists a unique $c_r\in(\psi'(0^+),+\infty)$ such that
$\vfi_+(c_r)=r$. Then, in view of \eqref{phi+ def}, $\psi(\alpha)\ge c_r\alpha-r$ for all $\alpha\in[0,1]$
and $\psi(\alpha_r)=c_r\alpha_r-r$ for some $\alpha_r\in[0,1]$. Since
$r>-\psi(0)$ and hence $c_r>\psi'(0^+)$, one has $\alpha_r\in(0,1]$ and
$(r+\psi(\alpha_r))/\alpha_r=c_r$. Therefore, 
\[
H_r(\rho\|\sigma)=\sup_{\alpha\in(0,1]}{(\alpha-1)r-\psi(\alpha)\over\alpha}
=r-c_r=\vfi_+(c_r)-c_r=\vfi(c_r).
\]

\ref{Leg4}:\enspace
Since $\alpha\mapsto D_\alpha(\rho\|\sigma)$ is continuous on $(0,1)$ (see the proof of \ref{Leg1}),
the assertion on $r\mapsto H_r(\rho\|\sigma)$ is obvious by \eqref{Hr-def}.
If $r<-\psi(0)$, then $-r-\psi(\alpha)\to-r-\psi(0)>0$ as $\alpha\to0^+$, so it is immediate that
$H_r(\rho\|\sigma)=+\infty$.
\end{proof}

\begin{lemma}\label{L-C.2}
For every normal states $\rho,\sigma$ on $\M$ and any $n\in\bN$,
\[
\psi(\alpha|\rho^{\otimes n}\|\sigma^{\otimes n})=n\psi(\alpha|\rho\|\sigma),\quad\alpha\in[0,1];
\qquad
D_\alpha(\rho^{\otimes n}\|\sigma^{\otimes n})=nD_\alpha(\rho\|\sigma),\quad\alpha\in[0,1).
\]
\end{lemma}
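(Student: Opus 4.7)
The plan is to reduce everything to the well-known tensor product factorization of the relative modular operator, and then invoke functional calculus. By induction on $n$, it clearly suffices to prove the case $n=2$, i.e., for normal states $\rho_i,\sigma_i$ on von Neumann algebras $\M_i$ ($i=1,2$), one has $\psi(\alpha|\rho_1\otimes\rho_2\,\|\,\sigma_1\otimes\sigma_2) = \psi(\alpha|\rho_1\|\sigma_1) + \psi(\alpha|\rho_2\|\sigma_2)$ for $\alpha\in[0,1]$. Applying this $n-1$ times to $\rho_1=\rho$, $\sigma_1=\sigma$, $\rho_2=\rho^{\otimes(n-1)}$, $\sigma_2=\sigma^{\otimes(n-1)}$ then yields the first identity, and the second identity is immediate from the first and the definition $D_\alpha = \frac{\psi(\alpha)}{\alpha-1}$.

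For the $n=2$ step, I would start from two ingredients available in the standard-form framework (see, e.g., \cite{Hiai_Lectures2021}): first, the natural standard form of $\M_1\,\overline\otimes\,\M_2$ is realized on $\hil_1\otimes\hil_2$ with modular conjugation $J_1\otimes J_2$ and positive cone generated by the $\P_i$, and in this realization the vector representative of a product state factorizes as $\xi_{\sigma_1\otimes\sigma_2}=\xi_{\sigma_1}\otimes\xi_{\sigma_2}$; second, the relative modular operator factorizes as
\[
\Delta_{\rho_1\otimes\rho_2,\,\sigma_1\otimes\sigma_2}
=\Delta_{\rho_1,\sigma_1}\otimes\Delta_{\rho_2,\sigma_2}.
\]
Both facts are standard and can be verified by checking that the candidate operator on the right implements the defining relation of the relative modular operator on the tensor product, using that the relative Tomita operator $S_{\rho_1\otimes\rho_2,\sigma_1\otimes\sigma_2}$ acts as $S_{\rho_1,\sigma_1}\otimes S_{\rho_2,\sigma_2}$ on elementary tensors and then taking polar decompositions.

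Given these, by the Borel functional calculus for commuting self-adjoint operators on a tensor product, $(\Delta_{\rho_1,\sigma_1}\otimes\Delta_{\rho_2,\sigma_2})^{\alpha/2} = \Delta_{\rho_1,\sigma_1}^{\alpha/2}\otimes\Delta_{\rho_2,\sigma_2}^{\alpha/2}$, so that
\[
Q_\alpha(\rho_1\otimes\rho_2\,\|\,\sigma_1\otimes\sigma_2)
=\bigl\|\Delta_{\rho_1,\sigma_1}^{\alpha/2}\xi_{\sigma_1}\otimes\Delta_{\rho_2,\sigma_2}^{\alpha/2}\xi_{\sigma_2}\bigr\|^2
=Q_\alpha(\rho_1\|\sigma_1)\,Q_\alpha(\rho_2\|\sigma_2).
\]
Taking logarithms gives the additivity of $\psi$, completing the proof.

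The main (mildly technical) obstacle is verifying the tensor product formula for the relative modular operator carefully in the standard form setting, in particular handling the non-injectivity of $\Delta_{\rho_i,\sigma_i}$ (only the restriction to $s(\rho_i)J_is(\sigma_i)J_i\hil_i$ matters) and making sure that the support projection on the tensor product is the tensor product of the support projections, so that the spectral measures behave as expected. Once this is set up, the rest is formal manipulation with functional calculus and is routine.
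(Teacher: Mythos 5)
Your proposal is correct and reaches the same multiplicativity statement $Q_\alpha(\rho_1\otimes\rho_2\|\sigma_1\otimes\sigma_2)=Q_\alpha(\rho_1\|\sigma_1)Q_\alpha(\rho_2\|\sigma_2)$ that the paper reduces to, but the key step is handled by a genuinely different route. You establish the full operator identity $\Delta_{\rho_1\otimes\rho_2,\sigma_1\otimes\sigma_2}=\Delta_{\rho_1,\sigma_1}\otimes\Delta_{\rho_2,\sigma_2}$ via the relative Tomita operators and polar decomposition, and then apply functional calculus. The paper never identifies the full operator: it only computes the action on the single vector $\xi_{\sigma_1}\otimes\xi_{\sigma_2}$, first for the unitaries $\Delta^{it}$ by invoking the relation $\Delta_{\rho,\sigma}^{it}\xi_\sigma=(D\rho:D\sigma)_t\,\xi_\sigma$ together with the multiplicativity of Connes' cocycle derivatives under tensor products, and then passes from $it$ to $\alpha/2$ by analytic continuation of $z\mapsto\Delta^{z/2}\xi_\sigma$ on the strip $0\le\mathrm{Re}\,z\le1$. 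The trade-off is real: the cocycle argument works entirely with bounded operators and so avoids the unbounded-operator issues you flag (its only wrinkle is that the cocycle tensor formula in the literature is stated under $s(\rho_k)\le s(\sigma_k)$, which the paper notes can be removed), whereas your route requires a careful core argument showing that the closure of the algebraic tensor product $S_{\rho_1,\sigma_1}\odot S_{\rho_2,\sigma_2}$ really is $S_{\rho_1\otimes\rho_2,\sigma_1\otimes\sigma_2}$ before you may take polar decompositions; this is where the density of $(\M_1\odot\M_2)(\xi_{\sigma_1}\otimes\xi_{\sigma_2})$ in $\overline{\M_1\xi_{\sigma_1}}\otimes\overline{\M_2\xi_{\sigma_2}}$ and the factorization of support projections must actually be proved rather than asserted. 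You correctly identify this as the technical obstacle, and the statement is standard, so I would not call it a gap; but if you write it out, that closure/core step is the part that needs to be made precise, and the cocycle argument is the cheaper way to buy exactly the one vector identity the lemma needs.
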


\begin{proof}
It suffices to show that when $\M_k$ ($k=1,2$) are von Neumann algebras and
$\rho_k,\sigma_k\in(\M_k)_*^+$, we have
\begin{align}\label{Q-tensor}
Q_\alpha(\rho_1\otimes\rho_2\|\sigma_1\otimes\sigma_2)
=Q_\alpha(\rho_1\|\sigma_1)Q_\alpha(\rho_2\|\sigma_2),\qquad\alpha\in[0,1].
\end{align}
To prove this let $M_k$ be represented in the standard form
$(\M_k,\hil_k,J_k,\P_k)$, $k=1,2$. Then the standard form of
$\M_{12}:=\M_1\overline\otimes\M_2$ is given as $(\M_{12},\hil_{12},J_{12},\P_{12})$ with
$\hil_{12}:=\hil_1\otimes\hil_2$, $J_{12}:=J_1\otimes J_2$ and
$\P_{12}\supseteq\{\xi_1\otimes\xi_2:\xi_k\in\P_k,\,k=1,2\}$. Let $\xi_{\rho_k},\xi_{\sigma_k}\in\P_k$
be the vector representatives of $\rho_k,\sigma_k$, so that
$\xi_{\rho_1}\otimes\xi_{\rho_2},\xi_{\sigma_1}\otimes\xi_{\sigma_2}\in\P_{12}$ are those of
$\rho_1\otimes\rho_2,\sigma_1\otimes\sigma_2$ respectively. With use of Connes' cocycle derivative
$(D\rho:D\sigma)_t$ for $\rho,\sigma\in\M_*^+$ (see \cite{Hiai_Lectures2021,Stratila-book})
note by \cite[Proposition 10.11]{Hiai_Lectures2021} that
\begin{align*}
\Delta_{\rho_1\otimes\rho_2,\sigma_1\otimes\sigma_2}^{it}
(\xi_{\sigma_1}\otimes\xi_{\sigma_2})
&=(D(\rho_1\otimes\rho_2):D(\sigma_1\otimes\sigma_2))_t
(\xi_{\sigma_1}\otimes\xi_{\sigma_2}) \\
&=((D\rho_1:D\sigma_1)_t\otimes(D\rho_2:D\sigma_2)_t)
(\xi_{\sigma_1}\otimes\xi_{\sigma_2}) \\
&=(D\rho_1:D\sigma_1)_t\xi_{\sigma_1}\otimes(D\rho_2:D\sigma_2)_t\xi_{\sigma_2} \\
&=\Delta_{\rho_1,\sigma_1}^{it}\xi_{\sigma_1}\otimes
\Delta_{\rho_2,\sigma_2}^{it}\xi_{\sigma_2},\qquad t\in\bR,
\end{align*}
where the equality above is due to \cite[Sec.~3.9]{Stratila-book}, which is given
under the assumption $s(\rho_1)\le s(\sigma_1)$ and $s(\rho_2)\le s(\sigma_2)$ but it can be removed.
By analytic continuation and continuity we have
\[
\Delta_{\rho_1\otimes\rho_2,\sigma_1\otimes\sigma_2}^{\alpha/2}
(\xi_{\sigma_1}\otimes\xi_{\sigma_2})
=\Delta_{\rho_1,\sigma_1}^{\alpha/2}\xi_{\sigma_1}\otimes
\Delta_{\rho_2,\sigma_2}^{\alpha/2}\xi_{\sigma_2},\qquad\alpha\in(0,1),
\]
which gives \eqref{Q-tensor}.
\end{proof}

We write for every $\rho,\sigma\in\M_*^+$,
\[
\chi(\rho\|\sigma):=\min_{T\in\M,\,0\le T\le1}\{\rho(1-T)+\sigma(T)\}
={1\over2}\{\rho(1)+\sigma(1)-\|\rho-\sigma\|\},
\]
where the minimum is attained by $T=s((\rho-\sigma)_+)$, the support projection of the positive part
of $\rho-\sigma$. The Chernoff bound theorem in \cite{Aud,NSz} (see also \cite{HMO2}) was extended
in \cite[Theorem 6.5]{JOPS} to the non-i.i.d.\ von Neumann algebra setting. In our
i.i.d.\ setting \cite[Theorem 6.5]{JOPS} says the following:
\begin{lemma}\label{lemma:JOPS-Chernoff}
For any pair of normal states $\rho,\sigma\in\M_*^{+}$ such that 
$\psi'(0^+)<\psi'(1^-)$, (i.e., $\psi$ is not affine on $[0,1]$), we have 
\begin{align}\label{F-C.2}
\lim_{n\to\infty}{1\over n}\log\chi(\rho^{\otimes n}\|\sigma^{\otimes n})=-\vfi(0).
\end{align}
\end{lemma}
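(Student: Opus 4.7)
The plan is to derive the lemma as the i.i.d.\ specialization of the non-i.i.d.\ Chernoff bound theorem in the von Neumann algebra setting, \cite[Theorem 6.5]{JOPS}. The proof naturally splits into two matching inequalities, and the role of the non-affineness assumption $\psi'(0^+)<\psi'(1^-)$ lies entirely on the lower-bound side.

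For the upper bound $\limsup_n\tfrac1n\log\chi(\rho^{\otimes n}\|\sigma^{\otimes n})\le-\vfi(0)$, I would appeal to the von Neumann algebra version of Audenaert's inequality, which states that $\chi(\rho\|\sigma)\le Q_\alpha(\rho\|\sigma)$ for every $\alpha\in[0,1]$ and every pair of normal positive functionals on $\M$. (This extension is available e.g.\ through the standard-form relative modular calculus already set up above; alternatively it can be deduced from the finite-dimensional inequality by cyclic approximation together with the variational description of $\chi$.) Applying this to $\rho^{\otimes n}$ and $\sigma^{\otimes n}$ and using the multiplicativity $Q_\alpha(\rho^{\otimes n}\|\sigma^{\otimes n})=Q_\alpha(\rho\|\sigma)^n$ from Lemma \ref{L-C.2} yields $\tfrac1n\log\chi(\rho^{\otimes n}\|\sigma^{\otimes n})\le\psi(\alpha)$ for every $\alpha\in[0,1]$ and every $n$. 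Minimizing over $\alpha$ and recalling $-\vfi(0)=\min_{\alpha\in[0,1]}\psi(\alpha)$ gives the desired upper bound, and this half of the argument uses neither the non-affineness hypothesis nor anything beyond Audenaert plus Lemma \ref{L-C.2}.

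For the matching lower bound $\liminf_n\tfrac1n\log\chi(\rho^{\otimes n}\|\sigma^{\otimes n})\ge-\vfi(0)$, I would verify that the hypotheses of \cite[Theorem 6.5]{JOPS} hold in our situation and then invoke that theorem. The setting of JOPS requires a sequence of states on a sequence of von Neumann algebras together with a limit $e(\alpha):=\lim_n\tfrac1n\psi(\alpha|\rho_n\|\sigma_n)$ of log-moment generating functions satisfying suitable regularity; in the i.i.d.\ case $\rho_n=\rho^{\otimes n}$, $\sigma_n=\sigma^{\otimes n}$, Lemma \ref{L-C.2} says that $\tfrac1n\psi(\alpha|\rho^{\otimes n}\|\sigma^{\otimes n})$ is \emph{equal} (not merely convergent) to $\psi(\alpha)$ for all $n$, so $e=\psi$ on $[0,1]$. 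By Lemma \ref{L-C.1} \ref{Leg1}, $\psi$ is continuous, convex, and real analytic on $(0,1)$, so the only non-trivial JOPS regularity hypothesis is non-affineness of $e$ on $[0,1]$, which is exactly our standing assumption $\psi'(0^+)<\psi'(1^-)$. The Legendre transform of $-e$ at $0$ appearing in JOPS's conclusion equals $\vfi(0)=\max_{\alpha\in[0,1]}\{-\psi(\alpha)\}$, matching the right-hand side of \eqref{F-C.2}.

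The main obstacle is precisely the lower bound step, and the reason the non-affineness condition enters is a genuine one rather than a technical artifact: when $\psi$ is affine on $[0,1]$, the argument of the minimum of $\psi$ is not uniquely located at an interior point, the subdifferential structure of the Legendre transform at $c=0$ degenerates, and one loses the separation needed to extract a ``good'' test sequence from the relative modular spectral decomposition. I expect the write-up to be short once the translation is in place; the bulk of the proof is the verification that the i.i.d.\ asymptotics of $\tfrac1n\psi(\alpha|\rho^{\otimes n}\|\sigma^{\otimes n})$ fit the abstract JOPS framework, and then a direct citation for the lower bound combined with the Audenaert-based upper bound above.
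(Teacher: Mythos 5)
Your proposal is correct and follows essentially the same route as the paper: the paper offers no proof of Lemma \ref{lemma:JOPS-Chernoff} at all, presenting it simply as the i.i.d.\ specialization of \cite[Theorem 6.5]{JOPS}, and your upper-bound derivation from $\chi(\rho\|\sigma)\le Q_\alpha(\rho\|\sigma)$ (inequality \eqref{F-C.3}) together with the multiplicativity in Lemma \ref{L-C.2} is exactly how the paper itself describes that half when it revisits the argument in the proof of Lemma \ref{L-C.5}.

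Two small points. First, when you say you would ``verify that the hypotheses of \cite[Theorem 6.5]{JOPS} hold,'' note that one of them does not: that theorem is stated for \emph{faithful} normal states, whereas the lemma here allows arbitrary normal states with $s(\rho)\not\perp s(\sigma)$. The paper disposes of this in a one-line remark (``it is easy to see that this assumption is not needed in the proof''), but your write-up would need to say something to this effect, since a literal citation does not cover the non-faithful case. Second, your closing claim that the non-affineness hypothesis is ``a genuine one rather than a technical artifact'' is contradicted by the paper itself: Lemma \ref{L-C.5} proves that \eqref{F-C.2} holds also when $\psi$ is affine, by identifying $\nu_{\rho,\sigma}$ as a point mass and using the lower bound \eqref{F-C.4} directly. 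So the hypothesis is an artifact of the JOPS proof technique, not an intrinsic obstruction; this does not affect the correctness of your proof of the lemma as stated, but the heuristic you offer for why the assumption must appear is misleading.
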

\begin{rem}
The equality in \eqref{F-C.2} was proved in \cite[Theorem 6.5]{JOPS} under the assumption that 
$\rho,\sigma$ are faithful; however, it is easy to see that this assumption 
is not needed in the proof. 
\end{rem}

Keys to prove \eqref{F-C.2} are the following inequalities for $\rho,\sigma\in\M_*^+$
\cite[Theorem 6.1]{JOPS}:
\begin{align}
\chi(\rho\|\sigma)
&\le Q_\alpha(\rho\|\sigma)=\int_{(0,+\infty)}t^\alpha\,d\nu_{\rho,\sigma}(t),
\qquad\alpha\in[0,1], \label{F-C.3}\\
\chi(\rho\|\sigma)
&\ge\langle \xi_\sigma,\Delta_{\rho,\sigma}(1+\Delta_{\rho,\sigma})^{-1}\xi_\sigma\rangle
=\int_{(0,+\infty)}{t\over1+t}\,d\nu_{\rho,\sigma}(t). \label{F-C.4}
\end{align}
In Lemma \ref{L-C.5} below we extend \eqref{F-C.2} to the form \eqref{Nagaoka bound} with removing
the assumption $\psi'(0^+)<\psi'(1^-)$. To do so, we give two more lemmas.

\begin{lemma}\label{L-C.3}
For every $\rho,\sigma\in\M_*^+$ the following conditions are equivalent:
\begin{enumerate}
\item \label{affine1}
$\psi$ is affine on $[0,1]$, i.e., $\psi'(0^+)=\psi'(1^-)$;
\item \label{affine2}
$\nu_{\rho,\sigma}$ is supported at a single point $\kappa\in(0,+\infty)$.
\end{enumerate}
In this case, $\psi(\alpha)=\alpha\log\kappa+\log\sigma(s(\rho))$ for $\alpha\in[0,1]$, and
$\nu_{\rho,\sigma}=\sigma(s(\rho))\delta_\kappa$.
\end{lemma}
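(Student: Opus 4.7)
My plan is to reduce the equivalence to the equality case of the H\"older inequality \eqref{F-C.1}, which is already the key ingredient behind the convexity of $\psi$ in Lemma \ref{L-C.1}\ref{Leg1}. The direction \ref{affine2}$\Rightarrow$\ref{affine1}, together with the two concluding formulas, is a direct substitution: if $\nu_{\rho,\sigma}=c\,\delta_\kappa$ with $c>0$ and $\kappa\in(0,+\infty)$, then $Q_\alpha(\rho\|\sigma)=c\kappa^\alpha$, so $\psi(\alpha)=\log c+\alpha\log\kappa$ is affine. The constant $c$ is pinned down by $c=\nu_{\rho,\sigma}((0,+\infty))=Q_0(\rho\|\sigma)=\|s(\Delta_{\rho,\sigma})\xi_\sigma\|^2=\sigma(s(\rho))$, which is exactly the identity already recorded in the proof of Lemma \ref{L-C.1}\ref{Leg1}. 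This simultaneously delivers the concluding formulas $\psi(\alpha)=\alpha\log\kappa+\log\sigma(s(\rho))$ and $\nu_{\rho,\sigma}=\sigma(s(\rho))\,\delta_\kappa$.

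For the nontrivial direction \ref{affine1}$\Rightarrow$\ref{affine2}, I would specialize \eqref{F-C.1} to $\alpha_1=0$ and $\alpha_2=1$: for every $\lambda\in(0,1)$,
\[
\int_{(0,+\infty)}t^{1-\lambda}\,d\nu_{\rho,\sigma}(t)
\le\biggl(\int_{(0,+\infty)}1\,d\nu_{\rho,\sigma}(t)\biggr)^{\lambda}
\biggl(\int_{(0,+\infty)}t\,d\nu_{\rho,\sigma}(t)\biggr)^{1-\lambda}.
\]
Affineness of $\psi$ on $[0,1]$ forces equality here for every such $\lambda$. By the standard equality condition for H\"older's inequality, applied with $f(t)=1$, $g(t)=t$ and conjugate exponents $p=1/\lambda$, $q=1/(1-\lambda)$, the functions $1$ and $t$ must be linearly dependent $\nu_{\rho,\sigma}$-almost everywhere; equivalently, $t$ is $\nu_{\rho,\sigma}$-a.e.~equal to some constant $\kappa$. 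Since $\nu_{\rho,\sigma}$ is supported on $(0,+\infty)$ and is nonzero (because $\nu_{\rho,\sigma}((0,+\infty))=\sigma(s(\rho))>0$ under the standing assumption $s(\rho)\not\perp s(\sigma)$), while $\int t\,d\nu_{\rho,\sigma}=Q_1(\rho\|\sigma)=\rho(s(\sigma))\le1$ is finite, we must have $\kappa\in(0,+\infty)$ and $\nu_{\rho,\sigma}$ concentrated at $\kappa$.

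The only subtle point I anticipate is verifying that the degenerate possibilities in the equality case of H\"older's inequality (namely, $\nu_{\rho,\sigma}$ being the zero measure, or the proportionality constant being $0$ or $+\infty$) are ruled out. All of these are precluded by $s(\rho)\not\perp s(\sigma)$ together with the finiteness of $Q_1(\rho\|\sigma)\le1$. Otherwise the argument uses nothing beyond what has already been developed in Lemma \ref{L-C.1}.
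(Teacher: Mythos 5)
Your proof is correct and takes essentially the same approach as the paper's: the direction \ref{affine2}$\Rightarrow$\ref{affine1} and the concluding formulas by direct substitution using $\nu_{\rho,\sigma}((0,+\infty))=Q_0(\rho\|\sigma)=\sigma(s(\rho))$, and the direction \ref{affine1}$\Rightarrow$\ref{affine2} via the equality case of the H\"older inequality \eqref{F-C.1}. You merely make explicit the choice $\alpha_1=0$, $\alpha_2=1$ and the exclusion of the degenerate cases (which the paper compresses into ``one can easily see''), so the two arguments coincide in substance.
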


\begin{proof}
Assume \ref{affine1}; then inequality \eqref{F-C.1} must always be an equality. From the equality case
of the H\"older inequality one can easily see that $\nu_{\rho,\sigma}$ is supported at a single point.
Conversely, assume \ref{affine2} so that $\nu_{\rho,\sigma}=\sigma(s(\rho))\delta_\kappa$ with some
$\kappa\in(0,+\infty)$, since $\nu_{\rho,\sigma}((0,+\infty))=\sigma(s(\rho))$. Then
\[
\psi(\alpha)=\log[\kappa^\alpha \sigma(s(\rho))]=\alpha\log\kappa+\log\sigma(s(\rho)),
\qquad\alpha\in[0,1],
\]
and hence \ref{affine1} follows.
\end{proof}
We give some further characterizations of 
$\psi$ being affine
in Remark \ref{R-C.4} below.
\medskip

The equality in \eqref{F-C.5} below is an extension of \eqref{F-C.2}, and it has been
shown in the proof of \cite[Theorem 6.6]{JOPS} for faithful normal states and
$b\in(\psi'(0^+),\psi'(1^-))$. We show that it holds for arbitrary pairs of normal states and $b$ values.

\begin{lemma}\label{L-C.5}
For every normal states $\rho,\sigma\in\M_*^+$ and any $b\in\bR$,
\begin{align}\label{F-C.5}
\lim_{n\to+\infty}{1\over n}\log\chi(\rho^{\otimes n}\|e^{nb}\sigma^{\otimes n})=-\vfi(b).
\end{align}
\end{lemma}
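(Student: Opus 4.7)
The plan is to prove $\limsup_n\frac1n\log\chi(\rho^{\otimes n}\|e^{nb}\sigma^{\otimes n})\le-\vfi(b)$ and the matching $\liminf\ge-\vfi(b)$ separately, both by reducing to the $b=0$ setup via the elementary scaling $\Delta_{\rho,c\sigma}=c^{-1}\Delta_{\rho,\sigma}$, $\xi_{c\sigma}=\sqrt c\,\xi_\sigma$ ($c>0$). Combined with \eqref{Q-tensor}, this yields, for $c=e^{nb}$,
\[
Q_\alpha(\rho^{\otimes n}\|e^{nb}\sigma^{\otimes n})=\exp\!\bigl(n[b(1-\alpha)+\psi(\alpha)]\bigr),
\qquad
\int f(t)\,d\nu_{\rho^{\otimes n},e^{nb}\sigma^{\otimes n}}(t)=e^{nb}\int f(s/e^{nb})\,d\mu_n(s),
\]
where $\mu_n:=\nu_{\rho^{\otimes n},\sigma^{\otimes n}}$. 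These two identities are the main technical tools.

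For the upper bound I would apply \eqref{F-C.3} to the pair $(\rho^{\otimes n},e^{nb}\sigma^{\otimes n})$ (noting that \eqref{F-C.3} holds for arbitrary normal positive functionals), obtaining $\chi\le e^{n[b(1-\alpha)+\psi(\alpha)]}$ for every $\alpha\in[0,1]$, and minimizing over $\alpha$ using the definition \eqref{phi def}. For the lower bound, \eqref{F-C.4} applied to the same pair gives
\[
\chi(\rho^{\otimes n}\|e^{nb}\sigma^{\otimes n})\ge e^{nb}\int_{(0,+\infty)}\frac{s}{e^{nb}+s}\,d\mu_n(s).
\]
The key observation is that the identification $\Delta_{\rho^{\otimes n},\sigma^{\otimes n}}\cong\Delta_{\rho,\sigma}^{\otimes n}$ with reference vector $\xi_\sigma^{\otimes n}$ (already used in the proof of Lemma \ref{L-C.2}) implies, by the spectral theorem, that the normalized probability measure $\hat\mu_n:=e^{-n\psi(0)}\mu_n$ is, in logarithmic scale, the $n$-fold convolution of $\hat\mu_1$. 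Since $\hat\mu_1$ has log-MGF $\psi_0(\alpha):=\psi(\alpha)-\psi(0)$ finite on $[0,1]$, Cramer's LDP applies to i.i.d.\ log-samples from $\hat\mu_1$ and gives $\liminf_n\frac1n\log\mu_n(\{s\ge e^{nb}\})\ge-\vfi_+(b)$; combined with the elementary bound $\frac{s}{e^{nb}+s}\ge\tfrac12\mathbf{1}_{\{s\ge e^{nb}\}}$ (and the dual bound $\frac{s}{e^{nb}+s}\ge\frac{s}{2e^{nb}}\mathbf{1}_{\{s\le e^{nb}\}}$ needed in the regime $b\ge\psi'(1^-)$), this yields $\chi\ge\tfrac12\,e^{-n\vfi(b)+o(n)}$.

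The main obstacle is the degenerate case when $\psi$ is affine on $[0,1]$, where Lemma \ref{L-C.3} forces $\mu_n=\sigma(s(\rho))^n\delta_{\kappa^n}$ and Cramer's rate function becomes trivial. I would handle this case by direct computation: the bound from \eqref{F-C.4} reduces to $\chi\ge\sigma(s(\rho))^n\,e^{nb}\kappa^n/(e^{nb}+\kappa^n)$, whose asymptotic rate evaluates to $\psi(0)+b$ when $b<\log\kappa$ and to $\psi(1)=\psi(0)+\log\kappa$ when $b>\log\kappa$, both matching $-\vfi(b)$ after using Lemma \ref{L-C.3}. A secondary technical point is the rigorous justification of the convolution structure of $\mu_n$ in the von Neumann algebra setting, but this follows from the same tensor-product formula for Connes' cocycle derivative already invoked in the proof of Lemma \ref{L-C.2}.
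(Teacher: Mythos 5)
Your proposal follows the same skeleton as the paper's proof: the upper bound comes from \eqref{F-C.3} applied to the pair $(\rho^{\otimes n},e^{nb}\sigma^{\otimes n})$ followed by minimization over $\alpha\in[0,1]$ as in \eqref{phi def}; the lower bound comes from \eqref{F-C.4}; the degenerate case where $\psi$ is affine is handled by the direct computation with $\nu_{\rho,\sigma}=\sigma(s(\rho))\delta_\kappa$ from Lemma \ref{L-C.3}; and general $b$ is reduced to $b=0$ by replacing $\sigma$ with the non-normalized positive functional $e^b\sigma$ and confirming $\vfi(0|\rho\|e^b\sigma)=\vfi(b)$. The one genuine difference is in the non-degenerate lower bound: the paper simply invokes the Chernoff bound theorem of \cite{JOPS} (Lemma \ref{lemma:JOPS-Chernoff}), after observing that its proof works for non-normalized functionals, whereas you re-derive it from Cram\'er's theorem using the multiplicative convolution structure of the spectral measures (which, as you note, is justified by the same cocycle computation as in Lemma \ref{L-C.2}). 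Your route is self-contained but incurs the LDP bookkeeping that the citation hides: the blanket claim $\liminf_n\frac1n\log\mu_n(\{s\ge e^{nb}\})\ge-\vfi_+(b)$ is actually false for $b>\psi'(1^-)$, because the full Cram\'er rate function is the Legendre transform over the entire effective domain of the log-moment generating function and can strictly exceed the supremum over $\alpha\in[0,1]$ defining $\vfi_+$; so the case split you mention parenthetically is essential, and the ``dual bound'' for $b\ge\psi'(1^-)$ still requires showing that $\int_{\{s\le e^{nb}\}}s\,d\mu_n(s)\ge\tfrac12e^{n\psi(1)}$ eventually, i.e., a law of large numbers for the tilted probability measures $e^{-n\psi(1)}s\,d\mu_n(s)$ (and symmetrically for $b\le\psi'(0^+)$). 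These are precisely the steps carried out in \cite{JOPS}, so nothing in your plan is wrong, but they are not free. Your treatment of the affine case and of the scaling identities coincides with the paper's, and the computation $\min\{b,\log\kappa\}+\psi(0)=-\vfi(b)$ checks out.
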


\begin{proof}
First, we show that \eqref{F-C.2} holds also when $\psi$ is affine on $[0,1]$. Since the inequality
\[
\limsup_{n\to+\infty}{1\over n}\log\chi(\rho^{\otimes n}\|\sigma^{\otimes n})\le-\vfi(0)
\]
was shown in \cite{JOPS} from \eqref{F-C.3} for general $\rho,\sigma$, we need only to show that
\begin{align}\label{F-C.6}
\liminf_{n\to+\infty}{1\over n}\log\chi(\rho^{\otimes n}\|\sigma^{\otimes n})\ge-\vfi(0).
\end{align}
When $\psi$ is affine, by Lemma \ref{L-C.3} one has
$\nu_{\rho,\sigma}=\sigma(s(\rho))\delta_\kappa$ and
$\psi(\alpha)=\alpha\log\kappa+\log\sigma(s(\rho))$ for $\alpha\in[0,1]$. Therefore,
\begin{align}\label{F-C.7}
-\vfi(0)=\begin{cases}\log[\kappa\sigma(s(\rho))] & \text{if $0<\kappa\le1$}, \\
\log\sigma(s(\rho)) & \text{if $\kappa>1$}.\end{cases}
\end{align}
For each $n\in\bN$, since $\psi(\alpha|\rho^{\otimes n}\|\sigma^{\otimes n})
=\alpha\log\kappa^n+\log\sigma(s(\rho))^n$ by Lemma \ref{L-C.2}, it follows from Lemma \ref{L-C.3}
again that $\nu_{\rho^{\otimes n},\sigma^{\otimes n}}=\sigma(s(\rho))^n\delta_{\kappa^n}$. Moreover,
since $t/(1+t)\ge{1\over2}\min\{t,1\}$ for $t\in(0,+\infty)$, by \eqref{F-C.4} for
$\rho^{\otimes n},\sigma^{\otimes n}$ one has
\[
\chi(\rho^{\otimes n}\|\sigma^{\otimes n})
\ge{1\over2}\int_{(0,+\infty)}\min\{t,1\}\,d\mu_{\rho^{\otimes n},\sigma^{\otimes n}}(t)
={1\over2}\min\{\kappa^n,1\}\sigma(s(\rho))^n.
\]
Therefore,
\begin{align}\label{F-C.8}
\liminf_{n\to+\infty}{1\over n}\log\chi(\rho^{\otimes n}\|\sigma^{\otimes n})
\ge\begin{cases}\log[\kappa\sigma(s(\rho)) & \text{if $0<\kappa\le1$}, \\
\log\sigma(s(\rho)) & \text{if $\kappa>1$}.\end{cases}
\end{align}
Combining \eqref{F-C.7} and \eqref{F-C.8} yields \eqref{F-C.6}.
Thus, by the above and Lemma \ref{lemma:JOPS-Chernoff}, we obtain \eqref{F-C.5} with $b=0$ without 
any assumption on the states $\rho,\sigma\in\M_*^{+}$.

To prove \eqref{F-C.5}, we observe that the proof of \eqref{F-C.2} in \cite{JOPS}, as well as the
discussions in this section so far, works even for general $\rho,\sigma\in\M_*^+$
(without being states). Therefore,
\[
\lim_{n\to\infty}{1\over n}\log\kappa_n(\rho^{\otimes n}\|e^{nb}\sigma^{\otimes n})
=-\vfi(0|\rho\|e^b\sigma),
\]
where we write $\vfi(c|\rho\|\sigma)$ for $\vfi(c)$ given $\rho,\sigma$. It is easy to confirm that
$\vfi(0|\rho\|e^b\sigma)=\vfi(b)$.
\end{proof}

Next we extend Lemma \ref{lemma:Hoeffding} to the von Neumann algebra setting.

\begin{lemma}\label{L-C.6}
Let $\rho,\sigma$ be normal states on $\M$. For any $r\in(0,+\infty)$ and $\alpha\in(0,1)$,
\[
\sigma^{\otimes n}(T_{n,r,\alpha})\le e^{-nr},
\qquad\rho^{\otimes n}(1-T_{n,r,\alpha})
\le e^{-n{\alpha-1\over\alpha}[r-D_\alpha(\rho\|\sigma)]},
\]
where $T_{n,r,\alpha}$ is the support projection of
$(\rho^{\otimes n}-e^{n(r+\psi(\alpha))/\alpha}\sigma^{\otimes n})_+$.

Furthermore, for any $r>D_0(\rho\|\sigma)$, any test sequence $(T_n)_{n\in\bN}$ (i.e.,
$T_n\in\M^{\overline\otimes n}$, $0\le T_n\le1$ for $n\in\bN$) and any strictly increasing
sequence $(n_k)_{k\in\bN}$,
\[
\mbox{if}\quad
\liminf_{k\to+\infty}-{1\over n_k}\log\sigma^{\otimes n_k}(T_{n_k})\ge r
\quad\mbox{then}\quad
\limsup_{k\to+\infty}-{1\over n_k}\log\rho^{\otimes n_k}(1-T_{n_k})\le H_r(\rho\|\sigma).
\]
\end{lemma}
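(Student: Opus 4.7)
The plan is to follow the two-part strategy from the proof of Lemma \ref{lemma:Hoeffding}, with all the necessary ingredients already prepared in this appendix: the inequality \eqref{F-C.3} plays the role of Audenaert's trace inequality, while Lemma \ref{L-C.5} (the Nagaoka-type lower bound extended here to arbitrary parameters $b\in\bR$ and arbitrary pairs of normal states) replaces the finite-dimensional formula \eqref{Nagaoka bound}. The only auxiliary fact not recorded explicitly is the scaling
\[
Q_\alpha(\rho\|c\sigma)=c^{1-\alpha}Q_\alpha(\rho\|\sigma),\qquad c>0,\ \alpha\in[0,1],
\]
which follows from $\xi_{c\sigma}=\sqrt{c}\,\xi_\sigma$, the cocycle identity $(D\rho:D(c\sigma))_t=c^{-it}(D\rho:D\sigma)_t$, and analytic continuation in $t$ across the strip $0\le\mathrm{Re}\,z\le1$, exactly as in the proof of Lemma \ref{L-C.2}.

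\emph{Achievability.} Fix $\alpha\in(0,1)$ and $r>0$, and set $b:=(r+\psi(\alpha))/\alpha$. Since $T_{n,r,\alpha}$ is the support projection of $(\rho^{\otimes n}-e^{nb}\sigma^{\otimes n})_+$, it attains the minimum defining $\chi$, so
\[
\rho^{\otimes n}(1-T_{n,r,\alpha})+e^{nb}\sigma^{\otimes n}(T_{n,r,\alpha})=\chi(\rho^{\otimes n}\|e^{nb}\sigma^{\otimes n}).
\]
Applying \eqref{F-C.3} to the pair $(\rho^{\otimes n},e^{nb}\sigma^{\otimes n})\in(\M^{\overline\otimes n})_*^+$ and combining with Lemma \ref{L-C.2} and the above scaling,
\[
\chi(\rho^{\otimes n}\|e^{nb}\sigma^{\otimes n})\le Q_\alpha(\rho^{\otimes n}\|e^{nb}\sigma^{\otimes n})=\exp\!\bigl(n[(1-\alpha)r+\psi(\alpha)]/\alpha\bigr).
\]
Since both summands on the LHS are nonnegative, each is individually bounded by the right-hand side. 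Substituting the value of $b$ and using $\frac{1-\alpha}{\alpha}D_\alpha(\rho\|\sigma)=-\psi(\alpha)/\alpha$ yields both inequalities $\sigma^{\otimes n}(T_{n,r,\alpha})\le e^{-nr}$ and $\rho^{\otimes n}(1-T_{n,r,\alpha})\le\exp\!\bigl(-n\tfrac{\alpha-1}{\alpha}[r-D_\alpha(\rho\|\sigma)]\bigr)$.

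\emph{Converse.} Assume $r>D_0(\rho\|\sigma)=-\psi(0)$, and let $c_r$ be as in Lemma \ref{L-C.1}\ref{Leg3}. For any $b\in(\psi'(0^+),c_r)$, Lemma \ref{L-C.5} gives
\begin{align*}
-\vfi(b)&\le\liminf_{k\to+\infty}\tfrac{1}{n_k}\log\!\bigl\{\rho^{\otimes n_k}(1-T_{n_k})+e^{n_k b}\sigma^{\otimes n_k}(T_{n_k})\bigr\}\\
&\le\max\Bigl\{\liminf_{k\to+\infty}\tfrac{1}{n_k}\log\rho^{\otimes n_k}(1-T_{n_k}),\ b+\limsup_{k\to+\infty}\tfrac{1}{n_k}\log\sigma^{\otimes n_k}(T_{n_k})\Bigr\},
\end{align*}
the last step being the same elementary inequality about sequences of nonnegative reals used in the proof of Lemma \ref{lemma:Hoeffding}. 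The hypothesis on $(T_{n_k})$ bounds the second argument of the maximum by $b-r$, while Lemma \ref{L-C.1}\ref{Leg2} together with $\vfi_+(b)<\vfi_+(c_r)=r$ yields $-\vfi(b)=b-\vfi_+(b)>b-r$. Hence the maximum is realised by the first argument, giving $\limsup_k-\tfrac{1}{n_k}\log\rho^{\otimes n_k}(1-T_{n_k})\le\vfi(b)$. Letting $b\nearrow c_r$ and using the continuity of $\vfi$ (from Lemma \ref{L-C.1}\ref{Leg1}) together with $\vfi(c_r)=H_r(\rho\|\sigma)$ finishes the argument.

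The hard part is essentially already done: Lemma \ref{L-C.5} (stripped of the faithfulness and non-affineness hypotheses present in \cite{JOPS}) is what drives the converse, and \eqref{F-C.3} drives achievability. Beyond this, the proof is a transcription of the finite-dimensional argument; the only routine check is the scaling $Q_\alpha(\rho\|c\sigma)=c^{1-\alpha}Q_\alpha(\rho\|\sigma)$, which is immediate from Connes' cocycle relation as used already in Lemma \ref{L-C.2}.
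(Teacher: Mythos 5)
Your proof is correct and follows essentially the same route as the paper: the first part applies \eqref{F-C.3} to the pair $(\rho^{\otimes n},e^{nb}\sigma^{\otimes n})$ with $b=(r+\psi(\alpha))/\alpha$ together with the multiplicativity and scaling of $Q_\alpha$, and the second part transcribes the converse argument of Lemma \ref{lemma:Hoeffding} using Lemmas \ref{L-C.5} and \ref{L-C.1}, exactly as the paper indicates. Your explicit verification of $Q_\alpha(\rho\|c\sigma)=c^{1-\alpha}Q_\alpha(\rho\|\sigma)$ via the cocycle relation is a detail the paper uses silently, but it does not change the approach.
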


\begin{proof}
For every $r\in(0,+\infty)$ and $\alpha\in(0,1)$ let $b:=(r+\psi(\alpha))/\alpha$. Using \eqref{F-C.3}
for $\rho^{\otimes n}$ and $e^{nb}\sigma^{\otimes n}$ implies that
\[
Q_\alpha(\rho^{\otimes n}\|e^{nb}\sigma^{\otimes n})
\ge\chi(\rho^{\otimes n}\|e^{nb}\sigma^{\otimes n})
=\rho^{\otimes n}(1-T_{n,r,\alpha})+e^{nb}\sigma^{\otimes n}(T_{n,r,\alpha}).
\]
Since
\[
Q_\alpha(\rho^{\otimes n}\|e^{nb}\sigma^{\otimes n})
=e^{nb(1-\alpha)}Q_\alpha(\rho^{\otimes n}\|\sigma^{\otimes n})
=e^{n(b(1-\alpha)+\psi(\alpha))},
\]
one has
\begin{align*}
\sigma^{\otimes n}(T_{n,r,\alpha})&\le e^{n(-b\alpha+\psi(\alpha))}=e^{-nr}, \\
\rho^{\otimes n}(1-T_{n,r,\alpha})&\le e^{n(b(1-\alpha)+\psi(\alpha))}
=e^{n[r(1-\alpha)+\psi(\alpha)]/\alpha}
=e^{-n{\alpha-1\over\alpha}[r-D_\alpha(\rho\|\sigma)]}.
\end{align*}

The proof of the latter assertion is the same as that of Lemma \ref{lemma:Hoeffding}, by using
Lemmas \ref{L-C.5} and \ref{L-C.1}.
\end{proof}

We are now in a position to present the main result in this appendix.

\begin{thm}
For any normal states $\rho,\sigma$ on $\M$ and any $\alpha\in(0,1)$, the expressions in Theorem
\ref{thm:testdiv} hold.
\end{thm}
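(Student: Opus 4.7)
The plan is to mirror the proof of Theorem \ref{thm:testdiv} verbatim, substituting every tool from the finite-dimensional setting with its von Neumann algebra counterpart furnished by Lemmas \ref{L-C.1}--\ref{L-C.6}. After disposing of the trivial case $\rho=\sigma$ (where all three expressions are $0$), I would split the argument into the lower-bound direction (following \eqref{testdiv proof1}--\eqref{testdiv att}) and the upper-bound direction (following \eqref{asymptotic proof2}--\eqref{asymptotic proof3}).

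For the lower bound, fix $r>0$, set $H_r:=H_r(\rho\|\sigma)$, and choose $H\in(0,H_r)$ when $r<D(\rho\|\sigma)$ (so $H_r>0$, which still characterizes this regime via the same convexity argument as in \eqref{Hr null} together with Lemma \ref{L-C.1}\,\ref{Leg4}) and $H:=0$ otherwise. The attainability part of Lemma \ref{L-C.6} produces a test sequence $T_n\in\M^{\overline\otimes n}$ with $\sigma^{\otimes n}(T_n)\le e^{-nr}$ and $\rho^{\otimes n}(1-T_n)\le e^{-nH}$. The computation that leads from these two bounds through $\tfrac{1}{\alpha-1}\log\bigl(e^{-nr(1-\alpha)}+e^{-nH\alpha}\bigr)\ge \min\{r,\tfrac{\alpha}{1-\alpha}H\}-\tfrac{\log 2}{n(1-\alpha)}$ involves only real numbers (the values $\rho^{\otimes n}(T_n)$, $\sigma^{\otimes n}(T_n)$, etc.) and therefore transfers verbatim from \eqref{testdiv proof1}. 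Optimizing over $r$ and $H$, and using the trivial lower bound $\oll{D}_\alpha^\test(\rho\|\sigma)\ge D_0(\rho\|\sigma)$ (attained by the test sequence $T_n=s(\rho)^{\otimes n}$, since the latter is in $\M^{\overline\otimes n}$ and $\sigma^{\otimes n}(s(\rho)^{\otimes n})=\sigma(s(\rho))^n$, $\rho^{\otimes n}(1-s(\rho)^{\otimes n})=0$), delivers the $\ge$ direction.

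For the upper bound, pick a maximizing test sequence $(T_n)$ and a strictly increasing sequence $(n_k)$ realizing $\oll{D}_\alpha^\test(\rho\|\sigma)$. The inequality $x^\alpha y^{1-\alpha}+u^\alpha v^{1-\alpha}\ge\min\{y,v\}^{1-\alpha}(x^\alpha+u^\alpha)\ge\min\{y,v\}^{1-\alpha}$, applied with $x=\rho^{\otimes n}(T_n)$, $y=\sigma^{\otimes n}(T_n)$, $u=\rho^{\otimes n}(1-T_n)$, $v=\sigma^{\otimes n}(1-T_n)$, together with a symmetrization that sets $\tilde T_n\in\{T_n,1-T_n\}$ to the choice with the smaller $\sigma^{\otimes n}$-value, yields $\oll{D}_\alpha^\test(\rho\|\sigma)\le \liminf_k-\tfrac{1}{n_k}\log\sigma^{\otimes n_k}(\tilde T_{n_k})=:r$. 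If $r=D_0(\rho\|\sigma)$ the already-established lower bound forces equality; otherwise $r>D_0(\rho\|\sigma)$ and the converse part of Lemma \ref{L-C.6} gives $\limsup_k-\tfrac{1}{n_k}\log\rho^{\otimes n_k}(1-\tilde T_{n_k})\le H_r$. Combining this with the elementary bound $D_\alpha(\tilde\T_{n_k}(\rho^{\otimes n_k})\|\tilde\T_{n_k}(\sigma^{\otimes n_k}))\le \tfrac{\alpha}{\alpha-1}\log\rho^{\otimes n_k}(1-\tilde T_{n_k})-\log\sigma^{\otimes n_k}(1-\tilde T_{n_k})$, valid because $\sigma^{\otimes n_k}(1-\tilde T_{n_k})\to 1$ by construction, then yields $\oll{D}_\alpha^\test(\rho\|\sigma)\le\tfrac{\alpha}{1-\alpha}H_r$, completing the proof.

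The hard conceptual work has already been carried out in the preparatory lemmas: Lemma \ref{L-C.5} establishes the Nagaoka-type identity \eqref{Nagaoka bound} for arbitrary pairs of normal states on an arbitrary von Neumann algebra (the delicate point was the separate treatment, via Lemma \ref{L-C.3}, of the degenerate case in which the spectral measure $\nu_{\rho,\sigma}$ is supported at a single point and the bound from \cite{JOPS} is not directly applicable), and Lemma \ref{L-C.6} packages this into the Hoeffding-type one-shot and converse bounds. With these in hand, no genuinely new analytical step is required for the present theorem — it is a direct transcription of the proof of Theorem \ref{thm:testdiv} with $\Tr(\rho\,\cdot\,)$ replaced by $\rho(\cdot)$ and $I$ replaced by the unit of $\M^{\overline\otimes n}$.
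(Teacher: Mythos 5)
Your proposal is correct and follows exactly the paper's own route: the paper's proof of this theorem consists precisely of the observation that, once Lemmas \ref{L-C.1} and \ref{L-C.6} are in place, the argument for Theorem \ref{thm:testdiv} transfers verbatim with $\Tr\rho^{\otimes n}T_n$, $\Tr\sigma^{\otimes n}T_n$ replaced by $\rho^{\otimes n}(T_n)$, $\sigma^{\otimes n}(T_n)$. Your write-up simply spells out that transcription (including the minor point that the analogue of \eqref{Hr null} follows from convexity of $\psi$ as in the finite-dimensional case), which is consistent with, and no more than, what the paper does.
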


\begin{proof}
Once Lemmas \ref{L-C.1} and \ref{L-C.6} have been shown, the proof of Theorem \ref{thm:testdiv}
remains valid as it is, even in the more general von Neumann algebra setting. Indeed, we may just replace
$\Tr\rho^{\otimes n}T_n$, $\Tr\sigma^{\otimes n}T_n$ with $\rho^{\otimes n}(T_n)$,
$\sigma^{\otimes n}(T_n)$ respectively.
\end{proof}

\begin{rem}
It is quite easy to verify that the proof of \cite[Lemma 4]{Salzmann_Datta21}
works without alteration in the above general von Neumann algebra setting, and 
so does the proof of Proposition \ref{lemma:alt repr} as well, whence we also have 
\eqref{alt repr4}.
\end{rem}

We close this appendix with the following supplement to Lemma \ref{L-C.3}, which might be of independent interest.

\begin{rem}\label{R-C.4}
In the finite-dimensional case, the condition of $\psi$ being affine was explicitly characterized in
\cite[Lemma 3.2]{HMO2} in terms of the density operators $\rho,\sigma$. We can prove that if
$\rho,\sigma\in\M_*^+$ are such that $s(\rho)\le s(\sigma)$, then the following conditions are equivalent:
\begin{enumerate}
\item $\psi(\alpha|\rho\|\sigma)$ is affine on $[0,1]$;
\item $D_0(\rho(1)^{-1}\rho\|\sigma)=D(\rho(1)^{-1}\rho\|\sigma)$;
\item $s(\rho)$ is in the centralizer of $\sigma|_{s(\sigma)\M s(\sigma)}$ (see \cite{Hiai_Lectures2021}
for the definition) and $\rho=\kappa\sigma(s(\rho)\,\cdot)$ for some constant $\kappa>0$.
\end{enumerate}

Note that in the finite-dimensional case, the above conditions are equivalent to that
\eqref{constant Renyi char} holds as stated in Lemma \ref{lemma:Dalpha monotone}, i.e., 
the
characterization in \cite[Lemma 3.2]{HMO2}. A similar condition to the latter is unknown in the von Neumann algebra case.

Here, for readers' convenience we give a sketchy proof of the above equivalence. To see that (i)$\iff$(ii),
it suffices to assume that $\rho(1)=1$. Since $s(\rho)\le s(\sigma)$ means $\psi(1)=0$ in this case,
the equivalence of (i) and (ii) is immediate. To see that (i)$\iff$(iii), we use Haagerup's $L^p$-spaces
$L^p(\M)$ and Connes' cocycle derivative $(D\rho:D\sigma)_t$. Note that the standard form of $\M$
is given as $(\M,L^2(\M),J=\,^*,L^2(\M)_+)$ and $\M_*$ is order-isomorphic to $L^1(\M)$ by a linear
bijection $\omega\in\M\mapsto h_\omega\in L^1(\M)$, so the vector representative of $\sigma$ is
$h_\sigma^{1/2}\in L^2(\M)_+$. Assume (iii); then $h_\rho=\kappa s(\rho)h_\sigma$ and hence by
\cite[(10.9)]{Hiai_Lectures2021},
\[
\Delta_{\rho,\sigma}^{\alpha/2}h_\sigma^{1/2}=h_\rho^{\alpha/2}h_\sigma^{(1-\alpha)/2}
=\kappa^{\alpha/2}s(\rho)h_\sigma^{1/2},\qquad\alpha\in[0,1].
\]
Therefore, $Q_\alpha(\rho\|\sigma)=\kappa^\alpha\sigma(s(\rho))$ for all $\alpha\in[0,1]$ so that
(i) follows. Conversely assume (i). By (ii) of Lemma \ref{L-C.3} one has
$E_{\rho,\sigma}((0,+\infty)\setminus\{\kappa\})h_\sigma^{1/2}=0$ so that by
\cite[(10.19)]{Hiai_Lectures2021},
\[
(D\rho:D\sigma)_th_\sigma^{1/2}=\Delta_{\rho,\sigma}^{it}h_\sigma^{1/2}
=\kappa^{it}s(\rho)h_\sigma^{1/2},\qquad t\in\bR.
\]
Since $h_\sigma^{1/2}$ is separating for $s(\sigma)\M s(\sigma)$, this means that
$(D\rho:D\sigma)_t=\kappa^{it}s(\rho)$, $t\in\bR$, from which (iii) can be verified.
\end{rem}

\end{document}